\documentclass{article}
\pdfpagewidth=8.5in
\pdfpageheight=11in

\usepackage{times}
\usepackage{soul}
\usepackage{url}
\usepackage[hidelinks]{hyperref}
\usepackage[utf8]{inputenc}
\usepackage[small]{caption}
\usepackage{graphicx}
\usepackage{booktabs}
\usepackage{algorithm}
\usepackage{amsthm,amsmath}
\usepackage{mathtools}
\usepackage{microtype}
\usepackage{cleveref}
\usepackage{nimbusmononarrow} 

\theoremstyle{plain}
\newtheorem{theorem}{Theorem}
\newtheorem{corollary}[theorem]{Corollary}
\newtheorem{lemma}[theorem]{Lemma}

\theoremstyle{definition}
\newtheorem{definition}[theorem]{Definition}
\newtheorem{example}[theorem]{Example}
\theoremstyle{remark}

\newtheorem*{theorem-repeat}{Restatement}

\hyphenation{pre-ci-si-fi-ca-tion}
\usepackage{times}
\usepackage{soul}
\usepackage{url}
\usepackage[hidelinks]{hyperref}
\usepackage[utf8]{inputenc}
\usepackage{caption}
\usepackage{graphicx}
\usepackage{amsmath}
\usepackage{booktabs}
\usepackage{algorithm}
\usepackage{algorithmic}
\urlstyle{same}

\usepackage[british]{babel}
\usepackage[usenames,x11names]{xcolor}
\usepackage{relsize,setspace}
\usepackage{stmaryrd}
\usepackage[font=it]{caption}
\usepackage{tikz,pgf}\usetikzlibrary{shapes,backgrounds}

\usepackage{amssymb}
\usepackage{pifont}


\usepackage{cleveref}  

\usepackage{multirow}
\usepackage{comment}
\usepackage{xspace}
\usepackage{pdflscape}
\usepackage{yhmath}

\usepackage[inline]{enumitem}

\setlist{leftmargin=*,noitemsep,parsep=0.2ex,topsep=0.3ex}
\setlength{\jot}{2pt}
\usepackage{scalerel}

\usepackage{microtype}



\usepackage{ifthen}
\newif\iflong
\longfalse

\usepackage{todonotes}

\newcommand{\FormatArithmeticComplexityClass}[1]{\ensuremath{\textsc{#1}}\xspace}




\newcommand{\ExpTime}{\FormatArithmeticComplexityClass{ExpTime}}
\newcommand{\NExpTime}{\FormatArithmeticComplexityClass{NExpTime}}

\newcommand{\pred}[1]{\mbox{\tt #1}}

\newcommand{\spform}[1]{\mathsf{#1}}



\newcommand{\mathcom}[3]{ \newcommand{#1}[#2]{\mbox{$#3$}}}
\mathcom{\imp}{0}{\ \rightarrow\ }            
\mathcom{\rimp}{0}{\ \leftarrow\ }            

\mathcom{\con}{0}{\ \wedge\ }                 
\mathcom{\dis}{0}{\ \vee\ }                   
\mathcom{\n}{0}{\neg}                     
\mathcom{\dimp}{0}{\ \leftrightarrow\ }       
\mathcom{\corresponds}{0}{\ \Lleftarrow\! \! \Rrightarrow\ }
\mathcom{\A}{0}{\forall}                  
\mathcom{\E}{0}{\exists}     

\def\Box{\mathop\square}
\def\Diamond{\mathop\lozenge}

\mathcom{\tuple}{1}{\langle #1 \rangle}

\def\eqdef{\mathrel{\ =_{\mbox{\em \tiny def}}\ }}




\def\PredStandpointLogic{{\mathbb{S}_{\scaleobj{0.85}{\mathrm{FO}}}}}

\def\FOformulas{\PredStandpointLogic}





\def\st{\hbox{$\spform{s}$}\xspace}


\def\pr{\pi}

\def\Precs{\Pi}






\def\standb#1{\Box\nolimits_{\spform{#1}}}
\def\standd#1{\Diamond\nolimits_{\spform{#1}}}

\def\standbx#1{\Box\nolimits_{\scaleobj{0.8}{\spform{#1}}}}
\def\standdx#1{\Diamond\nolimits_{\scaleobj{0.8}{\spform{#1}}}}

\def\standbs{\standb{s}}

\def\standde{\standd{e}}
\def\standds{\standd{s}}

\def\allstandb{\standb{*}}
\def\allstandd{\standd{*}}




\xspace
\def\f\xspacestandtopre{\hbox{$\sigma\,$}\xspace}
\def\fpretov\xspacealue{\hbox{$\delta\,$}\xspace}






\def\ModSat#1||-#2{#1\models #2}
\def\NotModSat#1||-#2{#1\nvDash #2}





\newcommand{\skipit}[1]{} 
\newcommand{\addit}[1]{} 

\newcommand{\define}[1]{\emph{#1}}
\newcommand{\N}{\mathbb{N}}
\newcommand{\set}[1]{\left\{#1\right\}}


\renewcommand{\eqdef}{\mathrel{\,:=\,}}

\newcommand{\ebnfeq}{\mathrel{::=}}

\newcommand{\limplies}{\rightarrow}
\renewcommand{\land}{\mathrel{\wedge}}
\renewcommand{\lor}{\mathrel{\vee}}

\newcommand{\trans}{\mathrm{trans}}

\newcommand{\Sub}{\mathit{Sub}}

\def\Stands{\mathbf{S}}
\def\sts{\spform{s}} 
\def\Preds{\mathbf{P}}

\def\Consts{\mathbf{C}}
\def\Vars{\mathbf{V}}
\def\Terms{\mathbf{T}}
\def\E{\mathbf{E}}
\def\StandExps{\E_{\Stands}}
\def\ste{\spform{e}} 
\def\sigmaE{\sigma_{\E}}
\def\transE{\trans_{\E}}


\def\kstruct{\mathfrak{M}}

\def\struct{\mathcal{I}}
\def\varassign{v}
\def\intf{\cdot^{\struct}}
\def\Dom{\Delta}
\def\de{\delta}
\newcommand{\interpret}[2]{#1^{#2}}
\newcommand{\interprets}[1]{\interpret{#1}{\struct}}

\def\EL{\ensuremath{\mathcal{E\!L}}\xspace}




\def\FOSL{FOSL\xspace}

\def\foss{first-or\-der stand\-point structure\xspace}
\def\fosss{first-or\-der stand\-point structures\xspace}


\newenvironment{narrowalign}{\\[3pt]\mbox{}\hfill}{\hfill\mbox{}\\[4pt]}



\hyphenation{equi-sat-is-fi-able} 


\newcommand{\SROIQ}{\ensuremath{\mathcal{SROIQ}}\xspace}
\newcommand{\SROIQbs}{\ensuremath{\mathcal{SROIQ}b_s}\xspace}

\newcommand{\SHIQ}{$\mathcal{SHIQ}$\xspace}

\newcommand{\ELp}{$\mathcal{EL}{\mathord{+}}$\xspace}




%


\newcommand{\TwoExpTime}{\mbox{\sc{2\ExpTime}}\xspace}
\newcommand{\NTwoExpTime}{\mbox{\sc{N\TwoExpTime}}\xspace}




\newcommand{\Inter}{\mathcal{I}} 

\RequirePackage{graphicx}



\newcommand{\atleast}[1]{\mathord{\geqslant}#1\,}
\newcommand{\atmost}[1]{\mathord{\leqslant}#1\,}

\newcommand{\conc}[1]{#1}
\newcommand{\rol}[1]{#1}

\newcommand{\rolexpR}{\rol{R}}
\newcommand{\rolexpS}{\rol{S}}
\newcommand{\rolR}{\mathtt{R}}
\newcommand{\rolS}{\mathtt{S}}

\newcommand{\rolU}{\mathtt{U}}

\newcommand{\conA}{\mathtt{A}}

\newcommand{\conC}{\conc{C}}
\newcommand{\conD}{\conc{D}}





















%




\def\modelPR{\langle \kstruct \rangle_{\Preds_{\mathtt{E}}}}
\def\rigidp{\mathtt{E}}

\def\num{num}
\def\dom{dom}
\def\stacked{\mathsf{stacked}}

\newcommand{\Ctwo}{\ensuremath{\mathcal{C}^2}}

\newcommand{\defend}{\hfill$\Diamond$}


\usepackage{kr}


\crefname{lemma}{lemma}{lemmas}   
\Crefname{lemma}{Lemma}{Lemmas}
\crefname{figure}{Fig.\!}{}
\Crefname{figure}{Figure}{}

\setlist[description]{leftmargin=1em, itemindent=-0.5em}

\pdfinfo{
	/TemplateVersion (KR.2022.0, KR.2023.0, KR.2024.0)
}


\title{Putting Perspective into OWL [sic]:\\Complexity-Neutral Standpoint Reasoning for 
Ontology Languages\\ 
via Monodic S5 over Counting Two-Variable First-Order Logic 
\\ \textnormal{(Extended Version with Appendix)} 
}


\author{%
	Lucía Gómez Álvarez$^{1}$\and
	Sebastian Rudolph$^{2}$\\
	\affiliations 
        \normalsize 
	$^1$INRIA, Université Grenoble Alpes\\
	$^2$TU Dresden and 
    Center for Scalable Data Analytics and Artificial Intelligence Dresden/Leipzig\\
	\emails 
        \normalsize 
	lucia.gomez-alvarez@inria.fr,
	sebastian.rudolph@tu-dresden.de
}

\begin{document}

\pagenumbering{arabic}
\maketitle

\begin{abstract}

Standpoint extensions of KR formalisms have been recently introduced to incorporate multi-perspective modelling and reasoning capabilities.
In such modal extensions, the integration of conceptual modelling and perspective annotations can be more or less tight, with monodic standpoint extensions striking a good balance as they enable advanced modelling while preserving good reasoning complexities.

We consider the extension of \Ctwo -- the counting two-variable fragment of first-order logic -- by monodic standpoints. At the core of our treatise is a polytime translation of formulae in said formalism into standpoint-free \Ctwo, requiring elaborate model-theoretic arguments. By virtue of this translation, the \NExpTime-complete complexity of checking satisfiability in \Ctwo{} carries over to our formalism. As our formalism subsumes monodic S5 over \Ctwo, our result also significantly advances the state of the art in research on first-order modal logics.

As a practical consequence, the very expressive description logics $\mathcal{SHOIQB}_s$ and $\mathcal{SROIQB}_s$ which subsume the popular W3C-standardized OWL~1 and OWL~2 ontology languages, are shown to allow for monodic standpoint extensions without any increase of standard reasoning complexity.

We prove that \NExpTime-hardness already occurs in much less expressive DLs as long as they feature both nominals and monodic standpoints.
We also show that, with inverses, functionality, and nominals present,  minimally lifting the monodicity restriction leads to undecidability. 
\end{abstract}

\section{Introduction}\label{sec:introduction}



Integrating knowledge from diverse, independently developed sources is a central problem in knowledge representation, particularly given the proliferation of available ontologies and other knowledge sources. Many of these ontologies -- often expressed in W3C-standardized dialects of the Web Ontology Language (OWL) \cite{owl2-overview} -- cover overlapping domains but embody varying conceptual frameworks and modelling choices. For instance, in the biomedical domain, some ontology ($\mathcal{O}_{\mathsf{Process}}$) might define $\pred{Tumour}$ as a dynamic biological process, whereas another ($\mathcal{O}_{\mathsf{Tissue}}$) might view it as an abnormal tissue structure. While the description logics (DLs) \cite{BHLSintroDL,RudolphDLfoundations} underpinning OWL are well-suited to coherently model a domain, they lack mechanisms for managing heterogeneous or conflicting perspectives, leading to notorious challenges in integration.

Standpoint logic (SL) \cite{gomez2021standpoint} is a recently proposed modal logic framework for multi-perspective reasoning and ontology integration. In a similar vein to epistemic logic, propositions with labelled modal operators $\standbs\phi$ and $\standds\phi$ express information relative to the \emph{standpoint} $\st$ and read, respectively: ``according to $\st$, it is \emph{unequivocal/conceivable} that $\phi$''. For instance, the formula $\standbx{Process}[\standdx{Tissue}[\pred{Tumour}]\sqsubseteq {=}1\pred{TriggeredBy.}\pred{Tumour}]$ expresses that, according to the $\mathsf{Process}$ standpoint, it is unequivocal that everything that is conceivably a $\pred{Tumour}$ from the $\mathsf{Tissue}$ standpoint has been triggered by exactly one $\pred{Tumour}$ (process). Similarly, 
$\standbx{Tissue}[\{\pred{patient1}\} \sqsubseteq \exists\pred{HasBodyPart.}(\pred{Tumour}\sqcap\{\pred{t1}\})]$ states that according to the $\mathsf{Tissue}$ standpoint, it is unequivocal that $\pred{patient1}$ has the $\pred{Tumour}$ $\pred{t1}$ as a body part. From both, we infer that according to the $\mathsf{Process}$ standpoint, $\pred{t1}$ was triggered by one $\pred{Tumour}$. Natural reasoning tasks over multi-standpoint specifications include gathering undisputed knowledge, determining knowledge that is relative to certain standpoints, and contrasting the knowledge from different standpoints.

The SL framework has promising applications in ontology integration, particularly in facilitating the interoperability of ontologies developed in isolation. For this reason, recent work has explored how it can be combined with logic-based formalisms underpinning the OWL family -- most notably with the DLs \EL \cite{monodicEL}, \ELp \cite{monodicELplus} and \SHIQ \cite{monodicSHIQ}. 
It has been shown that monodic extensions\footnote{Monodic extensions of first-order modal logic allow for one free variable in the scope of the modal operator, and for modalised axioms and concept expressions in the case of modal DLs.} of these languages with SL preserve the complexity of the standpoint-free DL, showing that joint reasoning over the integrated combination of possibly many ontologies is not fundamentally harder than reasoning with the ontologies in separation. 

Hitherto, an open question has been whether the same holds for the very expressive side of modelling languages, in particular DLs that would fully cover high-end contemporary ontology languages such as OWL~2~DL. The results obtained so far for such languages only considered \emph{sentential fragments} \cite{sententialFOLandOWL}, which is an easier but much more restricted case where there is no interplay between quantification and modal operators (in DLs, the modal operators would only occur on the axiom-level).

In this paper, we address this open question by considering the extension of \Ctwo -- the counting two-variable fragment of first-order logic, which in fact has already gained some popularity for serving as a logic to embed very expressive DLs into -- by monodic standpoints. After the preliminaries (in \Cref{sec:preliminaries}), we provide, in \Cref{sec:sat-in-monodic-c2}, a polytime translation of formulae in said formalism into plain \Ctwo\!, using elaborate model-theoretic arguments. From this, we establish that the \NExpTime-completeness of checking (finite) satisfiability in \Ctwo{} carries over to \emph{monodic standpoint \Ctwo{}}\!. 
As our formalism subsumes monodic S5 over \Ctwo\!, our result also significantly advances the state of the art in first-order modal logic.

\Cref{sec:application-ontology-languages} exposes how, as a consequence, the very expressive DLs $\mathcal{SHOIQB}_s$ and $\mathcal{SROIQB}_s$ which subsume the OWL~1 and OWL~2 ontology languages, also allow for monodic standpoint extensions without any increase of standard reasoning complexity. Moreover, in \Cref{sec:features-causing-trouble} we prove that \NExpTime-hardness already occurs in much less expressive DLs as long as they feature both nominals and monodic standpoints. Additionally, with inverses, functionality, and nominals present, minimally lifting the monodicity restriction by allowing for one distinguished rigid binary predicate leads to undecidability. Finally, the full proofs for most sketches can be found in the appendix.

\section{Preliminaries}\label{sec:preliminaries}

\subsection{First-Order Standpoint Logic}\label{sec:semantics}
We introduce syntax and semantics of first-order standpoint logic  (\FOSL, see \citeauthor{sententialFOLandOWL} \citeyear{sententialFOLandOWL}).
\begin{definition}\label{def:FOSLsyntax}
	The syntax of any \FOSL formula is based on a set $\Vars$ of \define{variables}, typically denoted with  $x,y,\ldots$, and a \define{signature} $\tuple{\Preds, \Consts, \Stands}$, consisting of \define{predicates} $\Preds$ (each associated with an arity \mbox{$n\in\N$}), \define{constants} $\Consts$ and \define{standpoint symbols} $\Stands$, usually denoted $\sts,\sts'$. In particular, $\Stands$ also contains $*$, the \emph{universal standpoint}. $\Vars$, $\Preds$, $\Consts$, and $\Stands$ are assumed to be pairwise disjoint.
	The set $\Terms$ of \define{terms} contains all constants and variables, that is, \mbox{$\Terms=\Consts\cup\Vars$}.

	The set $\StandExps$ of \define{standpoint expressions} is defined by
	$$\ste_1,\ste_2 \ebnfeq \sts \mid \ste_1\cap\ste_2 \mid \ste_1\cup\ste_2 \mid \ste_1\setminus\ste_2,$$
	with $\sts \in \Stands$. The set $\FOformulas$ of \FOSL \define{formulae} is then given by
	$$ \phi,\psi \ebnfeq \pred{P}(t_1,...\,,t_k) \mid t_1 \dot{=} t_2 \mid \neg\phi \mid \phi\,{\land}\,\psi \mid \exists^{\lhd n}  x.\phi \mid \standde\phi, $$
	where \mbox{$\pred{P}\in\Preds$} is a $k$-ary predicate; \mbox{$t_1,\dots,t_k\in\Terms$} are terms; \mbox{$\lhd$ is any of $\leq$, $=$, or $\geq$}; $n \in \mathbb{N}$; 
	\mbox{$x\in\Vars$}; and \mbox{$\ste\in\StandExps$}. \defend
\end{definition}

For a formula $\phi$, we denote the set of all of its subformulae by $\Sub(\phi)$.
The \define{size} of a formula is \mbox{$|\phi| \eqdef |\Sub(\phi)|$}.
The connectives and operators $\mathbf{t}$, $\mathbf{f}$, \mbox{$\phi\lor\psi$}, \mbox{$\phi\limplies\psi$}, \mbox{$\phi\leftrightarrow\psi$}, \mbox{$\forall x.\phi$}, and \mbox{$\standb{\ste}\phi$} are introduced as syntactic macros as usual -- in particular, $\forall x.\phi$ is used to abbreviate $\exists^{= 0} x.\neg\phi$.
In line with intuition, we may just write $\exists x.\phi$ instead of $\exists^{\geq 1}x.\phi$. We note that in full first-order logic, the somewhat exotic \emph{counting quantifiers} $\exists^{\lhd n}$ do not add extra expressivity compared to the non-counting ones, but they do make a difference when the number of variables is restricted. As this is where we are heading, it is convenient to start from this syntax definition.

\pagebreak
A first-order standpoint logic formula $\phi$ is called 
\begin{itemize}
\item \emph{monodic} if in each of its subformulae of the shape $\standde\psi$, the formula $\psi$ has at most one free variable,
\item \emph{$\mathcal{C}^2$} if it only uses the two variables $x$ and $y$ and predicates of arity ${\leq} 2$,
and 
\emph{plain $\mathcal{C}^2$} if it is $\mathcal{C}^2$ and does not use $\Diamond$,
\item \emph{S5} if the only standpoint expression used is $*$,  
\item \emph{nullary-free} if it does not use predicates of arity zero,
\item \emph{constant-free} if it does not use constants.
\end{itemize}

Moreover, we will call formulae of the form $\allstandd\phi$ \emph{monodic modal formulae} if they have one free variable and \emph{sentential modal formulae} if they have no free variables.


\begin{definition}\label[definition]{def:semantics-model}
	Given a signature $\tuple{\Preds,\Consts,\Stands}$, a \define{(first-order) standpoint structure} $\kstruct$ is a tuple $\tuple{\Dom, \Precs, \sigma, \gamma}$ where:
	\begin{itemize}
		\item $\Dom$ is a non-empty set, the \define{domain} of $\kstruct$;
		\item $\Precs$ is a non-empty set, called \define{precisifications} or \emph{worlds};
		\item $\sigma$ is a function mapping each standpoint symbol from $\Stands$ to a set of worlds (i.e., a subset of $\Precs$), with $\sigma(*)=\Precs$ fixed;
		\item $\gamma$ is a function mapping each precisification from $\Precs$ to an ordinary first-order structure $\struct$ over the domain $\Delta$, whose interpretation function $\intf$ maps\/:
		      \begin{itemize}
			      \item every predicate symbol $\pred{P}{\,\in\,}\Preds$ of arity $k$ to a $k$-ary relation \mbox{$\interprets{\pred{P}} {\,\subseteq\,} \Dom^k$},
			      \item each constant symbol $\mathtt{a}{\,\in\,}\Consts$ to a domain element \mbox{$\interprets{\mathtt{a}}{\in\,}\Dom$}.
		      \end{itemize}
		      For any two $\pr_1,\pr_2\in\Precs$ and every $\mathtt{a}\in\Consts$ we require $\interpret{\mathtt{a}}{\gamma(\pr_1)}=\interpret{\mathtt{a}}{\gamma(\pr_2)}$ (i.e., we enforce rigid constants). \defend
	\end{itemize}
\end{definition}


If in $\kstruct$, some predicate $\pred{P} \in \Preds$ satisfies $\interpret{\mathtt{P}}{\gamma(\pr_1)}=\interpret{\mathtt{P}}{\gamma(\pr_2)}$ for every $\pr_1,\pr_2\in\Precs$, we say that $\pred{P}$ is \emph{rigid} (in $\kstruct$) and allow ourselves to write $\interpret{\mathtt{P}}{\kstruct}$ instead of $\interpret{\mathtt{P}}{\gamma(\pr_1)}$

\begin{definition}\label[definition]{def:semantics}
	Let $\kstruct=\tuple{\Dom,\Precs,\sigma,\gamma}$ be a \foss for the signature $\tuple{\Preds,\Consts,\Stands}$ and $\Vars$ be a set of variables.
	%
	%
	A \define{variable assignment} is a function $\varassign:\Vars\to\Dom$ mapping variables to domain elements.
	Given a variable assignment $v$, we denote by $\varassign_{\set{x\mapsto\de}}$ the function mapping $x$ to $\de\in\Dom$ and any other variable $y$ to $\varassign(y)$.

	An interpretation function $\intf$ and a variable assignment specify how to interpret terms by domain elements\/:
	\iflong
		\begin{gather*}
			\interpret{t}{\struct,\varassign} =
			\begin{cases}
				\varassign(x)  & \text{if } t=x\in\Vars,   \\
				\interprets{\mathtt{a}} & \text{if } t=\mathtt{a}\in\Consts.
			\end{cases}
		\end{gather*}
	\else
		We let $\interpret{t}{\struct,\varassign} = \varassign(x)$ if $t=x\in\Vars$, and $\interpret{t}{\struct,\varassign} = \interprets{a}$ if $t=a\in\Consts$.

	\fi
	\iflong
		To interpret standpoint expressions, we use the obvious homomorphic extension of \mbox{$\sigma:\Stands\to 2^{\Precs}$} to \mbox{$\sigmaE:\StandExps\to 2^{\Precs}$} given as follows\/:
		\begin{align*}
			*    & \mapsto\Precs        & \ste_1\cup\ste_2      & \mapsto \sigmaE(\ste_1)\cup\sigmaE(\ste_2)        \\
			\sts & \mapsto \sigma(\sts) & \ste_1\cap\ste_2      & \mapsto \sigmaE(\ste_1)\cap\sigmaE(\ste_2)        \\
			     &                      & \ste_1\setminus\ste_2 & \mapsto \sigmaE(\ste_1) \setminus \sigmaE(\ste_2)
		\end{align*}
		When no confusion can arise, we denote $\sigmaE$ by $\sigma$.
	\else
		To interpret standpoint expressions, we lift $\sigma$ from $\Stands$ to all of $\StandExps$ via
        $\sigma(\ste_1\bowtie\ste_2) = \sigma(\ste_1)\bowtie\sigma(\ste_2)$ for ${\bowtie} \in \{\cup, \cap, \setminus\}$.
	\fi

    The satisfaction relation for formulae is defined in the usual way via structural induction.
	In what follows, let $\pr\in\Precs$ and let $\varassign:\Vars\to\Dom$ be a variable assignment;
	we now establish the definition of the satisfaction relation $\models$ for FOSL using pointed \fosss:\skipit{\footnote{We implicitly define $\phi$ to be (globally) \emph{satisfiable} iff \mbox{$\kstruct\models\phi$} for some structure $\kstruct$.
			Local satisfiability (satisfiability in some precisification) can be emulated via global satisfiability of $\allstandd\mkern-2mu\phi$.}}%
	\begin{align*}
		 & \kstruct,\!\pr,\!\varassign \models \pred{P}(t_1,\ldots,t_k)\!\! & \text{iff}\ \  & (\interpret{t_1}{\gamma(\pr), \varassign},\ldots,\interpret{t_k}{\gamma(\pr),\varassign}) \in \interpret{\pred{P}}{\gamma(\pr)} \\
		 & \kstruct,\!\pr,\!\varassign \models t_1 \dot{=}\, t_2 & \text{iff}\ \  & \interpret{t_1}{\gamma(\pr), \varassign} = \interpret{t_2}{\gamma(\pr),\varassign} \\
		 & \kstruct,\!\pr,\!\varassign \models \neg\phi                     & \text{iff}\ \  & \kstruct,\!\pr,\!\varassign\not\models\phi                                                                                      \\
		 & \kstruct,\!\pr,\!\varassign \models \phi\land\psi                & \text{iff}\ \  & \kstruct,\!\pr,\!\varassign\models\phi \text{ and } \kstruct,\pr,\varassign\models\psi                                          \\
		 & \kstruct,\!\pr,\!\varassign \models \exists^{\lhd n} x\phi                & \text{iff}\ \  & |\{ \delta \mid \kstruct,\!\pr,\!\varassign_{\set{x\mapsto\de}}\models\phi \} | \lhd n \\
		 & \kstruct,\!\pr,\!\varassign \models \standd{\ste}\phi            & \text{iff}\ \  & \kstruct,\!\pr'\!\!,\varassign\models \phi \text{ for some } \pr'\!\!\in\!\sigma(\ste)                                                  \\
		 & \kstruct,\!\pr \models \phi                                      & \text{iff}\ \  & \kstruct,\!\pr,\!\varassign\models\phi \text{ for all } \varassign:\Vars\to\Dom                                                 \\
		 & \kstruct \models \phi                                            & \text{iff}\ \  & \kstruct,\!\pr\models\phi \text{ for all } \pr\in\Precs
	\end{align*}
As usual, $\kstruct$ is a \define{model} for a formula $\phi$ iff \mbox{$\kstruct\models\phi$}. \defend
\end{definition}


\pagebreak

\begin{lemma}
Let $\phi$ be an $\FOformulas$ sentence and $\kstruct=\tuple{\Dom,\Precs,\sigma,\gamma}$ be a model of $\phi$. Then, for any $n \geq |\Precs|$, there exists a model $\kstruct'=\tuple{\Dom,\Precs',\sigma',\gamma'}$ of $\phi$ with $|\Precs'| = n$.     
\end{lemma}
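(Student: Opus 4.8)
The plan is to prove the statement by \emph{world duplication}: I pad the set of precisifications up to size $n$ by inserting fresh copies of a single existing world, arranged so that each copy is semantically indistinguishable from its original. Concretely, fix any $\pr_0 \in \Precs$ (possible since $\Precs \neq \emptyset$) and let $m = n - |\Precs| \geq 0$. Introduce $m$ fresh worlds $\pr_0^1,\ldots,\pr_0^m$ and set $\Precs' = \Precs \cup \{\pr_0^1,\ldots,\pr_0^m\}$. Define $\gamma'$ to agree with $\gamma$ on $\Precs$ and to send every copy to the \emph{same} first-order structure as the original, i.e.\ $\gamma'(\pr_0^i) = \gamma(\pr_0)$. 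Finally, for every standpoint symbol $\sts \in \Stands$ put $\sigma'(\sts) = \sigma(\sts) \cup \{\pr_0^i \mid \pr_0 \in \sigma(\sts)\}$, so that each copy belongs to exactly the standpoints containing $\pr_0$; this forces $\sigma'(*) = \Precs'$ as required, and rigidity of constants carries over since all copies reuse $\gamma(\pr_0)$.

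The first key step is an auxiliary claim about standpoint expressions, proved by induction on the structure of $\ste \in \StandExps$: (a)~$\sigma'(\ste) \cap \Precs = \sigma(\ste)$, and (b)~$\pr_0^i \in \sigma'(\ste)$ iff $\pr_0 \in \sigma(\ste)$. Both hold for atomic $\sts$ by construction and propagate through $\cup$, $\cap$, and $\setminus$; the only mildly delicate case is set difference, where one checks $(\sigma'(\ste_1)\setminus\sigma'(\ste_2)) \cap \Precs = (\sigma'(\ste_1)\cap\Precs)\setminus(\sigma'(\ste_2)\cap\Precs)$ and the analogous identity for the copies. Intuitively, each copy $\pr_0^i$ tracks $\pr_0$ through every Boolean combination of standpoints, while the original worlds retain their memberships relative to $\Precs$.

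The main step is a structural induction on the subformulae $\psi$ of $\phi$ establishing, for every variable assignment $\varassign$: (i)~$\kstruct',\pr,\varassign \models \psi$ iff $\kstruct,\pr,\varassign \models \psi$ for all $\pr \in \Precs$; and (ii)~$\kstruct',\pr_0^i,\varassign \models \psi$ iff $\kstruct,\pr_0,\varassign \models \psi$. Atomic formulae, equality, the Boolean connectives, and the counting quantifiers are immediate, since $\gamma'$ and $\gamma$ assign identical first-order structures to matching worlds and the domain $\Dom$ is unchanged. The decisive case is $\psi = \standd{\ste}\chi$: using the expression claim together with the induction hypothesis on $\chi$, a witness in $\sigma(\ste)$ lifts to one in $\sigma'(\ste)$, while conversely a witness in $\sigma'(\ste)$ either already lies in $\Precs$ (hence in $\sigma(\ste)$) or is a copy $\pr_0^i$ with $\pr_0 \in \sigma(\ste)$, which then serves equally as a witness in the original model. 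From (i), (ii), and the fact that $\kstruct$ satisfies $\phi$ at every world, it follows that $\kstruct',\pr,\varassign \models \phi$ for all $\pr \in \Precs'$ and all $\varassign$, i.e.\ $\kstruct' \models \phi$, with $|\Precs'| = n$ by construction.

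I expect no genuine obstacle here, as this is a bisimulation-style duplication argument; the only points requiring care are the bookkeeping for compound standpoint expressions under set difference (the expression claim) and confirming that adding copies cannot create spurious modal witnesses in $\kstruct'$ that are absent in $\kstruct$ — both handled by the two-part claims above.
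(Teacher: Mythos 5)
Your proof is correct and follows exactly the route the paper sketches: duplicate one precisification, let the copies inherit its first-order structure and standpoint memberships, and verify by induction on standpoint expressions and formula structure that satisfaction is unaffected. The paper only gives this as a one-line sketch; your write-up supplies the details (including the set-difference case and the modal-witness check) correctly.
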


\begin{proof}[Proof Sketch] We just pick one precisification from $\kstruct$ and add as many isomorphic copies of it to $\kstruct$ as required.
\end{proof}



\subsection{Transformations}\label{sec:transformations}

\newcommand{\SmonCtwo}{\ensuremath{\mathbb{S}^\mathrm{mon}_{\Ctwo}}\xspace}

The results obtained in the first part of this paper concern the fragment of all \FOSL formulae that are monodic and \Ctwo{} -- from here on, we will refer to this logical fragment as \emph{monodic standpoint \Ctwo{}}, short \SmonCtwo. 
For technical reasons, we prefer to focus on formulae that additionally are S5, nullary-free, and constant-free; we will call them \emph{frugal}. This section establishes that any \SmonCtwo formula can be efficiently transformed into an equisatisfiable frugal one.

\begin{theorem}\label{theorem:fosl-equisatisfiable-to-S5}
For any \FOSL formula $\phi$, an equisatisfiable S5 \FOSL formula $\mathsf{S5}(\phi)$ can be computed in polynomial time. The transformation preserves $\mathcal{C}^2$-ness and monodicity. 
\end{theorem}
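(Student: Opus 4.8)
The plan is to eliminate the non-universal standpoints by recording, for each precisification, its standpoint membership in fresh nullary (propositional) predicates, and then to rewrite every modal operator $\standde$ into the universal one $\allstandd$, guarded by a propositional description of the standpoint expression $\ste$. The key observation that makes this sound is that nullary predicates are \emph{not} subject to the rigidity constraint (only constants are forced to be rigid), so their truth values may differ from world to world --- which is exactly what is needed to mimic the sets $\sigma(\sts)$.

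Concretely, for each standpoint symbol $\sts\in\Stands\setminus\{*\}$ occurring in $\phi$ I would introduce a fresh nullary predicate $p_\sts$, and lift this to standpoint expressions by a propositional encoding defined by $\beta_* = \mathbf{t}$, $\beta_\sts = p_\sts$, $\beta_{\ste_1\cap\ste_2} = \beta_{\ste_1}\land\beta_{\ste_2}$, $\beta_{\ste_1\cup\ste_2} = \beta_{\ste_1}\lor\beta_{\ste_2}$, and $\beta_{\ste_1\setminus\ste_2} = \beta_{\ste_1}\land\neg\beta_{\ste_2}$. The translation $\mathsf{S5}(\cdot)$ then commutes with all first-order constructs (atoms, $\dot{=}$, $\neg$, $\land$, and the counting quantifiers $\exists^{\lhd n}$) and sets $\mathsf{S5}(\standde\phi) = \allstandd\bigl(\beta_\ste \land \mathsf{S5}(\phi)\bigr)$. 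Since each $\beta_\ste$ is linear in $|\ste|\le|\phi|$ and each modal subformula is rewritten exactly once, $\mathsf{S5}(\phi)$ is computable in polynomial time.

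For correctness I would establish equisatisfiability in both directions by a back-and-forth construction that keeps $\Dom$, $\Precs$, and the first-order part of $\gamma$ intact. Given a model $\kstruct=\tuple{\Dom,\Precs,\sigma,\gamma}$ of $\phi$, I turn it into an S5 structure by declaring $p_\sts$ true exactly at the worlds in $\sigma(\sts)$ (leaving $\sigma(*)=\Precs$); conversely, from an S5 model of $\mathsf{S5}(\phi)$ I recover $\sigma(\sts)$ as the set of worlds at which $p_\sts$ holds, noting that the fresh predicates do not occur in $\phi$. The core of the argument is an auxiliary induction on standpoint expressions showing that $\beta_\ste$ holds at $\pr$ iff $\pr\in\sigma(\ste)$ --- here the homomorphic match between the set operations $\cap,\cup,\setminus$ and the connectives $\land,\lor$, and $\land\neg$ does the work --- followed by a structural induction on subformulae giving $\kstruct,\pr,\varassign\models\psi$ iff $\kstruct',\pr,\varassign\models\mathsf{S5}(\psi)$; the only nonroutine case is the modal one, which reduces precisely to the auxiliary claim.

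Finally I would verify preservation of the two syntactic restrictions. Preservation of $\Ctwo$-ness is immediate, since each $\beta_\ste$ introduces no variables and only nullary predicates (of arity $0\le 2$), so neither a third variable nor a predicate of arity $>2$ is created. For monodicity, note that $\mathsf{S5}(\cdot)$ leaves the free variables of every subformula unchanged and each $\beta_\ste$ is a sentence; hence whenever $\psi$ has at most one free variable under $\standde$, the guarded body $\beta_\ste\land\mathsf{S5}(\psi)$ under $\allstandd$ still has at most one free variable. I do not expect a genuine obstacle: this is a standard propositional-marker encoding, and the one point that demands care is checking that the $\sigma$ recovered from the nullary predicates yields a legitimate standpoint structure and that the auxiliary induction correctly treats the non-monotone $\setminus$ operator through negation of a non-rigid nullary predicate.
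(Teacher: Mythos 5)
Your proposal is correct and follows essentially the same route as the paper: the paper's $\trans$ likewise introduces a nullary predicate per standpoint symbol, encodes standpoint expressions homomorphically into boolean combinations ($\cup\mapsto\lor$, $\cap\mapsto\land$, $\setminus\mapsto\land\neg$), rewrites $\standd{\ste}\psi$ as $\standd{*}(\transE(\ste)\land\trans(\psi))$, and proves equisatisfiability by the same model transformation (setting the nullary predicate true at $\pr$ iff $\pr\in\sigma(\sts)$) together with the same auxiliary induction on standpoint expressions followed by structural induction on the formula. The only cosmetic difference is that the paper reuses the standpoint symbols themselves as the new nullary predicates rather than introducing fresh names $p_\sts$.
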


\begin{proof}[Proof Sketch]
Let $\phi$ be a 
\FOSL formula based on a signature $\tuple{\Preds, \Consts, \Stands}$. We show that for any formula $\phi$, the formula $\trans(\phi)$, based on the signature $\tuple{\Preds\cup\Stands, \Consts, \{*\}}$ is equisatisfiable and preserves $\mathcal{C}^2$-ness and monodicity. 
The function $\trans$ replaces every $\standd \psi$ by $\allstandd (e \land \psi)$, introducing one nullary predicate for every standpoint symbol and translating set expressions for standpoints into boolean expressions. The function $\trans$ is defined as follows
\begin{align*}
		\trans(\pred{P}(t_1,\ldots,t_k)) & = \pred{P}(t_1,\ldots,t_k)                                                           \\[-2pt]
		\trans(\lnot\psi)                & =\lnot\trans(\psi)                                                                 \\[-2pt]
		\trans(\psi_1{\,\land\,}\psi_2)  & = \trans(\psi_1){\,\land\,}\trans(\psi_2)                                    \\[-2pt]
		\trans(\forall x\psi)            & = \forall x(\trans(\psi))                                                         \\[-2pt]
		\trans(\standd{\ste}\psi)     & = \standd{*}(\transE(\ste) \land \trans(\psi))
	\end{align*}

    Therein, $\transE$ implements the semantics of standpoint expressions, providing for
each expression \mbox{$\ste\in\StandExps$} a propositional formula $\transE(\ste)$ as follows%
\begin{align*}
	\transE(\sts)                  & = \sts                                       \\[-2pt]
	\transE(\ste_1\cup\ste_2)      & = \transE(\ste_1)\lor\transE(\ste_2)      \\[-2pt]
	\transE(\ste_1\cap\ste_2)      & = \transE(\ste_1)\land\transE(\ste_2)     \\[-2pt]
	\transE(\ste_1\setminus\ste_2) & = \transE(\ste_1)\land\neg\transE(\ste_2)
\end{align*}
 
The proof 
shows equisatisfiability by induction. In addition to this, a routine inspection of the translation ensures that it preserves \Ctwo{}-ness and monodicity, it can be done in polynomial time and its output is of polynomial size. 
This translation is similar in spirit to previous ones, for instance \cite{DBLP:conf/kr/KuruczWZ23}.
\end{proof}

\begin{theorem}\label{theorem:trans-nullary-free}
    For any \FOSL formula $\phi$, one can compute an equisatisfiable nullary-free \FOSL formula $\mathsf{NF}(\phi)$ in polynomial time.
    The transformation preserves $\mathcal{C}^2$-ness, S5-ness, and monodicity. 
\end{theorem}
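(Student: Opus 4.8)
The plan is to eliminate nullary predicates by simulating each one with a fresh unary predicate applied existentially. Concretely, for every arity-$0$ predicate $\pred{P}$ occurring in $\phi$ I would introduce a fresh unary predicate $\pred{P}'$ and replace every occurrence of the atom $\pred{P}$ by the sentence $\exists x\,\pred{P}'(x)$. Formally, $\mathsf{NF}$ is defined by structural recursion:
\begin{align*}
\mathsf{NF}(\pred{P}) &= \exists x\,\pred{P}'(x) \quad(\pred{P}\text{ of arity }0),\\
\mathsf{NF}(\pred{P}(t_1,\ldots,t_k)) &= \pred{P}(t_1,\ldots,t_k) \quad(k\geq 1),\\
\mathsf{NF}(t_1\dot{=}t_2) &= t_1\dot{=}t_2,\\
\mathsf{NF}(\neg\psi) &= \neg\,\mathsf{NF}(\psi),\\
\mathsf{NF}(\psi_1\land\psi_2) &= \mathsf{NF}(\psi_1)\land\mathsf{NF}(\psi_2),\\
\mathsf{NF}(\exists^{\lhd n}x\,\psi) &= \exists^{\lhd n}x\,\mathsf{NF}(\psi),\\
\mathsf{NF}(\standd{\ste}\psi) &= \standd{\ste}\,\mathsf{NF}(\psi).
\end{align*}
Since $\mathsf{NF}$ traverses $\phi$ once and enlarges each nullary atom by a constant amount, it runs in linear time and yields a formula of linear size; the output contains no arity-$0$ predicate, so it is nullary-free by construction.

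Next I would check the three syntactic preservation claims, which is where the only real care is needed. $\mathcal{C}^2$-ness holds because the sole variable introduced is $x$ (reused, never a third symbol) and every fresh $\pred{P}'$ is unary, hence of arity ${\leq}2$. S5-ness is immediate, as no standpoint expression is altered. The delicate point is monodicity: the crucial observation is that both the replaced atom $\pred{P}$ and its replacement $\exists x\,\pred{P}'(x)$ are \emph{closed} (have no free variables), and the bound $x$ in $\pred{P}'(x)$ cannot capture any outer variable because its scope is confined to $\pred{P}'(x)$. Consequently the set of free variables of every subformula is unchanged by $\mathsf{NF}$, so for each subformula of the shape $\standd{\ste}\psi$ the number of free variables of its argument is preserved; monodicity therefore transfers from $\phi$ to $\mathsf{NF}(\phi)$.

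Finally, for equisatisfiability I would argue by a back-and-forth model construction supported by a structural-induction lemma. Given a model $\kstruct=\tuple{\Dom,\Precs,\sigma,\gamma}$ of $\phi$, let $\kstruct'=\tuple{\Dom,\Precs,\sigma,\gamma'}$ agree with $\kstruct$ on all predicates of $\phi$ and, for each nullary $\pred{P}$ and world $\pr$, set $\interpret{\pred{P}'}{\gamma'(\pr)}=\Dom$ if $\kstruct,\pr\models\pred{P}$ and $\interpret{\pred{P}'}{\gamma'(\pr)}=\emptyset$ otherwise; since $\Dom\neq\emptyset$ this makes $\kstruct',\pr\models\exists x\,\pred{P}'(x)$ exactly when $\kstruct,\pr\models\pred{P}$. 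Conversely, from a model $\kstruct'$ of $\mathsf{NF}(\phi)$ one recovers a model $\kstruct$ of $\phi$ by declaring each nullary $\pred{P}$ true at precisely those worlds $\pr$ with $\interpret{\pred{P}'}{\gamma'(\pr)}\neq\emptyset$. In either direction the heart of the argument is the claim that $\kstruct,\pr,\varassign\models\psi$ iff $\kstruct',\pr,\varassign\models\mathsf{NF}(\psi)$ for every subformula $\psi$ of $\phi$, every world $\pr$, and every assignment $\varassign$; this follows by induction on $\psi$, the only non-identity base case being the nullary atom (which holds by the construction above) and the inductive steps going through unchanged because $\mathsf{NF}$ commutes with negation, conjunction, the counting quantifiers, and the modalities. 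I expect this induction and the monodicity bookkeeping to be the only places demanding attention; everything else is routine.
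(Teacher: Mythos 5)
Your proof is correct and takes essentially the same approach as the paper: the paper replaces each nullary $\pred{N}$ by $\forall x.\pred{P}_{\mathtt{N}}(x)$ where you use $\exists x.\pred{P}'(x)$, an immaterial difference since the domain is nonempty and in either case one only needs to read off whether the fresh unary predicate is full or empty (resp.\ nonempty or empty) at each precisification. Your additional bookkeeping on free variables and the model-transfer induction is exactly the routine detail the paper's sketch leaves implicit.
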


\begin{proof}[Proof Sketch]
For any nullary predicate $\pred{N}$ occurring in $\phi$, introduce a fresh unary predicate $\pred{P}_\mathtt{N}$ and replace any occurrence of $\pred{N}$ inside $\phi$ by $\forall x.(\pred{P}_\mathtt{N}(x))$. 
\end{proof}

\begin{theorem}\label{theorem:trans-constant-free}
    For any $\mathcal{C}^2$ \FOSL formula $\phi$, one can compute in polynomial time an equisatisfiable constant-free $\mathcal{C}^2$ \FOSL formula $\mathsf{CF}(\phi)$.
    If $\phi$ is S5 and nullary-free, then so is $\mathsf{CF}(\phi)$. 
\end{theorem}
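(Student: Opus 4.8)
The plan is to eliminate each constant $a \in \Consts$ occurring in $\phi$ by a fresh unary ``nominal'' predicate $\pred{P}_\mathtt{a}$ that is forced to denote a single, rigidly fixed domain element, and to rewrite every atom mentioning a constant into a constant-free gadget over these predicates. Concretely, for each such $a$ I conjoin to the rewritten formula the two axioms
\[
\allstandb\,\exists^{=1}x.\pred{P}_\mathtt{a}(x) \qquad\text{and}\qquad \allstandb\,\forall x.\big(\pred{P}_\mathtt{a}(x) \limplies \allstandb\,\pred{P}_\mathtt{a}(x)\big),
\]
where $\allstandb$ is the universal modality over the standpoint $*$, which is always available. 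The first axiom makes $\pred{P}_\mathtt{a}$ a singleton in every world, while the second forces that singleton to be the same across all worlds, thereby faithfully simulating a rigid constant. Since both axioms use only the standpoint $*$ and only unary predicates, they keep the formula S5 and nullary-free whenever $\phi$ is (and, although not required here, they are monodic).

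The rewriting $\widehat{(\cdot)}$ descends through the formula homomorphically and acts only on atoms that mention a constant. The key observation that keeps us inside $\mathcal{C}^2$ is that any such atom uses at most one of the variables $x,y$: a binary atom with a constant in one slot has at most one variable slot, and unary or ground atoms use none. Hence at least one variable is always free to be rebound. I replace, e.g., $\pred{Q}(a)$ by $\exists x.(\pred{P}_\mathtt{a}(x) \land \pred{Q}(x))$, the atom $\pred{R}(x,a)$ by $\exists y.(\pred{P}_\mathtt{a}(y) \land \pred{R}(x,y))$ (and symmetrically for the shapes $\pred{R}(a,x)$, $\pred{R}(y,a)$, $\pred{R}(a,y)$), the ground atom $\pred{R}(a,b)$ by $\exists x.(\pred{P}_\mathtt{a}(x) \land \exists y.(\pred{P}_\mathtt{b}(y) \land \pred{R}(x,y)))$, the equalities $x\dot=a$ and $a\dot=x$ by $\pred{P}_\mathtt{a}(x)$, the equality $a\dot=b$ by $\exists x.(\pred{P}_\mathtt{a}(x) \land \pred{P}_\mathtt{b}(x))$, and $a\dot=a$ by $\mathbf{t}$; the remaining shapes are analogous. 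Because the rewrite of an atom binds only a variable that does not occur free in it, the rewritten subformula has exactly the same free variables as the atom, so $\mathcal{C}^2$-ness (and indeed monodicity) is preserved; the correctness of each clause rests on $\pred{P}_\mathtt{a}$ being the singleton $\{a^{\kstruct}\}$.

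Equisatisfiability is then shown by a routine structural induction establishing $\kstruct,\pi,v \models \psi$ iff $\kstruct',\pi,v \models \widehat\psi$ for all worlds $\pi$ and assignments $v$. For the forward direction, given a model $\kstruct$ of $\phi$ I expand it to $\kstruct'$ by interpreting each $\pred{P}_\mathtt{a}$ as the rigid singleton $\{a^{\kstruct}\}$; this satisfies both nominal axioms and, by the induction, witnesses $\kstruct'\models\mathsf{CF}(\phi)$. For the converse, a model $\kstruct'$ of $\mathsf{CF}(\phi)$ supplies, via the two axioms, for each $a$ a unique element $d_a$ lying in $\pred{P}_\mathtt{a}$ in every world; reducing $\kstruct'$ to the original signature and setting $a^{\kstruct} := d_a$ (well-defined and rigid) yields a model of $\phi$.

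The transformation scans $\phi$ once, adds a constant-size gadget per atom and a constant-size pair of axioms per constant, so it runs in polynomial (in fact linear) time and produces a formula of polynomial size. The only genuine subtlety -- and the step I would treat most carefully -- is the variable-economy argument guaranteeing that every constant-bearing atom leaves a spare variable to rebind, together with checking that this rebinding (shadowing an outer occurrence whose value is irrelevant to the atom's truth) is semantically sound; everything else is bookkeeping in the induction.
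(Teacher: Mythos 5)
Your proposal is correct and follows essentially the same route as the paper: a fresh unary predicate $\pred{P}_\mathtt{a}$ per constant, axioms forcing it to denote a rigid singleton, and a homomorphic rewriting of constant-bearing atoms that exploits the fact that such atoms contain at most one variable, leaving the other free to be rebound existentially. The only (immaterial) difference is how rigidity is enforced -- the paper uses $\exists^{=1}x.\pred{P}_\mathtt{a}(x)\wedge \exists^{=1}x.\allstandb\pred{P}_\mathtt{a}(x)$ where you use a guarded implication $\forall x.(\pred{P}_\mathtt{a}(x)\limplies\allstandb\pred{P}_\mathtt{a}(x))$ alongside per-world uniqueness; both variants are monodic, S5, nullary-free, and $\mathcal{C}^2$, so either works.
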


\pagebreak

\begin{proof}[Proof Sketch]
For every constant $\mathtt{a}$ occurring in $\phi$, introduce a unary predicate $\pred{P}_\mathtt{a}$. Let $\phi^\mathrm{consts}$ be the conjunction over all $\exists^{=1}x.\pred{P}_\mathtt{a}(x)\wedge \exists^{=1}x.\allstandb\pred{P}_\mathtt{a}(x)$ for all such $\mathtt{a}$.
Further, obtain $\phi'$ by replacing every atom using constants $\mathtt{a}$, $\mathtt{b}$ 
as follows 
\begin{eqnarray*}
\pred{P}(\mathtt{a}) & \mapsto & \exists x.(\pred{P}_\mathtt{a}(x) \wedge \pred{P}(x)) \\[-1pt]
\pred{P}(\mathtt{a},x) & \mapsto & \exists y.(\pred{P}_\mathtt{a}(y) \wedge \pred{P}(y,x)) \\[-1pt]
\pred{P}(x,\mathtt{a}) & \mapsto & \exists y.(\pred{P}_\mathtt{a}(y) \wedge \pred{P}(x,y)) \\[-1pt]
\pred{P}(\mathtt{a},y) & \mapsto & \exists x.(\pred{P}_\mathtt{a}(x) \wedge \pred{P}(x,y)) \\[-1pt]
\pred{P}(y,\mathtt{a}) & \mapsto & \exists x.(\pred{P}_\mathtt{a}(x) \wedge \pred{P}(y,x)) \\[-1pt]
\pred{P}(\mathtt{a},\mathtt{b}) & \mapsto & \exists x.\exists y.(\pred{P}_\mathtt{a}(x) \wedge \pred{P}_\mathtt{b}(y) \wedge \pred{P}(x,y))\\[-1pt]
x\,\dot{=}\,\mathtt{a}  & \mapsto & \pred{P}_\mathtt{a}(x) \qquad \mbox{(same for }\mathtt{a}\,\dot{=}\,x\mbox{)}\\[-1pt]
y\,\dot{=}\,\mathtt{a}  & \mapsto & \pred{P}_\mathtt{a}(y) \qquad \mbox{(same for }\mathtt{a}\,\dot{=}\,y\mbox{)}\\[-1pt]
\mathtt{a}\,\dot{=}\,\mathtt{b}  & \mapsto & \exists x.(\pred{P}_\mathtt{a}(x) \wedge \pred{P}_\mathtt{b}(x))
\end{eqnarray*}
Then we let $\mathsf{CF}(\phi) = \phi^\mathrm{consts} \wedge \phi'$.
\end{proof}

Thus given an arbitrary \SmonCtwo formula, the consecutive application of the transformations of the above theorems produces an equisatisfiable frugal \SmonCtwo formula. The transformation is polytime and, in particular, the result is of polynomial size with respect to the input.

\section{Satisfiability in Monodic Standpoint \Ctwo{}}\label{sec:sat-in-monodic-c2}

In this section, we study the satisfiability problem of frugal \SmonCtwo and prove \NExpTime-completeness (which carries over to full \SmonCtwo), constituting the central result of the paper.

To get started, we provide an overview of the argument used to establish the result. 
In \Cref{sec:permutational-representatives}, we show that the satisfiability of a frugal $\mathbb{S}^\mathrm{mon}_{\Ctwo}$ formula $\phi$ coincides with the existence of a structure $\kstruct$ having exponentially many precisifications with respect to $\phi$'s size, from which a specific kind of model -- called the \emph{$\Preds_{\rigidp}$-stable permutational closure} of $\kstruct$ -- can be obtained.
In \Cref{sec:stacked-models}, we introduce \emph{stacked interpretations}: plain first-order interpretations that closely reflect the form of standpoint structures for \SmonCtwo. We also define stacked formulae $\phi^m_\mathrm{stack}$, which enforce models to be stacked interpretations corresponding to standpoint structures with $2^m$ precisifications.
With these components in place, we present in \Cref{sec:translating-formulae} an equisatisfiable translation from frugal \SmonCtwo formulae into plain \Ctwo{}, which is polynomial in the size of the input formula.

Throughout the section, we will use a running example to help the reader navigate through the technical details. 

\begin{example}\label{example} Consider the 
     monodic standpoint \Ctwo{} sentence $E$ in \Cref{figure:models}(1), expressing that there is exactly one unequivocally good thing ($E_0$); that everything is either unequivocally good or conceivably the best (somewhere), with no two things being the best simultaneously ($E_1$); and that it is conceivable that everything is good or the best ($E_3$).

\Cref{figure:models}(2) shows a model of $E$. Notably, in models of $E$ with infinite domains -- such as the one in \cref{figure:models}(2) -- there must also be infinitely many precisifications. This is because only one element satisfies $\pred{Good}$ everywhere, while every other element must be the $\pred{Best}$ in some precisification, with at most one such element per precisification. \defend

\begin{figure*}[h]
\centerline{\includegraphics[width=\textwidth]{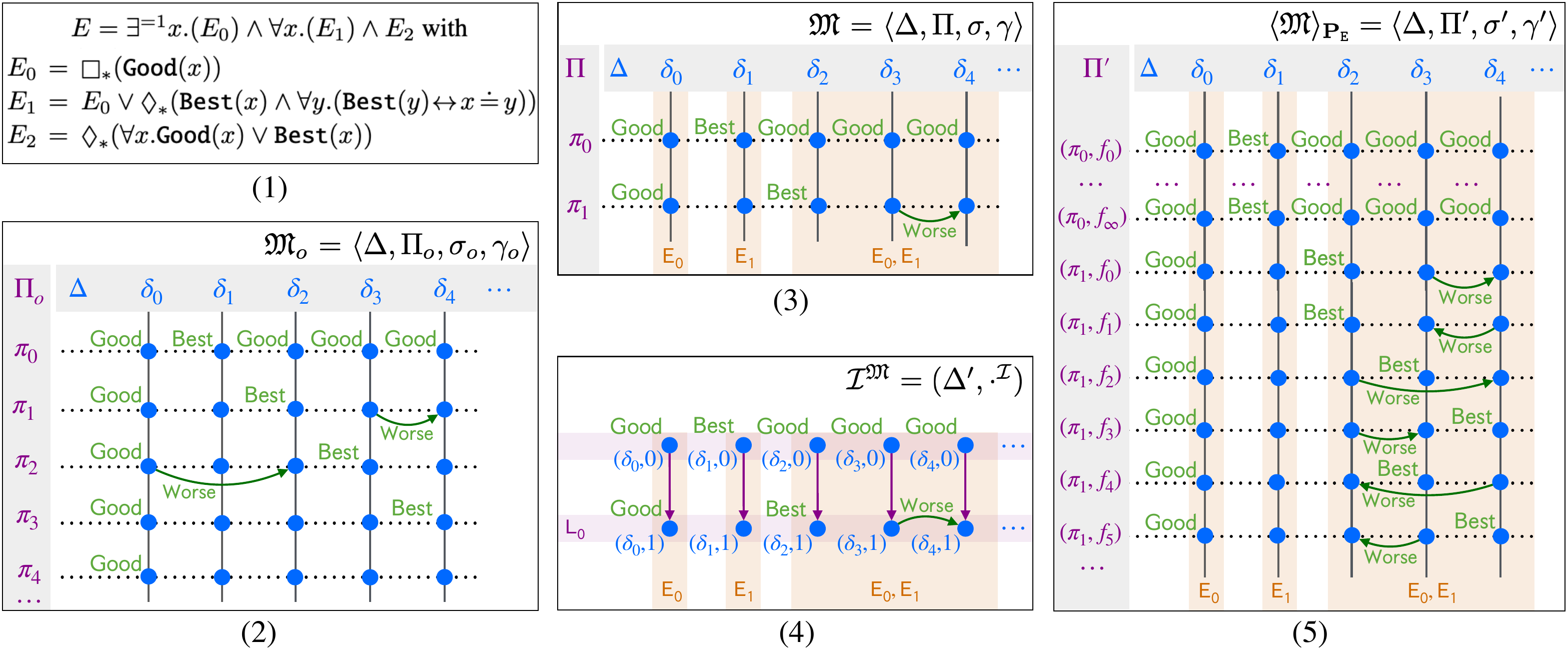}}
\caption{(1) The formula $E$ from \Cref{example}, and illustrations of (2) a model $\kstruct_{o}$ of $E$, (3) an interpretation $\kstruct$ with the signature $\tuple{\Preds \uplus \Preds_{\rigidp},\emptyset,\{*\}}$, (4) the stacked interpretation of $\kstruct$, $\mathcal{I}^{\kstruct}$ and (5) $\modelPR$, the $\Preds_{\rigidp}$-stable permutational closure of $\kstruct$. In the graphics, all points within a coloured area labelled with a unary predicate are in the interpretation of that predicate.\\[-3ex]}\label{figure:models}
\end{figure*}
\end{example}

\pagebreak

\subsection{Permutational Representatives}\label{sec:permutational-representatives}

Next, we show that for any satisfiable frugal $\mathbb{S}^\mathrm{mon}_{\Ctwo}$ formula~$\phi$, there is a structure $\kstruct$ with only exponentially many precisifications in $|\phi|$ from which a model of $\phi$ of a specific shape can be created (while $\kstruct$ may not be a model itself). 

\begin{definition}\label[definition]{def:perm-closure}
Let $\kstruct=\tuple{\Dom,\Precs,\sigma,\gamma}$ be a standpoint structure for the signature $\tuple{\Preds \uplus \Preds_{\rigidp},\emptyset,\{*\}}$,
where $\Preds$ contains only unary and binary predicates, and $\Preds_{\rigidp}=\{\rigidp_0,\ldots,\rigidp_\ell\}$ is a set of special rigid unary predicates.
Let $\mathbb{P}_{\rigidp}$ denote the set of all permutations (i.e., bijective functions) $f : \Delta \to \Delta$ which preserve (non)membership in every~$\rigidp_i$, that is, for every  
$i \in \{0,\ldots,\ell\}$ and $\delta \in \Delta$, we require $\delta \in \rigidp_i^{\kstruct} \Leftrightarrow f(\delta) \in \rigidp_i^{\kstruct}$.  

Then, the \emph{$\Preds_{\rigidp}$-stable permutational closure} of $\kstruct$, denoted $\modelPR$ is the standpoint structure $\tuple{\Dom,\Precs',\sigma',\gamma'}$ defined by
\begin{itemize}
\item $\Precs' = \Precs \times \mathbb{P}_{\rigidp}$,
\item $\sigma'(\sts) = \sigma(\sts) \times \mathbb{P}_{\rigidp}$,
\item $\pred{P}^{\gamma'((\pi,f))} = \{ f(\delta) \mid \delta \,{\in}\, \pred{P}^{\gamma(\pi)}\}$ for unary predicates $\pred{P} \,{\in}\, \Preds$ 
\item $\pred{P}^{\gamma'((\pi,f))} = \{ (f(\delta_1),f(\delta_2)) \mid (\delta_1,\delta_2) \,{\in}\, \pred{P}^{\gamma(\pi)}\}$ for binary predicates $\pred{P} \,{\in}\, \Preds$ \defend
\end{itemize} 

\end{definition}

As we can see, the structure $\kstruct$ contains a set of special rigid unary predicates $\Preds_{\rigidp}$. These predicates induce ``$\rigidp$-types'', corresponding to the subsets $T \subseteq \Preds_{\rigidp}$, so a domain element is said to have the $\rigidp$-type $T$ if it belongs to the interpretation of each $\rigidp_i$ in $T$ and to none outside it. We say $T$ is realized (in $\kstruct$) if at least one domain element has it.

The {$\Preds_{\rigidp}$-stable permutational closure} of $\kstruct$ produces a much larger structure that contains, for each initial precisification in $\Precs$, the set of precisifications with all possible permutations between domain elements belonging to the same $\rigidp$-type. Locally, all permuted versions of any $\pr\in\Precs$ in the closure are isomorphic to each other, they just have their elements ``swapped around'', preserving the internal structure. This intuition is materialised in the lemma below.


\begin{lemma}\label[lemma]{lemma:permutations-model-the-same}
Let $\phi$ be a frugal \SmonCtwo formula and let $\modelPR$ the \emph{$\Preds_{\rigidp}$-stable permutational closure} of a standpoint structure $\kstruct$. 
Let $(\pr,f)$ and $(\pr,f')$ be precisifications of $\modelPR$ and
$\varassign'=f'\circ f^{-1}\circ\varassign$. Then, 
$$\modelPR,\!(\pr,f),\!\varassign \models \phi \Longleftrightarrow \modelPR,\!(\pr,f'),\!\varassign' \models \phi.$$
\end{lemma}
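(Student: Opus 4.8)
The plan is to prove a strengthened statement by induction on the structure of $\phi$, from which the lemma follows immediately. Write $g \eqdef f'\circ f^{-1}$ and observe first that $g$ is again a $\Preds_{\rigidp}$-type-preserving permutation, i.e.\ $g\in\mathbb{P}_{\rigidp}$: indeed $f^{-1}$ preserves (non)mem\-ber\-ship in every $\rigidp_i$ because $f$ does, and the property is closed under composition. Moreover $g$ satisfies the identity $(f')^{-1}\circ g = f^{-1}$. The crucial observation is that, by the very definition of $\modelPR$, the map $g$ is an isomorphism from the first-order structure $\gamma'((\pr,f))$ onto $\gamma'((\pr,f'))$: for a unary $\pred{P}\in\Preds$ we have $g(\pred{P}^{\gamma'((\pr,f))}) = \{f'(f^{-1}(f(\delta)))\mid\delta\in\pred{P}^{\gamma(\pr)}\} = \pred{P}^{\gamma'((\pr,f'))}$, and analogously for binary predicates; for the rigid $\rigidp_i$ (interpreted identically in every world) membership is preserved because $g\in\mathbb{P}_{\rigidp}$.

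Guided by this, I would prove the following generalised claim for every subformula $\psi$ of $\phi$, every $\pr\in\Precs$, all $f,f'\in\mathbb{P}_{\rigidp}$ and every assignment $\varassign$:
$$\modelPR,(\pr,f),\varassign \models \psi \iff \modelPR,(\pr,f'),g\circ\varassign \models \psi,\qquad g=f'\circ f^{-1}.$$
For atoms $\pred{P}(x)$, $\pred{P}(x,y)$ and $\rigidp_i(x)$ the two sides are literally equivalent, using $(f')^{-1}\circ g = f^{-1}$ (so that the preimages under $f$ and under $f'$ of the $g$-shifted arguments coincide) together with the isomorphism and $\rigidp$-preservation noted above; the equality atom $x\dot{=}y$ is handled by injectivity of $g$. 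The Boolean cases are immediate from the induction hypothesis. For a counting quantifier $\exists^{\lhd n}x\,\chi$, the induction hypothesis applied to $\chi$ with the assignments $\varassign_{\{x\mapsto\delta\}}$ shows that $\delta\mapsto g(\delta)$ is a bijection between the witness set at $(\pr,f)$ and the witness set at $(\pr,f')$ (using $g\circ\varassign_{\{x\mapsto\delta\}} = (g\circ\varassign)_{\{x\mapsto g(\delta)\}}$), so the two sets have equal cardinality and the constraint ``$\lhd n$'' transfers.

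The step I expect to be the main obstacle is the modal operator $\allstandd\chi$ (the only standpoint expression is $*$ since $\phi$ is frugal, so this amounts to an existential quantification over all of $\Precs'$, with the assignment carried over unchanged). Given a witness $(\pr'',h)\in\Precs'$ with $\modelPR,(\pr'',h),\varassign\models\chi$, I take the shifted world $(\pr'',g\circ h)$, which again lies in $\Precs'$ because $g\circ h\in\mathbb{P}_{\rigidp}$; since $(g\circ h)\circ h^{-1}=g$, the induction hypothesis at base $\pr''$ with permutation pair $(h,g\circ h)$ yields $\modelPR,(\pr'',g\circ h),g\circ\varassign\models\chi$, witnessing the right-hand side. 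The converse is symmetric, using the world $(\rho,g^{-1}\circ k)$ for a given right-hand witness $(\rho,k)$. The essential point making this work is that $\mathbb{P}_{\rigidp}$ is closed under composition with $g$, so the permutational closure always supplies the required ``re-permuted'' copy of any witnessing world; note that the induction hypothesis is invoked at a base $\pr''$ possibly different from $\pr$, which is precisely why the generalised claim must be stated uniformly over all bases.

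The lemma is then the instance of the generalised claim obtained by taking $\psi=\phi$ and $\varassign'=g\circ\varassign = f'\circ f^{-1}\circ\varassign$.
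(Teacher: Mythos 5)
Your proof is correct and follows essentially the same route as the paper's: a structural induction on $\phi$ whose only non-routine cases are the counting quantifier (handled via the bijection induced by $f'\circ f^{-1}$) and the diamond (handled by re-permuting the witnessing world to $(\pi'',g\circ h)$, exactly the paper's choice $f'_\psi = f'\circ f^{-1}\circ f_\psi$). The only difference is presentational: you make explicit the closure of $\mathbb{P}_{\rigidp}$ under composition and the isomorphism property of $g$ on $\gamma'$, which the paper leaves implicit in its appeal to \Cref{def:perm-closure}.
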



At the global level, $\rigidp$-types of domain elements, which are preserved under the permutations (by construction), will be utilized to witness the elements' ``membership'' in diamond-preceded subformulae in the following sense: we call some $\delta \in \Delta$ a \emph{member} of some formula $\allstandd\psi$ with one free variable $z$ if $\kstruct,\pr,z \mapsto \delta \models \allstandd\psi$ for some/all $\pr \in \Pi$ (note that by the semantics, the choice of $\pr$ is irrelevant in this case). We will denote the set of members of $\allstandd\psi$ in $\kstruct$ by $(\allstandd\psi)^{\kstruct}$.



Let us now investigate what conditions must $\kstruct$ meet so that $\modelPR$ is a model of $\phi$. First, for $\modelPR$ to witness membership to the $\allstandd\psi$ formulae, the number of $\rigidp$-types must be at least as large as the number of types induced by the monodic modal formulae -- these we refer to as $\Diamond$-types. The two sets of types will be aligned: each $\rigidp$-type will be entirely contained within a corresponding $\Diamond$-type. 
Moreover, for each formula $\allstandd\psi$ in a given $\Diamond$-type, $\kstruct$ must include at least one precisification in which some element of that type satisfies $\psi$. The {$\Preds_{\rigidp}$-stable permutational closure} then ensures that every other element of the same $\rigidp$-type also satisfies $\psi$ in some permutation of that precisification.


\begin{theorem}\label{theorem:sat-in-permutational-closure}
Let $\phi$ be a satisfiable frugal \SmonCtwo formula over the signature $\tuple{\Preds,\emptyset,\{*\}}$.
Let $Dia_{\phi}$ denote the diamond subformulae of $\phi$ and $\mathit{FreeDia}_{\phi}$ the diamond subformulae with one free variable.
Then there is a standpoint structure $\kstruct=\tuple{\Dom,\Precs,\sigma,\gamma}$ over $\tuple{\Preds \uplus \Preds_{\rigidp},\emptyset,\{*\}}$ with
\begin{itemize}
\item $|\Preds_{\rigidp}| = \ell = |\mathit{FreeDia}_{\phi}|$
\item $|\Precs| \leq |Dia_{\phi}|\cdot 2^{|Dia_{\phi}|} $
\end{itemize}
such that $\modelPR$ is a model of $\phi$. 
\end{theorem}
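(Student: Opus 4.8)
The plan is to start from an arbitrary model $\kstruct_{o}=\tuple{\Dom,\Precs_o,\sigma_o,\gamma_o}$ of $\phi$ (which exists by satisfiability), carve out a small set of precisifications from it, decorate the domain with the rigid predicates $\Preds_\rigidp$ so as to record modal membership information, and finally argue that its permutational closure recovers a full model. Concretely, I would enumerate the free-variable diamonds as $\mathit{FreeDia}_{\phi}=\{\allstandd\psi_1,\dots,\allstandd\psi_\ell\}$ and introduce one rigid unary predicate $\rigidp_i$ per $\allstandd\psi_i$. The interpretation of each $\rigidp_i$ is fixed (rigidly, i.e.\ identically across all precisifications) to be exactly the member set $(\allstandd\psi_i)^{\kstruct_{o}}$. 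With this choice the $\rigidp$-type of any element equals its $\Diamond$-type in $\kstruct_{o}$, so that $\rigidp$-type-preserving permutations are precisely the modal-membership-preserving ones -- this is the property that does all the work.

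Next I would choose the precisifications to retain. For each sentential diamond $\allstandd\chi$ holding in $\kstruct_{o}$, keep one precisification $\pr_\chi$ witnessing $\chi$. For each realized $\Diamond$-type $T$ and each $\allstandd\psi_i\in T$, pick some element $\delta$ of type $T$ (which exists, as $T$ is realized) together with a precisification $\pr$ such that $\kstruct_{o},\pr,z\mapsto\delta\models\psi_i$; such a pair exists precisely because $\allstandd\psi_i\in T$ means every element of type $T$ is a member of $\allstandd\psi_i$. Collecting these yields a set $\Precs$ of at most $s+\ell\cdot 2^{\ell}$ precisifications, where $s$ is the number of sentential diamonds; since $s+\ell\cdot 2^\ell\le(s+\ell)\cdot 2^{s+\ell}=|Dia_{\phi}|\cdot 2^{|Dia_{\phi}|}$, the required bound holds. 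Let $\kstruct$ be $\kstruct_{o}$ restricted to $\Precs$ and equipped with the rigid predicates, over $\tuple{\Preds\uplus\Preds_\rigidp,\emptyset,\{*\}}$; note that $\Dom$ is untouched (and may be infinite) and each retained $\gamma(\pr)$ agrees with $\gamma_o(\pr)$ on $\Preds$.

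The core of the proof is the following claim, established by structural induction on subformulae $\theta$ of $\phi$: for every retained $\pr\in\Precs$ and every assignment $\varassign$,
\[ \modelPR,(\pr,\mathrm{id}),\varassign\models\theta \iff \kstruct_{o},\pr,\varassign\models\theta, \]
where $\mathrm{id}$ is the identity permutation. The atomic, boolean and counting-quantifier cases are immediate, since at $(\pr,\mathrm{id})$ the local interpretation of $\modelPR$ agrees with $\gamma_o(\pr)$ and the two structures share the domain $\Dom$. The interesting cases are the diamonds, both relying on \Cref{lemma:permutations-model-the-same}. For the left-to-right direction on a free diamond $\allstandd\psi_i$ with $\delta=\varassign(z)$: a witness $(\pr',f')$ for $\psi_i$ in $\modelPR$ yields, via the lemma and the induction hypothesis, that $f'^{-1}(\delta)$ is a member of $\allstandd\psi_i$ in $\kstruct_{o}$; since $f'$ preserves $\rigidp$-types and these coincide with $\Diamond$-types, $\delta$ is a member too, giving $\kstruct_{o},\pr,\varassign\models\allstandd\psi_i$. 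For right-to-left, if $\delta$ is a member then its realized type $T$ contains $\allstandd\psi_i$, so a retained witness $(\pr,z\mapsto\delta_{T,i})$ satisfies $\psi_i$; applying a transposition $f'\in\mathbb{P}_{\rigidp}$ with $f'(\delta_{T,i})=\delta$ (available because the two elements share the $\rigidp$-type $T$) together with the lemma transports this witness to $(\pr,f')$, discharging the diamond. The sentential-diamond cases are analogous but simpler, using the dedicated witnesses $\pr_\chi$.

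With the claim in hand the theorem follows: since $\kstruct_{o}\models\phi$ we have $\kstruct_{o},\pr\models\phi$ for every $\pr\in\Precs$, hence $\modelPR,(\pr,\mathrm{id})\models\phi$, and by \Cref{lemma:permutations-model-the-same} every precisification $(\pr,f)$ of $\modelPR$ satisfies $\phi$, i.e.\ $\modelPR\models\phi$. I expect the \emph{main obstacle} to be the left-to-right direction of the free-diamond case: one must guarantee that the permutational closure, despite adding many new precisifications, introduces no spurious modal members. This is exactly what forcing $\rigidp$-types to coincide with $\Diamond$-types ensures, and pinning down this alignment together with the bookkeeping of admissible permutations is the delicate part; the matching completeness of the witness selection -- which is what drives the exponential bound on $|\Precs|$ -- is the other point requiring care.
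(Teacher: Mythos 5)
Your proposal is correct and follows essentially the same route as the paper's proof: enriching an arbitrary model with rigid predicates $\rigidp_i$ interpreted as the member sets of the free diamonds, selecting one witness precisification per satisfied sentential diamond and per (realized type, diamond-in-type) pair, and then running a structural induction whose diamond cases transport witnesses via $\rigidp$-type-preserving permutations using \Cref{lemma:permutations-model-the-same}. The only omission is the trivial edge case where $\phi$ has no diamond subformulae at all, which the paper handles by retaining one arbitrary precisification.
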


\begin{proof}[Proof Sketch]
To prove \Cref{theorem:sat-in-permutational-closure}, we start from an arbitrary model $\kstruct'$ of $\phi$ and let $\mathit{FreeDia}_{\phi}=\{\allstandd\phi_1, \ldots \allstandd\phi_\ell\}$ be the set of diamond subformulae of $\phi$ with one free variable.
We enrich $\kstruct'$ by $\Preds_{\rigidp}$, setting the extension of $\rigidp_i$ to $(\allstandd\phi_i)^{\kstruct'}$ for every $i \in \{1,\ldots,\ell\}$.
Then we create a new structure $\kstruct$ by selecting at most exponentially many precisifications from the enriched $\kstruct'$ and removing the rest. Specifically,
\begin{itemize}
    \item select an arbitrary $\pr$ in case there are no diamond subformulae of $\phi$ at all. Otherwise,
    \item for each $\allstandd\psi$ with no free variables that is satisfied in $\kstruct'$, select some $\pr$ with $\kstruct',\pr\models\psi$ and
    \item for each realised $\rigidp$-type $T \subseteq \Preds_{\rigidp}$, pick some $\delta$ that has $T$, and select, for every $\rigidp_i \in T$ one $\pr$ with $\kstruct',\pr,z\mapsto \delta \models \phi_i$
\end{itemize}
The first point ensures that $\Precs$ is nonempty. The second adds witnesses for sentential modal formulae. The third provides witnesses of all monodic modal formulae from $\mathit{FreeDia}_{\phi}$. The construction ensures that at least one domain element witnesses each $\allstandd\phi_i$ formula of each $\Diamond$-type. Once these seed witnesses are in $\kstruct$, the rest of the elements belonging to that type in $\kstruct'$ will be witnessed by a permutation in $\modelPR$. One can then show by induction on the structure of $\phi$ that $\modelPR$ is a model iff $\kstruct'$ is a model. 
\end{proof}
\vspace{-1ex}
\begin{example} Revisiting \Cref{example}, note that $E$ contains two monodic modal subformulae, $E_0$ and $E_1$. From the model of $E$ shown in \cref{figure:models}(2), we can extract a structure~$\kstruct$ with\linebreak \mbox{$\Preds_\rigidp = \{\rigidp_0, \rigidp_1\}$} (depicted in \cref{figure:models}(3)), such that the corresponding model $\modelPR$ (shown in \cref{figure:models}(5)) also satisfies $E$.

In constructing $\kstruct$, we proceed as follows:
\begin{itemize}
    \item We use $\pr_0$ to witness the sentential modal subformula $E_2$.
    \item The type $\{\}$ is not realised.
    \item For type $\{\pred{E}_0\}$, we use $\pr_0$ as witnes since $\delta_0 \in \pred{Good}^{\gamma{o}(\pr_0)}$.
    \item For type $\{\pred{E}_1\}$ we use $\pr_0$ as witnes since $\delta_1\in\pred{Best}^{\gamma_{o}(\pr_0)}$.
    \item For type $\{\pred{E}_0,\pred{E}_1\}$ we use $\pr_0$ and $\pr_1$ as witnesses since $\delta_2\in\pred{Good}^{\gamma_{o}(\pr_0)}$ and $\delta_2\in\pred{Best}^{\gamma_{o}(\pr_1)}$. \defend
\end{itemize}
Notice that the permutations of $\delta_2$ ensure that the membership to the formulae $E_0$ and $E_1$ in $\kstruct$ carries on to $\modelPR$.
\end{example}
\vspace{-1ex}

\subsection{Stacked Interpretations}\label{sec:stacked-models}

We now define a specific kind of \Ctwo{} interpretation obtained from a given standpoint structure $\kstruct$ with $2^m$ precisifications, called the \emph{stacked interpretation} of $\kstruct$ and denoted $\mathcal{I}^{\kstruct}$. This structure is designed to closely mirror the shape of $\kstruct$.

\begin{definition}\label[definition]{def:stacked-model}
Let $\kstruct=\tuple{\Dom,\Precs,\sigma,\gamma}$ with $|\Precs| {=} 2^m$ be a standpoint structure for the signature $\tuple{\Preds,\emptyset,\{*\}}$ where $\Preds$ contains only unary and binary predicates. Assume $\Precs$ is linearly ordered with elements named $\pi_0,\pi_1,\ldots,\pi_{2^m-1}$.

The \emph{stacked interpretation} of $\kstruct$ is the FO-interpretation $\mathcal{I}^{\kstruct} = (\Delta', \cdot^\mathcal{I})$ with signature $\tuple{\Preds \uplus \{\pred{F},\pred{L}_0, \ldots, \pred{L}_{m-1}\},\emptyset}$, where $\pred{F}$ is a fresh binary predicate and $\pred{L}_0, \ldots, \pred{L}_{m-1}$ are fresh unary predicates, such that
\begin{enumerate}[ label={(S\arabic*)}, ref={(S\arabic*)}]
\item $\Delta' = \Delta \times \{0,\ldots,2^m-1\}$\label{def-enum:stacked-domain}
\item $\pred{L}_j^\mathcal{I} = \{ (\delta,i) \mid \mbox{the $j^\mathrm{th}$ bit of $i$ in binary encoding is $1$}\}$   \label{def-enum:stacked-Ls}
\item $\pred{F}^\mathcal{I} = \{ ((\delta,i),(\delta,{i+1})) \mid \delta \in \Delta,\ 0\leq i < 2^m-1 \}$  \label{def-enum:stacked-F}
\item $\pred{P}^\Inter = \bigcup_{0 \leq i < 2^m} \pred{P}^{\gamma(\pi_i)} \times \{i\}$ \ for all unary $\pred{P} \in \Preds$,  \label{def-enum:stacked-unary-preds}
\item $\pred{P}^\Inter =  
\{((\delta_1,i),(\delta_2,i)) \mid 0 \,{\leq}\, i \,{<}\, 2^m, (\delta_1,\delta_2) \,{\in}\, \pred{P}^{\gamma(\pi_i)} \}
$ for all binary $\pred{P} \in \Preds$. \defend \label{def-enum:stacked-binary-preds}
\end{enumerate}
\end{definition}
\newpage

Our approach constructs a stacked domain by creating one copy of the original domain $\Delta$ for each precisification in $\kstruct$, so that each new element $(\delta, \pr)$ mimics $\delta$ at precisification $\pr$. A set of new unary predicates $\pred{L}_0, \ldots, \pred{L}_{m-1}$ encodes the index of the associated precisification (each less than $2^m$). Additionally, a new binary predicate $\pred{F}$ links each element $(\delta, \pr_i)$ to its successor $(\delta, \pr_{i+1})$. Thus, for every original element $\delta$, the stacked interpretation forms an $\pred{F}$-chain tracking $\delta$ across all precisifications in $\kstruct$. In \Cref{figure:models}, (3) depicts the stacked interpretation of (2) with the $\pred{F}$-chains in purple.

\begin{definition}
For a given $m \in \mathbb{N}$, we define $\phi^m_\mathrm{stack}$ as the conjunction of the following formulae
\begin{enumerate}[ label={(F\arabic*)}, ref={(F\arabic*)}]
\item $\forall x.(\bigvee_{0\leq j< m} \neg \pred{L}_j(x)) \to \exists^{=1}y.\pred{F}(x,y)$ \label{def-enum:everything-but-last-has-next}
\item $\forall x.(\bigwedge_{0\leq j< m} \pred{L}_j(x)) \to \exists^{=0}y.\pred{F}(x,y)$ \label{def-enum:last-has-no-next}
\item $\forall x.(\bigvee_{0\leq j< m} \pred{L}_j(x)) \to \exists^{=1}y.\pred{F}(y,x)$ \label{def-enum:everything-but-first-has-previous}
\item $\forall x.(\bigwedge_{0\leq j< m} \neg \pred{L}_j(x)) \to \exists^{=0}y.\pred{F}(y,x)$ \label{def-enum:first-has-no-previous}

\item 
\label{def-enum:level-counter}
	$\displaystyle\forall x y.\pred{F}(x,y) \to \bigwedge_{\mathclap{0\leq j < m}} \Big(\! \big(\pred{L}_{j}(x) \,{\leftrightarrow}\, \pred{L}_{j}(y)\big) \leftrightarrow \bigvee_{\mathclap{0\leq j'<j}} \neg \pred{L}_{j'}(x) \!\Big)   $ 

\item $\forall x y. \pred{P}(x,y) \!\to\! \displaystyle{\bigwedge_{\mathclap{0\leq j< m}}}\,\pred{L}_j(x)\,{\leftrightarrow}\,\pred{L}_j(y)$ for all binary $\pred{P} \,{\in}\, \Preds$\!. \label{def-enum:binary-at-same-level}
\end{enumerate} 
\vspace{-2ex}
~\defend
\end{definition}

The stacked formula of size $m$, denoted $\phi^m_\mathrm{stack}$, is used to enforce that models are stacked models. Clause~\ref{def-enum:everything-but-last-has-next} enforces that all elements except those with the highest index (as determined by the $\pred{L}$ predicates) have exactly one $\pred{F}$-successor. Conversely, \ref{def-enum:last-has-no-next} ensures that elements with the highest index have none. Clauses~\ref{def-enum:everything-but-first-has-previous} and~\ref{def-enum:first-has-no-previous} impose analogous constraints on $\pred{F}$-predecessors. Clause~\ref{def-enum:level-counter} encodes that any two $\pred{F}$-connected elements have consecutive indexes, via a binary level-counter using the $\pred{L}$ predicates. Lastly, Clause~\ref{def-enum:binary-at-same-level} enforces that all binary predicates (except $\pred{F}$) relate only elements with matching indices.

\begin{lemma}\label[lemma]{lemma:stacked-model-satisfies-stack-formula}
    Any stacked interpretation $\mathcal{I}^{\kstruct}$ satisfies $\phi^m_\mathrm{stack}$.
\end{lemma}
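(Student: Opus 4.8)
The plan is to verify directly that $\mathcal{I}^{\kstruct}$ satisfies each of the six conjuncts of $\phi^m_\mathrm{stack}$ separately, exploiting the single structural invariant that underlies the whole construction: for an element $(\delta,i) \in \Delta'$, the truth values of $\pred{L}_0(x),\ldots,\pred{L}_{m-1}(x)$ are, by \ref{def-enum:stacked-Ls}, exactly the bits of the binary encoding of its index $i$. Consequently, the Boolean combinations of the $\pred{L}_j$ in the clause antecedents pick out distinguished index values: $\bigwedge_j \pred{L}_j(x)$ holds precisely at index $2^m-1$ (all bits set), $\bigwedge_j \neg \pred{L}_j(x)$ precisely at index $0$, and their negations $\bigvee_j \neg \pred{L}_j(x)$ resp. $\bigvee_j \pred{L}_j(x)$ precisely at the indices $i < 2^m-1$ resp. $i > 0$. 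I would make this bit-encoding observation explicit first, as every clause reduces to a statement about it.

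Next I would dispatch the four ``edge and degree'' clauses. Since by \ref{def-enum:stacked-F} the predicate $\pred{F}^\mathcal{I}$ maps $(\delta,i)$ to the unique element $(\delta,i{+}1)$ whenever $0 \le i < 2^m-1$ and to nothing otherwise, clause \ref{def-enum:everything-but-last-has-next} holds (every non-maximal index has exactly one $\pred{F}$-successor) and \ref{def-enum:last-has-no-next} holds (index $2^m-1$ has none); symmetrically $(\delta,i)$ has the unique $\pred{F}$-predecessor $(\delta,i{-}1)$ for $i \ge 1$ and none for $i = 0$, giving \ref{def-enum:everything-but-first-has-previous} and \ref{def-enum:first-has-no-previous}. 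Clause \ref{def-enum:binary-at-same-level} is equally immediate: by \ref{def-enum:stacked-binary-preds}, every binary $\pred{P} \in \Preds$ relates only pairs $((\delta_1,i),(\delta_2,i))$ sharing the same index $i$, so both arguments have identical $\pred{L}_j$-membership for every $j$.

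The only conjunct requiring genuine work is the level-counter clause \ref{def-enum:level-counter}, and I expect it to be the main obstacle. Its relevant instances are exactly the $\pred{F}$-edges $((\delta,i),(\delta,i{+}1))$, so the claim reduces to a purely arithmetical fact about binary increment: for each position $j$, the $j^\mathrm{th}$ bits of $i$ and $i{+}1$ agree if and only if some lower bit $j' < j$ of $i$ is $0$. I would prove this by analysing the carry. Letting $k$ be the position of the least significant $0$-bit of $i$, incrementing flips bits $0,\ldots,k$ (those below $k$ from $1$ to $0$, and bit $k$ from $0$ to $1$) and leaves every bit above $k$ unchanged. Hence bit $j$ changes iff $j \le k$ iff all of bits $0,\ldots,j-1$ of $i$ equal $1$; negating, bit $j$ is unchanged iff some lower bit is $0$, which is precisely what \ref{def-enum:level-counter} asserts once $\pred{L}_j(x) \leftrightarrow \pred{L}_j(y)$ is read as ``bit $j$ unchanged'' and $\bigvee_{j'<j}\neg\pred{L}_{j'}(x)$ as ``some lower bit of $i$ is $0$''. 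Combining these six verifications yields $\mathcal{I}^{\kstruct} \models \phi^m_\mathrm{stack}$.
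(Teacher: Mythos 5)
Your proposal is correct and follows essentially the same route as the paper's own proof: a clause-by-clause verification in which the $\pred{L}_j$-antecedents are decoded as index conditions via \ref{def-enum:stacked-Ls}, the degree clauses follow from \ref{def-enum:stacked-F}, clause \ref{def-enum:binary-at-same-level} from \ref{def-enum:stacked-binary-preds}, and the level-counter clause \ref{def-enum:level-counter} reduces to the same fact about binary increment (your explicit carry-position argument is just a slightly more detailed rendering of the paper's observation that the first low $0$-bit is the one that flips). No gaps.
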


\begin{proof}[Proof sketch] We verify that each clause of $\phi^m_\mathrm{stack}$ is satisfied by the stacked interpretation $\mathcal{I}^{\kstruct}$ as defined. In particular, the structure of the domain, and the interpretation of the predicates $\pred{F}$, $\pred{L}_0,\dots,\pred{L}_{m-1}$, and $\pred{P}\in\Preds$ ensure that all required properties hold. 
\end{proof}

\begin{theorem}\label{lemma:model-satisfies-stack-formula-if-isomorphic-to-stacked-model}
A first-order interpretation $\mathcal{I}$ over the signature $\tuple{\Preds \uplus \{\pred{F},\pred{L}_0, \ldots, \pred{L}_{m-1}\},\emptyset}$ satisfies $\phi^m_\mathrm{stack}$ if and only if it is isomorphic to a stacked interpretation $\mathcal{I}^{\kstruct}$ of some standpoint structure $\kstruct$ over signature $\tuple{\Preds,\emptyset,\{*\}}$ with $2^m$ precisifications.
\end{theorem}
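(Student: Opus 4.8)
The plan is to treat the two directions of the biconditional separately; the forward direction is immediate, whereas the converse carries all the combinatorial weight. For the \emph{if} direction, suppose $\mathcal{I}$ is isomorphic to $\mathcal{I}^{\kstruct}$ for some standpoint structure $\kstruct$ with $2^m$ precisifications. By \Cref{lemma:stacked-model-satisfies-stack-formula} we have $\mathcal{I}^{\kstruct}\models\phi^m_\mathrm{stack}$, and since $\phi^m_\mathrm{stack}$ is a first-order sentence and satisfaction of such sentences is invariant under isomorphism, $\mathcal{I}\models\phi^m_\mathrm{stack}$ follows at once.

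For the \emph{only if} direction, assume $\mathcal{I}=(\Delta',\cdot^\mathcal{I})\models\phi^m_\mathrm{stack}$. First I assign to each $d\in\Delta'$ a \emph{level} $\mathrm{idx}(d)=\sum_{j\,:\,d\in\pred{L}_j^\mathcal{I}}2^j\in\{0,\dots,2^m-1\}$, inducing a partition of $\Delta'$ into levels $\Delta'_i=\{d\mid\mathrm{idx}(d)=i\}$. The key structural claim is that $\pred{F}^\mathcal{I}$ arranges $\Delta'$ into disjoint $\pred{F}$-chains of length exactly $2^m$. By \ref{def-enum:everything-but-last-has-next}, every $d$ with $\mathrm{idx}(d)<2^m-1$ has a unique $\pred{F}$-successor, which by the binary-increment encoding of \ref{def-enum:level-counter} lies in $\Delta'_{\mathrm{idx}(d)+1}$; this yields maps $s_i\colon\Delta'_i\to\Delta'_{i+1}$ for $0\le i<2^m-1$. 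Dually, \ref{def-enum:everything-but-first-has-previous} gives each element of nonzero level a unique $\pred{F}$-predecessor in the preceding level, and these predecessor maps are two-sided inverses of the $s_i$, so every $s_i$ is a bijection; hence all levels are equinumerous and, since $\Delta'\neq\emptyset$, nonempty.

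The delicate point is ruling out ``wrap-around''. Clause \ref{def-enum:level-counter} is, by itself, satisfied by an $\pred{F}$-edge from a level-$(2^m{-}1)$ element to a level-$0$ element, because incrementing $2^m{-}1$ overflows to $0$; such edges would produce $\pred{F}$-cycles rather than chains. This is exactly what clauses \ref{def-enum:last-has-no-next} and \ref{def-enum:first-has-no-previous} exclude, by denying the top level any successor and the bottom level any predecessor, so that $\pred{F}^\mathcal{I}=\{(d,s_{\mathrm{idx}(d)}(d))\mid\mathrm{idx}(d)<2^m-1\}$. I then set $\Delta:=\Delta'_0$, define bijections $\Phi_i:=s_{i-1}\circ\cdots\circ s_0\colon\Delta\to\Delta'_i$ (with $\Phi_0=\mathrm{id}$), and let $h(\delta,i):=\Phi_i(\delta)$, a bijection $\Delta\times\{0,\dots,2^m-1\}\to\Delta'$. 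Finally I build $\kstruct=\tuple{\Delta,\Precs,\sigma,\gamma}$ with $\Precs=\{\pi_0,\dots,\pi_{2^m-1}\}$ and $\sigma(*)=\Precs$, interpreting each predicate at $\pi_i$ by pullback along $\Phi_i$, \ie $\pred{P}^{\gamma(\pi_i)}=\{\delta\mid\Phi_i(\delta)\in\pred{P}^\mathcal{I}\}$ for unary $\pred{P}$ and analogously on pairs for binary $\pred{P}$. Checking $h$ against the defining clauses (S1)--(S5) then shows it is an isomorphism $\mathcal{I}^{\kstruct}\to\mathcal{I}$: the $\pred{L}_j$ and $\pred{F}$ components agree by the construction of $\mathrm{idx}$ and the $s_i$, the unary predicates agree by definition of $\gamma$, and for binary predicates the decisive ingredient is \ref{def-enum:binary-at-same-level}, which forces every $\pred{P}$-edge of $\mathcal{I}$ to join two elements of equal level, so that it sits inside a single precisification and is reproduced exactly by the pullback.

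I expect the main obstacle to be precisely this converse chain analysis: recognising that \ref{def-enum:level-counter} alone admits cyclic models and that \ref{def-enum:last-has-no-next} together with \ref{def-enum:first-has-no-previous} are exactly what rules them out, and then carrying out the bookkeeping verifying that the pulled-back $\gamma$ together with $h$ recovers $\mathcal{I}$ on the nose.
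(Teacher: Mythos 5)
Your proof is correct and follows essentially the same route as the paper's: the forward direction is discharged by \Cref{lemma:stacked-model-satisfies-stack-formula}, and the converse reconstructs $\kstruct$ by reading levels off the $\pred{L}_j$-predicates, showing that \ref{def-enum:everything-but-last-has-next}--\ref{def-enum:level-counter} force disjoint $\pred{F}$-chains of length exactly $2^m$, and using \ref{def-enum:binary-at-same-level} to place binary atoms within a single precisification. The only (cosmetic) difference is that you take $\Delta$ to be the bottom level $\Delta'_0$ and build the isomorphism by composing the level-wise successor bijections, whereas the paper takes $\Delta$ to be the set of $\pred{F}$-connected components and argues via an induction counting $(\pred{F}^+)$-successors and -predecessors.
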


\begin{proof}[Proof Sketch]
If $\mathcal{I}$ is isomorphic to a stacked model $\mathcal{I}^{\kstruct}$ then it satisfies $\phi^m_{\smash{\mathrm{stack}}}$ by \Cref{lemma:stacked-model-satisfies-stack-formula}.
It remains to prove the other direction, i.e., for any $\mathcal{I}$ that satisfies $\phi^m_{\smash{\mathrm{stack}}}$, there exists a standpoint structure $\kstruct$ for which $\mathcal{I}^{\kstruct}$ is isomorphic to $\mathcal{I}$. We show how to construct $\kstruct=\tuple{\Dom,\Precs,\sigma,\gamma}$ given $\mathcal{I}$ with domain $\Delta'$:
For any $\delta' \in \Delta'$ we let $\mathsf{level}(\delta')$ denote the unique number $i < 2^m$ that satisfies, for every $j <m$, that the $(j+1)$th bit in the binary encoding of $i$ is $1$ if and only if \smash{$\delta' \in \pred{L}_j^\mathcal{I}$}.
Moreover, we let $\approx$ be the smallest equivalence relation containing $\pred{F}^\mathcal{I}$ and let $\Delta$ consist of the $\approx$-equivalence classes of $\Delta'$
. As in Definition~\ref{def:stacked-model},
let $\Precs=\{\pi_0,\ldots,\pi_{2^m-1}\}$. Obviously, $\sigma=\{*\mapsto \Precs\}$.
Finally, we set $\pred{P}^{\gamma(\pi_i)}$ to \pagebreak
\begin{itemize}
\item 
$\{[\delta']_{\approx} \mid \delta'{\,\in\,} \pred{P}^\mathcal{I}, \mathsf{level}(\delta'){\,=\,}i \}$ for unary $\pred{P} {\,\in\,}\Preds$,
\item
$\{([\delta'_1]_{\approx},[\delta'_2]_{\approx}) \mid (\delta'_1,\delta'_2){\,\in\,} \pred{P}^\mathcal{I}, \mathsf{level}(\delta_1'){\,=\,}\mathsf{level}(\delta_2'){\,=\,}i \}$ for binary $\pred{P} {\,\in\,}\Preds$.
\end{itemize}
Then, the bijection $\stacked:\Delta'\to\Delta \times \{0,\ldots,2^m-1\}$
defined via $\stacked(\delta') = ([\delta']_\approx,\mathsf{level}(\delta'))$ can be shown to constitute an isomorphism from $\mathcal{I}$ to $\mathcal{I}^{\kstruct}$.
%
%
%
%
\end{proof}

\subsection{Translating Formulae}\label{sec:translating-formulae}

So far, we have shown that the satisfiability of a frugal \SmonCtwo formula $\phi$ coincides with the existence of a structure $\kstruct$ of  size exponential in $|\phi|$ from which a model can be extracted. Furthermore, we demonstrated that such structures can be characterized in plain \Ctwo{} through their corresponding stacked interpretations. In this subsection, we leverage these results to define a translation from  frugal \SmonCtwo into plain \Ctwo{}, such that a frugal \SmonCtwo formula $\phi$ is satisfiable if and only if its translation into \Ctwo{} is satisfiable. Together with the translation from \SmonCtwo to frugal \SmonCtwo, this entails the upper complexity bound for all of \SmonCtwo. 

\begin{definition}\label{def:translation}
Given some $m\in \mathbb{N}$, we define the function $\mathsf{Trans}_m$ that maps frugal \SmonCtwo sentences $\phi$ over the signature $\tuple{\Preds \uplus \Preds_{\rigidp},\emptyset,\{*\}}$ with $\Preds_{\rigidp}=\{\rigidp_0,\ldots\rigidp_\ell\}$ into \Ctwo{} as follows: 
$\mathsf{Trans}_m(\phi)$ is the sentence $\forall x.\forall y.(x\,\dot{=}\,y \to \mathsf{tr}(\phi))$, where the function $\mathsf{tr}$ is recursively defined via
\newcommand{\mapsnarrow}{\!\!\!\mapsto\!\!\!}
\begin{eqnarray*}
\psi & {\mapsnarrow} & \psi \mbox{\ \ \ \ if $\psi$ is of the form $\pred{P}(z)$, $\pred{P}(z,z')$ or $z\,\dot{=}\,z'$ } \\
 \neg \psi & {\mapsnarrow} & \neg(\mathsf{tr}(\psi))\\
 \psi \wedge \psi' & {\mapsnarrow} & \mathsf{tr}(\psi) \wedge \mathsf{tr}(\psi')\\
 \exists^{\lhd n} z.\psi & {\mapsnarrow} & \exists^{\lhd n} z. (\phi^=_\mathtt{L}(x,y) \wedge  \mathsf{tr}(\psi))\\
  {\Diamond}_* \psi & {\mapsnarrow} & \forall z_\mathrm{nf}.x\,\dot{=}\,y \to  \exists z_{{\mathrm{mf}}}.\phi^=_\rigidp(x,y) \wedge \mathsf{tr}(\psi)\qquad
\end{eqnarray*}	
where $z,z' \in \{x,y\}$ and
\begin{itemize}
\item $z_\mathrm{nf}$ is a variable from $\{x,y\}$ that is not free in $\psi$ and $\{z_\mathrm{mf}\} = \{x,y\} \setminus \{z_\mathrm{nf}\}$ ,
\item 	
$\phi^=_\mathtt{L}(x,y)$ abbreviates $\bigwedge_{0\leq j< m} \pred{L}_j(x) \leftrightarrow \pred{L}_j(y)$, and
\item 
$\phi^=_\rigidp(x,y)$ abbreviates $\bigwedge_{0\leq i\leq \ell} \rigidp_i(x) \leftrightarrow \rigidp_i(y)$.\defend
\end{itemize}
\end{definition}

The key components of the translation are the handling of counting quantification and modal operators. The translation of an counting existential quantification employs the formula $\phi^=_\mathtt{L}(x,y)$ to ensure that quantification ranges only over elements belonging to the current layer of the stacked interpretation -- namely, those whose counterparts correspond to the domain elements at the current precisification. In contrast, the translation of modal subformulae of the form $\allstandd\psi$ makes use of $\phi^=_\rigidp(x,y)$ to ensure that quantification ranges over the elements belonging to the current $\rigidp$-type. Recall that $\mathcal{I}^{\kstruct}$ is constructed to mirror the structure $\kstruct$, from which in turn we obtain the model $\modelPR$. Consequently, if any element of the same $\rigidp$-type satisfies $\psi$, then there exist some permutation within $\modelPR$ that satisfies $\psi$ and thus the formula $\allstandd\psi$ is satisfied. Notice that when $\allstandd\psi$ is sentential, the variable assignment does not make a difference. The following Lemma formally establishes the discussed correspondence. 

\begin{lemma}\label[lemma]{lemma:permutational-closure-iff-stacked-model}
Let $\phi$ be a frugal \SmonCtwo sentence over the signature $\tuple{\Preds,\emptyset,\{*\}}$. Let $\kstruct=\tuple{\Dom,\Precs,\sigma,\gamma}$ be a standpoint structure for the signature $\tuple{\Preds \uplus \Preds_{\rigidp},\emptyset,\{*\}}$, with all predicates from $\Preds_{\rigidp}=\{\rigidp_0,\ldots,\rigidp_\ell\}$ rigid, and $|\Precs| = 2^m$
. Then,
$$\modelPR \models \phi \Longleftrightarrow \mathcal{I}_\kstruct \models \mathsf{Trans}_m(\phi).$$ 
\end{lemma}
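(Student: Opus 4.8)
The plan is to prove a more general statement by structural induction on the subformulae $\psi$ of $\phi$, relating pointed evaluations of $\modelPR$ to assignments in $\mathcal{I}^{\kstruct}$. The bridge is the observation that, by \Cref{def:perm-closure}, an element $\delta$ seen at a world $(\pr,f)$ of $\modelPR$ behaves exactly like $f^{-1}(\delta)$ at precisification $\pr$ of $\kstruct$, which (for $\pr=\pi_i$) is in turn mirrored by the stacked element $(f^{-1}(\delta),i)$ of $\mathcal{I}^{\kstruct}$. Accordingly, for a world $(\pi_i,f)$ and a base assignment $\varassign$, I set $\varassign^i_f(z)=(f^{-1}(\varassign(z)),i)$ and aim to show
\[
\modelPR,(\pi_i,f),\varassign \models \psi \quad\Longleftrightarrow\quad \mathcal{I}^{\kstruct},\varassign^i_f \models \mathsf{tr}(\psi)
\]
for every subformula $\psi$. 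The Lemma then follows at the top level: since $\phi$ is a sentence, \Cref{lemma:permutations-model-the-same} makes the permutation irrelevant, and the prefix $\forall x\forall y.(x\,\dot{=}\,y\to\ldots)$ of $\mathsf{Trans}_m(\phi)$ ranges precisely over the single-layer assignments $x\,{=}\,y\,{=}\,(\delta,i)$, i.e.\ over all precisifications, so $\modelPR\models\phi$ iff $\mathsf{tr}(\phi)$ holds at every layer iff $\mathcal{I}^{\kstruct}\models\mathsf{Trans}_m(\phi)$.

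The base and Boolean cases are immediate. For an atom $\pred{P}(\ldots)$ one unfolds the permutation-based interpretation of \Cref{def:perm-closure} against the layer-wise interpretation \ref{def-enum:stacked-unary-preds}/\ref{def-enum:stacked-binary-preds}; for binary atoms one additionally uses that $\mathcal{I}^{\kstruct}$ relates only same-layer elements, matching evaluation at a single precisification. Negation and conjunction follow directly from the induction hypothesis. For the counting quantifier $\exists^{\lhd n}z.\psi$, the conjunct $\phi^=_\mathtt{L}(x,y)$ in the translation confines the quantified variable to the layer $i$ carried by the non-quantified variable, so the elements counted in $\mathcal{I}^{\kstruct}$ are exactly the layer-$i$ copies $(f^{-1}(\delta),i)$ of the elements $\delta$ counted in $\modelPR$; as $f^{-1}$ is a bijection of $\Dom$, the two cardinalities agree and the threshold $\lhd n$ is preserved.

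The diamond case is the crux. For the forward direction, a witnessing world $(\pi_{i'},f')$ with $\modelPR,(\pi_{i'},f'),\varassign\models\psi$ yields, by the induction hypothesis, that $\mathsf{tr}(\psi)$ holds in $\mathcal{I}^{\kstruct}$ for the stacked element $(f'^{-1}(\delta),i')$, where $\delta=\varassign(z_\mathrm{mf})$; since the permutations in $\mathbb{P}_{\rigidp}$ and the rigidity of the predicates in $\Preds_{\rigidp}$ preserve $\rigidp$-types, this element shares its $\rigidp$-type with the element $(f^{-1}(\delta),i)$ frozen onto $z_\mathrm{nf}$ by the $\forall z_\mathrm{nf}.x\,\dot{=}\,y$ prefix, so it satisfies $\phi^=_\rigidp(x,y)$ and witnesses the translated existential. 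For the backward direction, a stacked witness $(\delta'',i')$ of matching $\rigidp$-type must be turned into a genuine alternative world: because $\delta''$ and $\delta$ have the same $\rigidp$-type and $\mathbb{P}_{\rigidp}$ contains \emph{all} type-preserving permutations, I may pick $f'\in\mathbb{P}_{\rigidp}$ with $f'(\delta'')=\delta$, whence $\varassign^{i'}_{f'}(z_\mathrm{mf})=(\delta'',i')$ and the induction hypothesis converts the stacked witness back into $\modelPR,(\pi_{i'},f'),\varassign\models\psi$. Closure under the full type-preserving permutation group, together with \Cref{lemma:permutations-model-the-same}, is precisely what makes both directions go through.

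The step I expect to be the main obstacle is the bookkeeping of layers across the modal operator. The translation keeps the current precisification implicit in the layer of the ``anchor'' variable, and the diamond clause deliberately freezes $z_\mathrm{nf}$ at the old layer while relocating $z_\mathrm{mf}$ to the new one; one must verify that, throughout $\mathsf{tr}(\psi)$, the layer actually consulted is the new one $i'$. Monodicity is essential here, since $\psi$ has at most one free variable and hence the stale $z_\mathrm{nf}$ is never read before being re-quantified (at which point $\phi^=_\mathtt{L}$ re-anchors it to $z_\mathrm{mf}$'s layer $i'$); but making this precise requires either strengthening the inductive invariant to record which variable currently anchors the layer, or assuming a mild normal form (renaming bound variables so that the first quantifier entered after the modality rebinds $z_\mathrm{nf}$ and thereby restores the anchor). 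Pinning down this invariant so that $\phi^=_\mathtt{L}$ and $\phi^=_\rigidp$ each fire on the intended variable is the delicate part of the argument.
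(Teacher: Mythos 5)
Your proof follows essentially the same route as the paper's: the same strengthened inductive claim relating evaluation at a pointed world $(\pi_i,f)$ of $\modelPR$ to a stacked assignment on layer $i$ of $\mathcal{I}^{\kstruct}$ (your use of $f^{-1}$ rather than $f$ is just the other, arguably more literal, reading of \Cref{def:perm-closure} and is immaterial since $\mathbb{P}_\rigidp$ is closed under inverses), the reduction to the identity permutation via \Cref{lemma:permutations-model-the-same}, and the same use of $\rigidp$-type preservation together with closure under \emph{all} type-preserving permutations in the two directions of the diamond case. The layer-bookkeeping subtlety you flag at the end is genuine -- the paper's appendix dispatches it by asserting that $z_\mathrm{nf}$ is not free in $\mathsf{tr}(\psi)$, which is only true up to renaming bound variables because the inserted $\phi^=_\mathtt{L}$ conjuncts can reintroduce the other variable -- so your proposed fix (a strengthened invariant recording which variable anchors the layer, or a bound-variable normal form) is exactly what is needed to make that step fully rigorous.
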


\begin{proof}[Proof Sketch]
Toward the result, we first prove the claim that, for $\pr_i\in\Precs$, $f\in\mathbb{P}_{\rigidp}$ and a variable assignment $\varassign$, we have $\modelPR
,(\pr_i,f),\varassign\models\phi$ iff $\mathcal{I}_\kstruct,\varassign' \models \mathsf{tr}(\phi)$ where $\varassign'(z)=(f(\varassign(z)),i)$ for $z\in\{x,y\}$. Due to \Cref{lemma:permutations-model-the-same}, it suffices to show that $\modelPR
,(\pr_i,f_{id}),\varassign\models\phi$ iff $\mathcal{I}_\kstruct,\varassign' \models \mathsf{tr}(\phi)$ where $\varassign'(z)=(\varassign(z),i)$
; this follows by structural induction on $\phi$.

Toward the statement of the Lemma, assume $\modelPR \models \phi$, thus $\modelPR
,(\pr_i,f),\varassign\models\phi$ holds for all $\pr_i\in\Precs$, $f\in\mathbb{P}_{\rigidp}$ and assignments $\varassign$. Then, $\mathcal{I}_\kstruct,\varassign' \models \mathsf{tr}(\phi)$ where $\varassign'(z)=(\varassign(z),i)$, thus $\mathcal{I}_\kstruct,\varassign' \models \mathsf{tr}(\phi)$ for all $\varassign'$ where $x$ and $y$ have equal index, thus $\mathcal{I}_\kstruct \models \mathsf{Trans}_m(\phi)$. The converse direction proceeds similarly.
%
\end{proof}

\begin{example}
Revisiting \Cref{example}, we compute $\mathsf{Trans}_2(E)$.\\ With some simplifications, we obtain the following:

\medskip
\newcommand{\nar}{=}
\noindent$\begin{array}{@{}r@{\ }l@{\ }l@{}}
\mathsf{Trans}_2(E) & \nar &\forall x.\forall y.x\,\dot{=}\,y \to \mathsf{tr}(E) \\[3pt]
\mathsf{tr}(E)    & \nar & \exists^{=1} x. ( \phi^=_\mathtt{L}(x,y) \wedge \mathsf{tr}(E_0)) \\[0pt]
                  &   & \ \wedge\ \forall x.\left( \phi^=_\mathtt{L}(x,y) \to \mathsf{tr}(E_1)\right) \wedge \mathsf{tr}(E_2)\\[3pt] 
\mathsf{tr}(E_0)  & \nar & \exists y.\, x \dot{=} y \wedge \forall x.\, \phi^=_\rigidp(x,y) \to \pred{Good}(x)\\[3pt]
\mathsf{tr}(E_1)  & \nar &  \mathsf{tr}(E_0) \vee \forall y. x \dot{=} y \to (\exists x.\, \phi^=_\rigidp(x,\!y)\, {\wedge}\\
  &  &
\pred{Best}(x) {\wedge}  \forall y.\, \phi^=_\mathtt{L}(x,y)\! \to \!\left( \pred{Best}(y) \!\leftrightarrow \!x \dot{=} y 
\right))\\[3pt]
\mathsf{tr}(E_2)  & \nar &  \forall x.\,  x \dot{=} y  \to \exists y.\big(\phi^=_\rigidp(x,\!y) \wedge \\[-0pt]
                  &   & \ \forall x. \phi^=_\mathtt{L}(x,y) \to  \left( \pred{Good}(x) \vee \pred{Best}(x) \right)\big)
\end{array}$\\

%
%
%
\noindent One may verify that the structure in \Cref{figure:models}(4) indeed satisfies $\mathsf{Trans}_2(E)$. Roughly, there exists $x$, e.g., $(\delta_0,0)$, s.t. all elements of its $\pred{E}$-Type, i.e. $(\delta_0,0)$ and $(\delta_0,1)$, are $\pred{Good}$ ($E_0$), and for all elements, either they satisfy  $E_0$ (like $(\delta_0,0)$ and $(\delta_0,1)$) or there is some element of their $\pred{E}$-Type (e.g., $(\delta_1,0)$ for $\{\pred{E}_1\}$ and $(\delta_2,1)$ for $\{\pred{E}_0,\pred{E}_1\}$) which is the only $\pred{Best}$ element on their layer ($E_1$). Finally there is some element (e.g., $(\delta_0,0)$) such that all elements on its layer are $\pred{Good}$ ($E_3$).\defend
\end{example}

The last ingredient for our satisfiability translation is to ensure that the predicates in $\Preds_{\rigidp}$ are indeed rigid. 

\begin{definition}
We let $\phi^\ell_{rig\rigidp}$ denote the \Ctwo{} sentence $\forall x.\forall y.\mathtt{F}(x,y) \to \bigwedge_{0\leq i\leq \ell} \rigidp_i(x)\leftrightarrow \rigidp_i(y)$.
\end{definition}

\begin{lemma}\label[lemma]{lemma:permutational-closure-iff-rigid}
Let $\kstruct=\tuple{\Dom,\Precs,\sigma,\gamma}$ be a \foss for the signature $\tuple{\Preds \uplus \Preds_{\rigidp},\emptyset,\{*\}}$. Then, all predicates from $\Preds_{\rigidp}=\{\rigidp_0,\ldots,\rigidp_\ell\}$ are rigid iff
$ \mathcal{I}_\kstruct \models \phi^\ell_{rig\rigidp}$. 
\end{lemma}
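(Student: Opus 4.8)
The plan is to reduce both sides of the biconditional to a single combinatorial condition on the family of interpretations $(\rigidp_i^{\gamma(\pi_j)})_{i,j}$, namely agreement of $\Preds_{\rigidp}$-membership across precisifications, and then to read off that $\phi^\ell_{rig\rigidp}$ captures exactly this condition when evaluated over the stacked interpretation $\mathcal{I}_\kstruct$.

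First I would unfold the semantics of $\phi^\ell_{rig\rigidp}$ in $\mathcal{I}_\kstruct$ using the definition of the stacked interpretation. Since each $\rigidp_i$ is unary, clause \ref{def-enum:stacked-unary-preds} gives $(\delta,j) \in \rigidp_i^{\mathcal{I}_\kstruct}$ iff $\delta \in \rigidp_i^{\gamma(\pi_j)}$; and by clause \ref{def-enum:stacked-F}, the only pairs in $\pred{F}^{\mathcal{I}_\kstruct}$ are those of the form $((\delta,j),(\delta,j{+}1))$ with $\delta \in \Delta$ and $0 \le j < 2^m-1$. Substituting this into the sentence, $\mathcal{I}_\kstruct \models \phi^\ell_{rig\rigidp}$ holds exactly when, for every $\delta \in \Delta$, every $0 \le j < 2^m-1$ and every $0 \le i \le \ell$, we have $\delta \in \rigidp_i^{\gamma(\pi_j)}$ iff $\delta \in \rigidp_i^{\gamma(\pi_{j+1})}$; equivalently, $\rigidp_i^{\gamma(\pi_j)} = \rigidp_i^{\gamma(\pi_{j+1})}$ for all $i$ and all consecutive indices $j,j{+}1$.

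With this reformulation in hand, the two directions are immediate. For the ``only if'' direction, if every $\rigidp_i$ is rigid then by definition $\rigidp_i^{\gamma(\pi_j)} = \rigidp_i^{\gamma(\pi_{j'})}$ for all precisifications, in particular for consecutive ones, so the reformulated condition holds and $\mathcal{I}_\kstruct \models \phi^\ell_{rig\rigidp}$. For the ``if'' direction, I would use that the precisifications are linearly ordered as $\pi_0,\ldots,\pi_{2^m-1}$ and that consecutive equality $\rigidp_i^{\gamma(\pi_j)} = \rigidp_i^{\gamma(\pi_{j+1})}$ propagates by a trivial induction on $|j-j'|$ to full equality $\rigidp_i^{\gamma(\pi_j)} = \rigidp_i^{\gamma(\pi_{j'})}$ for all $j,j'$, which is precisely rigidity of $\rigidp_i$.

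There is no real obstacle here; the statement is essentially a bookkeeping exercise. The only point deserving care is that the $\pred{F}$-chains of the stacked interpretation connect consecutively indexed copies of a fixed $\delta$ and, because the indices run through the whole interval $\{0,\ldots,2^m-1\}$, they form a single connected chain per element. This completeness of the chain is exactly what lets local (consecutive) agreement of $\Preds_{\rigidp}$-membership be upgraded to global agreement across all precisifications; were $\pred{F}$ to skip indices, the ``if'' direction would fail. I would therefore make the linear-order/transitivity step explicit, as it is the only substantive content of the argument.
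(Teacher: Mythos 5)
Your proof is correct and follows essentially the same route as the paper's: both unfold the stacked interpretation to see that $\pred{F}$-edges connect exactly the consecutive copies $(\delta,j),(\delta,j{+}1)$ of each element, reduce satisfaction of $\phi^\ell_{rig\rigidp}$ to agreement of $\Preds_{\rigidp}$-membership between consecutive precisifications, and then pass between that and full rigidity. Your explicit remark that consecutive agreement must be propagated along the whole chain by transitivity is in fact a point the paper's own proof glosses over with a bare ``therefore,'' so no changes are needed.
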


\begin{theorem}\label{thm:satcorrespondence}
Let $\phi$ be an arbitrary frugal \SmonCtwo sentence with $|\mathit{FreeDia}_{\phi}|=\ell$ and $\lceil |Dia_{\phi}| + \log_2 (|Dia_{\phi}|) \rceil=m$.
Then $\phi$ is satisfiable iff $\phi^m_\mathrm{stack} \wedge \phi^\ell_{rig\rigidp} \wedge \mathsf{Trans}(\phi)$ is satisfiable.   
\end{theorem}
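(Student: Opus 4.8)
The plan is to assemble the machinery of this section, since \Cref{thm:satcorrespondence} is essentially the conjunction of \Cref{theorem:sat-in-permutational-closure}, \Cref{lemma:stacked-model-satisfies-stack-formula}, \Cref{lemma:permutational-closure-iff-stacked-model}, \Cref{lemma:permutational-closure-iff-rigid}, and \Cref{lemma:model-satisfies-stack-formula-if-isomorphic-to-stacked-model}. The one arithmetical fact gluing everything together is that the chosen exponent $m=\lceil |Dia_{\phi}| + \log_2(|Dia_{\phi}|)\rceil$ satisfies $2^m \geq |Dia_{\phi}|\cdot 2^{|Dia_{\phi}|}$, which I would check at the outset via $2^m \geq 2^{|Dia_{\phi}| + \log_2 |Dia_{\phi}|} = 2^{|Dia_{\phi}|}\cdot |Dia_{\phi}|$. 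Consequently, the precisification bound $|\Precs| \leq |Dia_{\phi}|\cdot 2^{|Dia_{\phi}|}$ produced by \Cref{theorem:sat-in-permutational-closure} never exceeds the target count $2^m$ demanded by the stacked-interpretation construction. I then prove the two directions separately, writing $\mathsf{Trans}(\phi)$ for $\mathsf{Trans}_m(\phi)$ with this $m$.

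For the forward direction, assume $\phi$ is satisfiable. By \Cref{theorem:sat-in-permutational-closure} there is a standpoint structure $\kstruct$ over $\tuple{\Preds \uplus \Preds_{\rigidp},\emptyset,\{*\}}$ with $|\Preds_{\rigidp}|=\ell$, with $\Preds_{\rigidp}$ rigid (their extensions being the global member sets $(\allstandd\phi_i)^{\kstruct}$), and with $|\Precs| \leq |Dia_{\phi}|\cdot 2^{|Dia_{\phi}|} \leq 2^m$, such that $\modelPR \models \phi$. To meet the exact cardinality $|\Precs|=2^m$ required downstream, I would pad $\kstruct$ by adjoining $2^m - |\Precs|$ duplicate copies of an arbitrary precisification; this preserves both the rigidity of $\Preds_{\rigidp}$ and the fact that the permutational closure models $\phi$ (discussed as the obstacle below). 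With $|\Precs|=2^m$ and $\Preds_{\rigidp}$ rigid, \Cref{lemma:permutational-closure-iff-stacked-model} yields $\mathcal{I}^{\kstruct} \models \mathsf{Trans}(\phi)$, \Cref{lemma:stacked-model-satisfies-stack-formula} yields $\mathcal{I}^{\kstruct} \models \phi^m_\mathrm{stack}$, and \Cref{lemma:permutational-closure-iff-rigid} yields $\mathcal{I}^{\kstruct} \models \phi^\ell_{rig\rigidp}$; hence $\mathcal{I}^{\kstruct}$ is a single first-order model of the whole conjunction.

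For the backward direction, assume $\phi^m_\mathrm{stack} \wedge \phi^\ell_{rig\rigidp} \wedge \mathsf{Trans}(\phi)$ has a model $\mathcal{I}$. Since $\mathcal{I} \models \phi^m_\mathrm{stack}$, \Cref{lemma:model-satisfies-stack-formula-if-isomorphic-to-stacked-model} supplies a standpoint structure $\kstruct$ over $\tuple{\Preds \uplus \Preds_{\rigidp},\emptyset,\{*\}}$ with $2^m$ precisifications such that $\mathcal{I}$ is isomorphic to $\mathcal{I}^{\kstruct}$. As satisfaction of first-order sentences is invariant under isomorphism, $\mathcal{I}^{\kstruct}$ also models $\phi^\ell_{rig\rigidp}$ and $\mathsf{Trans}(\phi)$. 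From the former, \Cref{lemma:permutational-closure-iff-rigid} gives that $\Preds_{\rigidp}$ is rigid in $\kstruct$, which together with $|\Precs|=2^m$ licenses \Cref{lemma:permutational-closure-iff-stacked-model}; applying it to $\mathcal{I}^{\kstruct} \models \mathsf{Trans}(\phi)$ yields $\modelPR \models \phi$, so $\phi$ is satisfiable.

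The main obstacle I anticipate is the padding step in the forward direction: \Cref{theorem:sat-in-permutational-closure} only guarantees $|\Precs| \leq 2^m$, whereas both \Cref{lemma:permutational-closure-iff-stacked-model} and the stacked-interpretation construction need exactly $2^m$ precisifications. I would justify it by observing that duplicating one precisification $\pr^*$ of $\kstruct$ duplicates, for every permutation $f \in \mathbb{P}_{\rigidp}$, the closure world $(\pr^*,f)$ by a locally identical one; thus the closure of the padded structure is just $\modelPR$ with some worlds duplicated, and adjoining isomorphic copies of worlds preserves satisfaction of the S5 sentence $\phi$ — precisely the argument underlying the unnumbered padding lemma stated after \Cref{def:semantics}. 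Rigidity is preserved since the copies reuse the same interpretation. Beyond this, the only remaining care is bookkeeping: confirming that the signature denoted $\Preds$ in \Cref{lemma:model-satisfies-stack-formula-if-isomorphic-to-stacked-model} is read here as the full $\Preds \uplus \Preds_{\rigidp}$, and that the degenerate case $|Dia_{\phi}|=0$ (a single precisification, with $Dia_{\phi}$ empty) is consistent with the definition of $m$.
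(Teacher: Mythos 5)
Your proof is correct and follows essentially the same route as the paper's: both directions chain together \Cref{theorem:sat-in-permutational-closure}, \Cref{lemma:stacked-model-satisfies-stack-formula}, \Cref{lemma:model-satisfies-stack-formula-if-isomorphic-to-stacked-model}, \Cref{lemma:permutational-closure-iff-stacked-model} and \Cref{lemma:permutational-closure-iff-rigid} in the same order, with the same arithmetic check that $2^m \geq |Dia_{\phi}|\cdot 2^{|Dia_{\phi}|}$. The one point where you go beyond the paper is the explicit padding of $\Precs$ up to exactly $2^m$ precisifications -- a step the paper's own proof silently elides even though \Cref{theorem:sat-in-permutational-closure} only bounds $|\Precs|$ from above while the stacked-interpretation machinery requires exactly $2^m$ -- and your justification (duplicating a precisification of $\kstruct$ only adds locally identical worlds to the closure, which cannot affect satisfaction of an S5 sentence and preserves rigidity) is sound and welcome.
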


\begin{proof}
Assume that $\phi$ is satisfiable. Then by \Cref{theorem:sat-in-permutational-closure} there is a standpoint structure $\kstruct$ over the signature $\tuple{\Preds \uplus \Preds_{\rigidp},\emptyset,\emptyset}$ with
$|\Preds_{\rigidp}| = \ell = |\mathit{FreeDia}_{\phi}|$
and $|\Precs| \leq |Dia_{\phi}|\cdot 2^{|Dia_{\phi}|}$,
such that $\modelPR$ is a model of $\phi$. Then, from \Cref{lemma:permutational-closure-iff-stacked-model}, we have that $\mathcal{I}_\kstruct \models \mathsf{Trans}(\phi)$. Moreover, from \Cref{lemma:stacked-model-satisfies-stack-formula}, $\mathcal{I}_\kstruct \models \phi^m_\mathrm{stack}$. Finally, by \Cref{def:perm-closure}, all predicates from $\Preds_{\rigidp}$ are rigid and thus by \Cref{lemma:permutational-closure-iff-rigid}
$\mathcal{I}_\kstruct \models \phi^\ell_{rig\rigidp}$.

For the other direction, assume that there is a model $\mathcal{I}$ over the signature $\tuple{\{\pred{F},\pred{L}_0, \ldots, \pred{L}_{m-1}\} \uplus \Preds_{\rigidp}\uplus\Preds,\emptyset}$ such that $\mathcal{I}\models\phi^m_\mathrm{stack} \wedge \phi^\ell_{rig\rigidp} \wedge \mathsf{Trans}(\phi)$. Then, by \Cref{lemma:model-satisfies-stack-formula-if-isomorphic-to-stacked-model}, $\mathcal{I}$ is isomorphic to a stacked interpretation $\mathcal{I}^{\kstruct}$ of some  standpoint structure $\kstruct$ over $\tuple{\Preds \uplus \Preds_{\rigidp},\emptyset,\emptyset}$  with $2^m$ precisifications. Moreover, since $\mathcal{I}\models\phi^\ell_{rig\rigidp}$ then by \Cref{lemma:permutational-closure-iff-rigid} the predicates in $\Preds_{\rigidp}$ are rigid. And since $\mathcal{I}^{\kstruct}\models\mathsf{Trans}(\phi)$, then by \Cref{lemma:permutational-closure-iff-stacked-model}, we have that $\modelPR\models\phi$ as desired.
\end{proof}

Therefore (and taking into account \Cref{sec:transformations}), there is a polytime equisatisfiable translation from \SmonCtwo to plain \Ctwo{}. On the other hand, every plain \Ctwo{} formula is \SmonCtwo, thus the below corollary follows from the known \NExpTime completeness of plain \Ctwo{} \cite{Pratt-Hartmann05}.

\begin{corollary}
Satisfiability in monodic standpoint \Ctwo{} is  \NExpTime-complete. 
\end{corollary}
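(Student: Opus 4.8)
The plan is to prove \NExpTime-completeness by bracketing \SmonCtwo satisfiability between plain \Ctwo{} satisfiability on both sides, and then appealing to the known \NExpTime-completeness of plain \Ctwo{} \cite{Pratt-Hartmann05}. For the upper bound I would assemble a polynomial-time, polynomial-size, equisatisfiability-preserving reduction from arbitrary \SmonCtwo sentences to plain \Ctwo{} sentences out of the machinery already developed; for the matching lower bound I would observe that plain \Ctwo{} is syntactically a special case of \SmonCtwo.

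For the upper bound, let $\phi$ be an arbitrary \SmonCtwo sentence. First I would apply the three transformations of \Cref{sec:transformations} in sequence, namely $\mathsf{S5}$, then $\mathsf{NF}$, then $\mathsf{CF}$. Each is equisatisfiability-preserving, runs in polynomial time with polynomial-size output, and maintains \Ctwo{}-ness and monodicity (and, once introduced, S5-ness, nullary-freeness and constant-freeness). This yields an equisatisfiable frugal \SmonCtwo sentence $\phi'$ with $|\phi'|$ polynomial in $|\phi|$. Next I would set $\ell = |\mathit{FreeDia}_{\phi'}|$ and $m = \lceil |Dia_{\phi'}| + \log_2 |Dia_{\phi'}| \rceil$ and build the plain \Ctwo{} sentence $\Psi = \phi^m_\mathrm{stack} \wedge \phi^\ell_{rig\rigidp} \wedge \mathsf{Trans}(\phi')$. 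By \Cref{thm:satcorrespondence}, $\phi'$ --- and therefore $\phi$ --- is satisfiable iff $\Psi$ is satisfiable. Since plain \Ctwo{} satisfiability lies in \NExpTime, this places \SmonCtwo satisfiability in \NExpTime, provided $\Psi$ has size polynomial in $|\phi|$.

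For the lower bound I would simply note that every plain \Ctwo{} sentence already belongs to \SmonCtwo: it uses only the variables $x$ and $y$ together with predicates of arity at most two, and, containing no occurrence of $\Diamond$, it is vacuously monodic. Hence the identity map reduces plain \Ctwo{} satisfiability to \SmonCtwo satisfiability, and the \NExpTime-hardness of the former transfers to the latter. Combined with the upper bound, this establishes \NExpTime-completeness.

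The one step deserving genuine care --- and the place I expect the argument to be most delicate --- is confirming that $\Psi$ is of polynomial size, as otherwise the reduction would escape \NExpTime. The key observation is that, although the stacked interpretation underlying \Cref{thm:satcorrespondence} has $2^m$ precisifications, the parameter $m$ is itself only linear in $|\phi'|$: the precisification bound $|Dia_{\phi'}| \cdot 2^{|Dia_{\phi'}|}$ from \Cref{theorem:sat-in-permutational-closure} is stored in binary using merely the $m$ level predicates $\pred{L}_0,\ldots,\pred{L}_{m-1}$. Consequently $\phi^m_\mathrm{stack}$ and $\phi^\ell_{rig\rigidp}$ are polynomial in $m$, $\ell$ and the number of predicates, while the recursive translation $\mathsf{Trans}(\phi')$ enlarges each subformula of $\phi'$ by only an additive $O(m+\ell)$ factor coming from $\phi^=_\mathtt{L}$ and $\phi^=_\rigidp$; thus $|\Psi| = O(|\phi'| \cdot (m+\ell))$ remains polynomial in $|\phi|$. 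The remaining obligations --- that the whole composition is computable in polynomial time and stays equisatisfiable --- are routine given the results already established.
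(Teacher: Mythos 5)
Your proposal is correct and follows essentially the same route as the paper: reduce to a frugal formula via the transformations of \Cref{sec:transformations}, invoke \Cref{thm:satcorrespondence} for the equisatisfiable plain \Ctwo{} sentence, and obtain hardness from the observation that plain \Ctwo{} is a syntactic fragment of \SmonCtwo. Your explicit accounting of why $m$ is only linear in $|\phi'|$ and why $\Psi$ stays polynomial is a welcome elaboration of a step the paper leaves implicit, but it is not a different argument.
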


Our equisatisfiable ``standpoint removal'' technique turns out to be robust under some variations. Let us call a monodic standpoint \Ctwo{} formula $\phi$ \emph{finitely satisfiable} if it has a model $\kstruct=\tuple{\Dom,\Precs,\sigma,\gamma}$ where $\Delta$ is finite. It is easy to see that all equisatisfiable transformations in \Cref{{sec:transformations}} are also ``equi-finitely-satisfiable'', because the underlying model transformations do not alter the domain whatsoever; the same holds for the argument behind \Cref{theorem:sat-in-permutational-closure}. Last not least, the domain $\Delta'$ of the stacked interpretation $\mathcal{I}^{\kstruct} = (\Delta', \cdot^\mathcal{I})$ corresponding to a structure $\kstruct=\tuple{\Dom,\Precs,\sigma,\gamma}$ is finite whenever $\Delta$ is (by the construction of \Cref{def:stacked-model}, we get $|\Delta'| = |\Delta|\cdot 2^m$). Thus the correspondence established in \Cref{thm:satcorrespondence} also holds for finite satisfiability. On the other hand, finite satisfiability of plain \Ctwo{} is also known to be \NExpTime-complete \cite{Pratt-Hartmann05}, thus we obtain the following result.

\begin{corollary}
Finite satisfiability in monodic standpoint \Ctwo{} is \NExpTime-complete. 
\end{corollary}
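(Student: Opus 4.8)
The plan is to observe that the entire polytime reduction chain built up in \Cref{sec:transformations} and \Cref{sec:sat-in-monodic-c2} not only preserves satisfiability but in fact preserves \emph{finite} satisfiability, in both directions. Consequently, deciding finite satisfiability of a \SmonCtwo formula $\phi$ reduces in polynomial time to deciding finite satisfiability of the plain \Ctwo{} formula $\phi^m_\mathrm{stack} \wedge \phi^\ell_{rig\rigidp} \wedge \mathsf{Trans}(\phi)$. Membership in \NExpTime{} then follows from the known \NExpTime-membership of finite \Ctwo-satisfiability \cite{Pratt-Hartmann05}, while \NExpTime-hardness is inherited from plain \Ctwo{} (which is syntactically a fragment of \SmonCtwo) via the same reference.

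To make this precise I would re-inspect each link of the chain and check that it is ``equi-finitely-satisfiable''. First, the frugalization transformations $\mathsf{S5}$, $\mathsf{NF}$, and $\mathsf{CF}$ of \Cref{sec:transformations} leave the domain $\Delta$ of a model untouched, so each maps finite-domain models to finite-domain models and back. Second, the construction behind \Cref{theorem:sat-in-permutational-closure} starts from an arbitrary model and only deletes and duplicates precisifications while keeping $\Delta$ fixed; hence if the original model has a finite domain, then so do $\kstruct$ and its permutational closure $\modelPR$. Third, by \Cref{def:stacked-model} the stacked interpretation has domain $\Delta' = \Delta \times \{0,\ldots,2^m-1\}$, so $|\Delta'| = |\Delta|\cdot 2^m$ is finite exactly when $\Delta$ is.

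For the converse direction I would use the backward implication of \Cref{lemma:model-satisfies-stack-formula-if-isomorphic-to-stacked-model}: given a finite first-order model $\mathcal{I}$ of $\phi^m_\mathrm{stack}$, the recovered standpoint structure $\kstruct$ has domain $\Delta$ consisting of the $\approx$-equivalence classes of the (finite) domain $\Delta'$ of $\mathcal{I}$, which is again finite. Combined with \Cref{lemma:permutational-closure-iff-stacked-model}, whose argument is insensitive to the cardinality of $\Delta$, this yields a finite-domain model $\modelPR$ of $\phi$. Chaining these observations through \Cref{thm:satcorrespondence} then establishes that $\phi$ is finitely satisfiable iff $\phi^m_\mathrm{stack} \wedge \phi^\ell_{rig\rigidp} \wedge \mathsf{Trans}(\phi)$ is, which completes the reduction.

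I expect no genuine mathematical obstacle here: all the heavy lifting has already been done for unrestricted satisfiability, and the remaining argument is a bookkeeping exercise tracking the cardinality of $\Delta$ along the reduction. The single point that deserves explicit attention is the backward direction, where one must confirm that passing to $\approx$-equivalence classes of a finite set produces a finite domain, so that a finite \Ctwo-model genuinely yields a finite-\emph{domain} standpoint structure rather than merely a structure with finitely many precisifications.
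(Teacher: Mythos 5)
Your proposal is correct and follows essentially the same route as the paper: the paper likewise observes that the transformations of \Cref{sec:transformations} and the construction behind \Cref{theorem:sat-in-permutational-closure} leave the domain untouched, that the stacked interpretation has domain of size $|\Delta|\cdot 2^m$, and then invokes the known \NExpTime-completeness of finite satisfiability for plain \Ctwo{}. Your additional explicit check of the backward direction via the $\approx$-equivalence classes is a sensible piece of bookkeeping that the paper leaves implicit.
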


Last not least, more recently \citeauthor{ultimatelyperiodic} (\citeyear{ultimatelyperiodic}) considered two-variable FO with a more expressive version of counting quantifiers, denoted $\exists^S$, where $S$ is any semilinear subset of $\mathbb{N} \cup \{\infty\}$. For example, by means of such quantifiers one can express quantities like ``evenly many $x$'' or also ``infinitely many $x$'', which go beyond what can be stated by the counting quantifiers of \Ctwo. Satisfiability of the ensuing logic, denoted $\mathrm{FO}^{\smash{2}}_\mathrm{Pres}$ was established to be decidable in \NTwoExpTime and \NExpTime hard. We note that our definitions, constructions, and arguments seamlessly extend from \Ctwo{} to this logic, leading to the subsequent corollary.

\begin{corollary}
Satisfiability and finite satisfiability in monodic standpoint $\mathrm{FO}^{\smash{2}}_\mathrm{Pres}$ is in \NTwoExpTime and hard for \NExpTime. 
\end{corollary}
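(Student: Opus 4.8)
The plan is to argue that every component of the ``standpoint removal'' pipeline built in \Cref{sec:transformations,sec:sat-in-monodic-c2} survives the replacement of the counting quantifier $\exists^{\lhd n}$ by the semilinear quantifier $\exists^S$, so that one again obtains a polynomial, equisatisfiable (and equi-finitely-satisfiable) translation from monodic standpoint $\mathrm{FO}^{2}_\mathrm{Pres}$ into plain $\mathrm{FO}^{2}_\mathrm{Pres}$. The complexity bounds then drop out by composing this reduction with the bounds already known for the target logic, together with the observation that plain \Ctwo{} -- a syntactic fragment -- is already \NExpTime-hard. Concretely I would re-run each ingredient and check that nothing used the particular shape of the cardinality constraint $\lhd n$.

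The frugal normalisation of \Cref{theorem:fosl-equisatisfiable-to-S5,theorem:trans-nullary-free,theorem:trans-constant-free} only ever introduces the quantifiers $\exists^{=0}$, $\exists^{=1}$ and $\exists^{\geq 1}$, each of which is an $\exists^S$ quantifier for a suitable semilinear $S$; hence these transformations, their polynomiality, and their preservation of monodicity and two-variable-ness carry over unchanged. The permutational machinery of \Cref{sec:permutational-representatives} is likewise insensitive to the generalisation: \Cref{lemma:permutations-model-the-same} and \Cref{theorem:sat-in-permutational-closure} rely only on the facts that permutations are bijections -- so they preserve the \emph{cardinality} of any definable set and therefore the satisfaction of any $\exists^S$ -- and that the number of precisifications one must retain is governed by $|Dia_{\phi}|$, a quantity blind to the form of the quantifiers. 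The stacked-interpretation development of \Cref{sec:stacked-models} needs no change at all, since $\phi^m_\mathrm{stack}$ is itself a plain \Ctwo{} sentence and \Cref{lemma:model-satisfies-stack-formula-if-isomorphic-to-stacked-model} is a statement purely about the shape of $\mathcal{I}^{\kstruct}$, independent of the object formula $\phi$.

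The single genuinely new point is the translation of \Cref{def:translation}, which I would extend by the clause $\exists^S z.\psi \mapsto \exists^S z.(\phi^=_\mathtt{L}(x,y) \wedge \mathsf{tr}(\psi))$, leaving the modal clause -- which never touches the counting quantifier -- exactly as it stands. The correctness induction underlying \Cref{lemma:permutational-closure-iff-stacked-model} then goes through unchanged, because the relativiser $\phi^=_\mathtt{L}(x,y)$ confines the bound variable to the current layer, whose elements are in bijection (via the level map and the permutation $f$) with the domain elements at the corresponding precisification. Since a bijection preserves cardinalities exactly, the number of layer-witnesses of $\mathsf{tr}(\psi)$ lies in $S$ if and only if the number of witnesses of $\psi$ at that precisification does; crucially, this equivalence remains valid when $S$ contains $\infty$, as bijections preserve infinite cardinalities too. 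With the rigidity sentence $\phi^\ell_{rig\rigidp}$ and the formula-independent lemmas behind \Cref{thm:satcorrespondence} unchanged, one obtains the exact analogue of \Cref{thm:satcorrespondence}: $\phi$ is (finitely) satisfiable iff $\phi^m_\mathrm{stack} \wedge \phi^\ell_{rig\rigidp} \wedge \mathsf{Trans}(\phi)$ is, with $m$ and $\ell$ chosen as before and the whole translation polynomial in $|\phi|$.

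Finally I would cash out the complexity. As the reduction is polynomial, membership of (finite) satisfiability of $\mathrm{FO}^{2}_\mathrm{Pres}$ in \NTwoExpTime lifts to the monodic standpoint version, since a polynomial blow-up of the input leaves a doubly-exponential bound doubly-exponential; \NExpTime-hardness is inherited because every plain $\mathrm{FO}^{2}_\mathrm{Pres}$ sentence (in particular every plain \Ctwo{} sentence) is already a monodic standpoint $\mathrm{FO}^{2}_\mathrm{Pres}$ sentence. The step demanding the most care is the interaction of $\exists^S$ with the finite-satisfiability transfer: I must confirm both that the stacked construction preserves finiteness -- indeed $|\Delta'| = |\Delta|\cdot 2^m$, so $\Delta'$ is finite exactly when $\Delta$ is -- and that the $\infty$-quantifier behaves correctly over finite models, so that the finite-satisfiability correspondence holds and can be combined with the matching finite-satisfiability bound for $\mathrm{FO}^{2}_\mathrm{Pres}$.
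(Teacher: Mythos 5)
Your proposal is correct and follows essentially the same route as the paper, which itself only remarks that all definitions, constructions, and arguments extend seamlessly from \Ctwo{} to $\mathrm{FO}^{\smash{2}}_\mathrm{Pres}$; your write-up simply makes explicit the key observation underlying that remark, namely that every cardinality-sensitive step (the permutational closure, the layer-restricted translation clause) only ever relies on bijections preserving cardinalities, which holds just as well for semilinear $\exists^S$ quantifiers (including $\infty$) as for $\exists^{\lhd n}$.
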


We believe that -- beyond their applicability to ontology reasoning as demonstrated in the next sections -- the results presented here also provide significant novel insights for the area of first-order modal logics \cite{Gabbay2005-GABMML-2}.
As indicated by our naming, the subcase of \SmonCtwo where the only standpoint expression used is $*$ coincides with the monodic fragment of modal counting two-variable FO with a S5 modal operator.
While it has been observed earlier that restricting to the monodic setting is crucial for maintaining decidability in non-trivial combinations of FO fragments with modalities of varying kinds \cite{WolterZ01},
existing decidability results explicitly exclude FO fragments with equality or function symbols, which are notoriously harder, leaving such cases as an open question. We transcend this boundary, since  \Ctwo{} supports equality and unary functions (via axiomatising binary predicates as functional), and beyond mere decidability, we establish tight complexity bounds.

\pagebreak

\section{Application to Ontology Languages}\label{sec:application-ontology-languages}

We now show that adding monodic standpoints to popular ontology languages does not negatively affect the computational complexity of standard reasoning tasks. To this end, we begin by adding monodic standpoints to the description logic $\mathcal{ALCOIQB}^\mathsf{Self}$, and then we show how we can also accommodate role chain axioms, thus obtaining monodic standpoint $\mathcal{SHOIQB}_s$ and $\mathcal{SROIQB}_s$, which subsume the W3C ontology standards OWL~1 and OWL~2~DL respectively.\footnote{The less mainstream letter $\mathcal{B}$ in the DL names refers to \emph{boolean role constructors}, where $\mathcal{B}_s$ denotes boolean role constructors over simple roles only (see, e.g., \citeauthor{DBLP:conf/jelia/RudolphKH08} \citeyear{DBLP:conf/jelia/RudolphKH08}). This modelling feature is not available in OWL~1 or OWL~2~DL.} For the following, familiarity with description logics \cite{BHLSintroDL,RudolphDLfoundations} will be very helpful.

\subsection{Monodic Standpoint $\mathcal{ALCOIQB}^\mathsf{Self}$}

We first introduce $\mathbb{S}^\mathrm{mon}_{\mathcal{ALCOIQB}^\mathsf{Self}}$ obtained by enhancing the description logic $\mathcal{ALCOIQB}^\mathsf{Self}$ by monodic standpoints. Just like \Ctwo{} \FOSL, $\mathbb{S}^\mathrm{mon}_{\mathcal{ALCOIQB}^\mathsf{Self}}$ is based on a signature $\tuple{\Preds, \Consts, \Stands}$ where $\Preds$ only contains unary and binary predicates, also referred to as \emph{concept names} and \emph{role names}, respectively.
Based on these, we define the set $\mathbf{E}_\mathrm{rol}$ of \emph{role expressions}
$$ R, R' ::= \pred{R} \mid \pred{R}^- \mid \neg R \mid R \cap R'$$
with $\pred{R} \in \Preds$ binary, and the set $\mathbf{E}_\mathrm{con}$ of \emph{concept expressions}
$$ C,D ::= \pred{A} \mid \neg C \mid \{o\} \mid C \sqcap D \mid {\geqslant} n R.C \mid \exists R.\mathsf{Self} \mid \standd{\ste} C $$
with $\pred{A} \in \Preds$ unary, $o \in \Consts$, $n\in \mathbb{N}$, $\ste\in \StandExps$ (see \Cref{def:FOSLsyntax}).
Finally the set of $\mathbb{S}^\mathrm{mon}_{\mathcal{ALCOIQB}^\mathsf{Self}}$ sentences is defined by
$$ \phi, \psi ::= C \sqsubseteq D \mid \neg \phi \mid \phi \wedge \psi \mid \standd{\ste} \phi.$$
We 
introduce $\mathbb{S}^\mathrm{mon}_{\mathcal{ALCOIQB}^\mathsf{Self}}$ with a minimalistic syntax, but note that all the usual description logic constructs can be introduced as syntactic sugar. For example, we obtain $\bot$ as $\pred{A} \sqcap \neg \pred{A}$ and $\top$ as $\neg \bot$; we may write $\exists R.C$ instead of ${\geqslant} 1 R.C$ and also $\forall R.C$ instead of $\neg {\geqslant} 1 R.\neg C$; last not least we may write $\standb{\ste} C$ to denote $\neg\standd{\ste}\neg C$. We also remind the reader that other usual axiom types 
can all be rewritten into statements of the form $C \sqsubseteq D$ (referred to as \emph{general concept inclusions}, short: \emph{GCIs}) in the presence of \emph{nominals} (i.e., expressions of the form $\{o\}$) and role expressions.
Following DL naming conventions, a $\mathbb{S}^\mathrm{mon}_{\mathcal{ALCOIQB}^\mathsf{Self}}$ sentence will be called  a \emph{TBox} if it is a conjunction of GCIs.

For later discussions, we single out some fragments of $\mathbb{S}^\mathrm{mon}_{\mathcal{ALCOIQB}^\mathsf{Self}}$: We obtain $\mathbb{S}^\mathrm{mon}_{\mathcal{ALCOIF}}$ by excluding $\neg$ and $\cap$ from role expressions as well as disallowing concept expressions that use $\mathsf{Self}$ or ${\geqslant}k$ for $k \geq 2$, with the notable exception of axioms of the specific form $\top \sqsubseteq \neg {\geqslant} 2 \pred{F}.\top$ stating the functionality for binary predicates $\pred{F}$, which are then often abbreviated by $\mathit{func}(\pred{F})$.  We obtain $\mathbb{S}^\mathrm{mon}_{\mathcal{ALCO}}$ from $\mathbb{S}^\mathrm{mon}_{\mathcal{ALCOIF}}$ by disallowing role expressions of the form $\pred{R}^-$ (known as \emph{inverses}), and functionality axioms.

The semantics of standpoint-enhanced description logics is usually provided in a model-theoretic way using standpoint structures as in \Cref{def:semantics} \cite{sententialFOLandOWL}. For space reasons, we will instead define the semantics by directly providing a translation into \SmonCtwo. To justify this ``shortcut'' we point out that said translation truthfully reflects the model-theoretic semantics of all earlier described standpoint-enhanced DLs and that existing translations from standpoint-free DLs to plain \Ctwo{} \cite{Kazakov08} naturally arise as a special case of ours.  

The translation of a $\mathbb{S}^\mathrm{mon}_{\mathcal{ALCOIQB}^\mathsf{Self}}$ sentence $\phi$ into a \SmonCtwo sentence is obtained by replacing every GCI $C \sqsubseteq D$ inside $\phi$ by
$\forall x. \big(\mathsf{ctrans}(x,C) \to \mathsf{ctrans}(x,D)\big)$, where $\mathsf{ctrans}: \{x,y\}\times \mathbf{E}_\mathrm{con} \to \mathbb{S}_{\Ctwo}$ is inductively defined:
\begin{align*}
		\mathsf{ctrans}(z,\pred{A})    & = \pred{A}(z) \\[-2pt]
		\mathsf{ctrans}(z,\neg C)      & = \neg \mathsf{ctrans}(z, C)\\[-2pt]
		\mathsf{ctrans}(z,\{o\})       & = z\dot{=}o \\[-2pt]
		\mathsf{ctrans}(z,C \sqcap D)    & = \mathsf{ctrans}(z, C) \wedge \mathsf{ctrans}(z, D) \\[-2pt]
		\mathsf{ctrans}(x,{\geqslant} n R.C)    & = \exists^{\geq n} y. \mathsf{rtrans}(x,y,R) \wedge \mathsf{ctrans}(y, C) \\[-2pt]
		\mathsf{ctrans}(y,{\geqslant} n R.C)    & = \exists^{\geq n} x. \mathsf{rtrans}(y,x,R) \wedge \mathsf{ctrans}(x, C) \\[-2pt]
		\mathsf{ctrans}(z,\exists R.\mathsf{Self})    & = \mathsf{rtrans}(z,z,R) \\[-2pt]
		\mathsf{ctrans}(z,\standd{\ste} C)    & = \exists z. \standd{\ste} \mathsf{ctrans}(z, C),
\end{align*}
using $\mathsf{rtrans}: \{x,y\}\times \{x,y\}\times \mathbf{E}_\mathrm{rol} \to \mathbb{S}_{\Ctwo}$ defined by
\begin{align*}
		\mathsf{rtrans}(z,z',\pred{R})    & = \pred{R}(z,z') \\[-2pt]
		\mathsf{rtrans}(z,z',\pred{R}^-)    & = \pred{R}(z',z) \\[-2pt]
		\mathsf{rtrans}(z,z',\neg R)      & = \neg \mathsf{rtrans}(z,z',R)\\[-2pt]
		\mathsf{rtrans}(z,R \cap R')    & = \mathsf{rtrans}(z,z',R) \wedge \mathsf{rtrans}(z,z',R). 
\end{align*}
It can be readily checked that the translation described is computable in polytime (hence polynomial in output) and indeed yields a \SmonCtwo sentence. Therefore, and in view of the fact that satisfiability is already \NExpTime-hard for the standpoint-free sublogic $\mathcal{ALCOIF}$ \cite{Tobies2000}, we obtain the following tight complexity bounds.

\begin{theorem}
Checking satisfiability of $\mathbb{S}^\mathrm{mon}_{\mathcal{ALCOIQB}^\mathsf{Self}}$ sentences is \NExpTime-complete.
\end{theorem}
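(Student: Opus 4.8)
The plan is to prove the two matching bounds separately, reusing the machinery already assembled. For membership in \NExpTime, I would argue that the translation $\mathsf{ctrans}/\mathsf{rtrans}$ just defined maps every $\mathbb{S}^\mathrm{mon}_{\mathcal{ALCOIQB}^\mathsf{Self}}$ sentence $\phi$ to an equisatisfiable \SmonCtwo sentence computable in polynomial time, and then invoke the \NExpTime-completeness of \SmonCtwo satisfiability established above (the corollary derived from \Cref{thm:satcorrespondence}). Concretely, I would split the syntactic verification into four claims: (i) the output uses only the variables $x$ and $y$, which follows from the systematic alternation in the role-successor clauses $\mathsf{ctrans}(x,{\geqslant}nR.C)$ and $\mathsf{ctrans}(y,{\geqslant}nR.C)$, each re-entering the recursion on the complementary variable; (ii) every predicate has arity at most two, since concept names translate to unary and role names to binary atoms; (iii) the output is monodic, because each application of $\mathsf{ctrans}(z,\standd{\ste}C)$ places the diamond in front of a subformula $\mathsf{ctrans}(z,C)$ whose only free variable is $z$; and (iv) the translation is computable in polynomial time and of polynomial size, by a routine induction on the structure of $\phi$.

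The semantic heart of the upper bound is establishing equisatisfiability, i.e.\ that the translation \emph{truthfully} reflects the model-theoretic standpoint semantics of the DL. Here I would prove, by induction over the concept constructors, that a standpoint structure $\kstruct$ places a domain element $\delta$ in the interpretation of a concept expression $C$ in world $\pr$ exactly when $\mathsf{ctrans}(z,C)$ holds in $\kstruct$ at $\pr$ under every assignment sending $z$ to $\delta$; the GCI clause and the boolean sentence constructors then lift this to sentences so that models of $\phi$ and of its translation correspond. The cases for the DL-specific constructors (inverses, qualified number restrictions, $\mathsf{Self}$, boolean role combinations, and nominals) mirror the standard standpoint-free embedding of \cite{Kazakov08}, which arises as the modal-free special case, so the genuinely new content is the modal clause, where one checks that the constant- and rigid-domain semantics of $\standd{\ste}$ across worlds (\Cref{def:semantics-model,def:semantics}) is matched by the corresponding \SmonCtwo diamond applied to the one-free-variable concept translation.

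For \NExpTime-hardness, I would observe that $\mathcal{ALCOIF}$ is a syntactic sublogic of $\mathbb{S}^\mathrm{mon}_{\mathcal{ALCOIQB}^\mathsf{Self}}$: it is exactly the standpoint-free fragment $\mathbb{S}^\mathrm{mon}_{\mathcal{ALCOIF}}$ obtained by forbidding role negation and intersection, restricting number restrictions to functionality axioms, and using no modal operators. Hence every $\mathcal{ALCOIF}$ sentence already \emph{is} an $\mathbb{S}^\mathrm{mon}_{\mathcal{ALCOIQB}^\mathsf{Self}}$ sentence, and the known \NExpTime-hardness of $\mathcal{ALCOIF}$ satisfiability \cite{Tobies2000} transfers verbatim, with no reduction needed. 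Combining the two bounds yields \NExpTime-completeness.

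I expect the main obstacle to be the equisatisfiability proof, and within it the modal clause in particular: one must verify that the shared interpretation of the domain and of terms across precisifications makes the concept-level diamond agree with the \SmonCtwo diamond, while simultaneously confirming that this clause keeps the output monodic and within two variables. Everything else --- the arity and polynomiality bookkeeping, and the hardness transfer --- is routine given the results already in hand.
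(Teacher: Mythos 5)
Your proposal is correct and follows essentially the same route as the paper: upper bound via the polytime $\mathsf{ctrans}/\mathsf{rtrans}$ translation into \SmonCtwo together with the \NExpTime-completeness of \SmonCtwo satisfiability, and lower bound by inclusion of the \NExpTime-hard standpoint-free sublogic $\mathcal{ALCOIF}$ \cite{Tobies2000}. The only divergence is that you plan a full model-theoretic equisatisfiability induction, whereas the paper short-circuits this by \emph{defining} the semantics of $\mathbb{S}^\mathrm{mon}_{\mathcal{ALCOIQB}^\mathsf{Self}}$ directly via that translation (merely asserting that it reflects the usual standpoint-structure semantics), so your extra step is sound but not needed under the paper's setup.
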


\subsection{Adding Role Chain Axioms}\label{sec:addingrolechain}

In order to fully cover the web ontology languages OWL~1 and OWL~2~DL, we need to extend our formalism by so-called \emph{role chain axioms}, arriving at the description logics $\mathcal{SHOIQB}_s$ (when allowing just role chain axioms expressing transitivity such as $\mathtt{FriendOf} \circ \mathtt{FriendOf} \sqsubseteq \mathtt{FriendOf}$)\\ or $\mathcal{SROIQB}_s$ (when admitting more complex forms like $\mathtt{FriendOf} \circ \mathtt{EnemyOf} \sqsubseteq \mathtt{EnemyOf}$), respectively. Luckily, by combining known standpoint encoding tricks \cite{sententialFOLandOWL} and removal techniques for role-chain axioms \cite{Kazakov08,DemriN05} with some novel ideas, it is possible to translate $\mathbb{S}^\mathrm{mon}_{\mathcal{SHOIQB}_s}$ and $\mathbb{S}^\mathrm{mon}_{\mathcal{SROIQB}_s}$ sentences back into $\mathbb{S}^\mathrm{mon}_{\mathcal{ALCOIQB}^\mathsf{Self}}$. For $\mathbb{S}^\mathrm{mon}_{\mathcal{SHOIQB}_s}$, the translation is polynomial, for $\mathbb{S}^\mathrm{mon}_{\mathcal{SROIQB}_s}$ exponential. 


\begin{theorem}
Checking satisfiability of $\mathbb{S}^\mathrm{mon}_{\mathcal{SHOIQB}_s}$ sentences is \NExpTime-complete.
Checking satisfiability of $\mathbb{S}^\mathrm{mon}_{\mathcal{SROIQB}_s}$ sentences is \NTwoExpTime-complete.  
\end{theorem}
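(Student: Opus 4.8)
The plan is to reduce both logics to the already-settled case of $\mathbb{S}^\mathrm{mon}_{\mathcal{ALCOIQB}^\mathsf{Self}}$ by eliminating role chain axioms, and then to read off matching lower bounds from the standpoint-free sublogics. The reason an elimination is needed at all is that a composition axiom such as $\mathtt{R} \circ \mathtt{S} \sqsubseteq \mathtt{T}$ would translate into the three-variable sentence $\forall x \forall y \forall z.\,(\mathtt{R}(x,y) \wedge \mathtt{S}(y,z)) \to \mathtt{T}(x,z)$, which is not expressible in $\Ctwo$; hence role chains cannot simply be pushed through the $\mathsf{ctrans}/\mathsf{rtrans}$ translation of the previous subsection, and must instead be internalised into concept inclusions.

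First, for $\mathbb{S}^\mathrm{mon}_{\mathcal{SHOIQB}_s}$ I would adapt the classical polynomial elimination of transitive roles and role hierarchies \cite{Kazakov08,DemriN05}: each transitivity-style axiom is replaced by internalising its effect on the concepts quantified over the role, introducing fresh concept names that propagate $\forall$-restrictions one role step at a time and closing the hierarchy under subsumption. This yields an equisatisfiable $\mathbb{S}^\mathrm{mon}_{\mathcal{ALCOIQB}^\mathsf{Self}}$ sentence of polynomial size. The novel ingredient, and the main thing to verify, is that this internalisation stays sound in the presence of standpoint modalities: since the composition semantics must hold \emph{within each precisification}, the propagation axioms have to be enforced at every world while the monodicity restriction -- at most one free variable under each $\standd{\ste}$ -- is preserved. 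I would check that the introduced propagation axioms are themselves modality-free GCIs (hence enforced globally through the outermost universal reading of sentences) so that no fresh free variable is ever placed under a modal operator. Here the $\mathcal{B}_s$ restriction to boolean role constructors over \emph{simple} roles is what keeps the elimination correct, since it guarantees that number restrictions and role negations never apply to roles built up by composition.

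Second, for $\mathbb{S}^\mathrm{mon}_{\mathcal{SROIQB}_s}$ the regular role hierarchies are handled by building, for each role, a nondeterministic finite automaton recognising the regular language of role chains subsumed by it, and encoding the automaton states by fresh concept names that track, along role edges, which prefixes have been read. As in the standpoint-free case the automaton may be exponential in the size of the role hierarchy, so the resulting $\mathbb{S}^\mathrm{mon}_{\mathcal{ALCOIQB}^\mathsf{Self}}$ sentence has size $2^{\mathrm{poly}(n)}$. Applying the $\NExpTime$ decision procedure for $\mathbb{S}^\mathrm{mon}_{\mathcal{ALCOIQB}^\mathsf{Self}}$ to an input of size $2^{\mathrm{poly}(n)}$ costs nondeterministic time $2^{\mathrm{poly}(2^{\mathrm{poly}(n)})} = 2^{2^{\mathrm{poly}(n)}}$, giving the $\NTwoExpTime$ upper bound; the merely polynomial blow-up of the first reduction gives the $\NExpTime$ upper bound for $\mathbb{S}^\mathrm{mon}_{\mathcal{SHOIQB}_s}$.

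Finally, the lower bounds come for free from the standpoint-free fragments, since every standpoint-free sentence is trivially an $\mathbb{S}^\mathrm{mon}$-sentence: $\NExpTime$-hardness of $\mathcal{SHOIQB}_s$ already holds for its sublogic $\mathcal{ALCOIF}$ \cite{Tobies2000}, and $\NTwoExpTime$-hardness of $\mathcal{SROIQB}_s$ already holds for its standpoint-free sublogic $\SROIQ$ \cite{Kazakov08}, so both hardness results transfer verbatim. I expect the delicate part to be not the complexity bookkeeping but the soundness and monodicity-preservation of the role-chain elimination under modalities -- specifically, arguing that the per-precisification enforcement of the propagation and automaton-state axioms faithfully simulates the composition semantics in every world of a standpoint structure.
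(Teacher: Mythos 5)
Your overall architecture coincides with the paper's: both reduce $\mathbb{S}^\mathrm{mon}_{\mathcal{SHOIQB}_s}$ and $\mathbb{S}^\mathrm{mon}_{\mathcal{SROIQB}_s}$ to $\mathbb{S}^\mathrm{mon}_{\mathcal{ALCOIQB}^\mathsf{Self}}$ by eliminating role chain axioms \`a la Kazakov/Demri--Nivelle (polynomial for the $\mathcal{SHOIQ}$ case, exponential for the $\mathcal{SROIQ}$ case), and both read off the lower bounds from the standpoint-free sublogics $\mathcal{ALCOIF}$ and $\mathcal{SROIQ}$. The complexity bookkeeping is also correct.

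There is, however, one genuine gap, and it sits exactly where the paper's ``novel ideas'' are. You treat the role chain axioms as global constraints: you plan to internalise each RIA into modality-free propagation GCIs that are ``enforced globally through the outermost universal reading of sentences'', i.e.\ you implicitly assume every RIA holds in every precisification. But in the logic as defined, RIAs are atoms of the sentence grammar $\phi,\psi ::= \alpha \mid \neg\phi \mid \phi\wedge\psi \mid \phi\vee\psi \mid \standb{\ste}\phi \mid \standd{\ste}\phi$, so a role chain axiom may occur negated or under a diamond and thus hold in some precisifications and fail in others; Kazakov's compilation cannot be applied to such an occurrence as-is. The paper resolves this in two steps that your proposal does not cover: (i) a negation-normal-form clause that rewrites a \emph{negated} RIA $\neg(S_1\circ\cdots\circ S_k\sqsubseteq R)$ into GCIs asserting the existence of a counterexample chain (using a fresh functional role and a nominal); and (ii) a ``switch'' separation: each remaining RIA occurrence $\rho$ is replaced in place by the GCI $\top\sqsubseteq\exists \mathtt{S}_\rho.\mathsf{Self}$ for a fresh simple role $\mathtt{S}_\rho$, while a single global conjunct $\phi_\mathrm{RIA}$ collects guarded variants such as $\mathtt{S}_\rho\circ R_1\circ\cdots\circ R_n\sqsubseteq\underline{R}$ (with $\underline{R}$ a fresh subrole of $R$), so that the compositional consequences fire only in those precisifications where the switch is reflexively populated. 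Only after this separation are all RIAs genuinely global, and only then can the automata-based compilation you describe be applied. Without step (i) and (ii), your reduction is sound only for the fragment in which RIAs appear as top-level positive conjuncts, which is strictly weaker than the theorem as stated.
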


Therein, the hardness part for $\mathbb{S}^\mathrm{mon}_{\mathcal{SROIQB}_s}$ follows from the known \NTwoExpTime hardness of its fragment $\mathcal{SROIQ}$ \cite{Kazakov08}.   
This finishes our argument that adding monodic Standpoints to OWL~1 and OWL~2 does not increase complexity of standard reasoning tasks.

\pagebreak

\section{Nominals Cause Trouble}\label{sec:features-causing-trouble}

We finish our considerations by providing two results that provide some context for our main results and support the intuition (cf. \citeauthor{monodicEL} (\citeyear{monodicEL}) as well as \citeauthor{monodicSHIQ} (\citeyear{monodicSHIQ})) that the interplay of nominals and standpoint modalities is particularly troublesome for reasoning.
To this end, we will use the tiling problem in two variations, which we introduce next.

A \emph{tiling system} $\mathbb{T}=\langle k,H,V\rangle$ consists of a number $k \in \mathbb{N}$ indicating the number of tiles, and horziontal and vertical compatibility relations $H,V \subseteq \{1,\ldots,k\}\times\{1,\ldots,k\}$. 
For a downward-closed set $S \subseteq \mathbb{N}$ of natural numbers, 
a $\mathbb{T}$-\emph{tiling} of $S \times S$ with initial condition $\langle t_0,\ldots t_n\rangle \in \{1,\ldots,k\}^n$ for some $n \in S$ is a mapping $\mathsf{tile}:S \times S \to \{1,\ldots, k\}$ such that $\mathsf{tile}(i,0) = t_i$ for $i \in \{1,\ldots,n\}$, and for all $i\in S$ with $i+1 \in S$ and all $j \in S$ holds $(\mathsf{tile}(i,j),\mathsf{tile}(i+1,j)) \in H$ as well as $(\mathsf{tile}(j,i),\mathsf{tile}(j,i+1)) \in V$.
We recall the following:
\begin{itemize}
\item
There is a tiling system $\mathbb{T}_\mathrm{exp}$ such that the following problem is \NExpTime-hard:
Given an initial condition of size $n$, is there a corresponding $\mathbb{T}_\mathrm{exp}$-tiling of $\{0,\ldots,2^n-1\} \times \{0,\ldots,2^n-1\}$?
\item 
There is a tiling system $\mathbb{T}_\mathrm{und}$ such that the following problem is undecidable:
Given an initial condition of size $n$, is there a corresponding $\mathbb{T}_\mathrm{und}$-tiling of $\mathbb{N}\times \mathbb{N}$?
\end{itemize}

\subsection{\NExpTime Hardness for $\mathcal{ALCO}$ TBoxes}

From prior works, it is known that monodic standpoint $\mathcal{SHIQ}$, a sublogic of $\mathbb{S}^\mathrm{mon}_{\mathcal{SHOIQB}_s}$ has an \ExpTime-complete satisfiability problem \cite{monodicSHIQ}, which means that the complexity of $\mathcal{SHIQ}$ is unaltered if monodic standpoints are added.
Two other popular \ExpTime-complete Sub-DLs of $\mathcal{SHOIQB}_s$ (incomparable to $\mathcal{SHIQ}$) are $\mathcal{SHIO}$ and $\mathcal{SHOQ}$ \cite{SHIO,SHOQ}. 
This poses the question if adding monodic standpoints to these DLs preserves \ExpTime reasoning, like it does for $\mathcal{SHIQ}$. 



Interestingly, we can answer this question in the negative (unless $\NExpTime=\ExpTime$) and identify nominals as the joint cause by showing that satisfiability even of monodic standpoint TBoxes in $\mathcal{ALCO}$ (a rather restricted sublogic of both $\mathcal{SHIO}$ and $\mathcal{SHOQ}$) is already NExpTime hard.

To this end, we provide a polynomial reduction from the first of the two above tiling problems to the satisfiability problem of a $\mathbb{S}^\mathrm{mon}_\mathcal{ALCO}$ TBox of size polynomial in $n$, using just one nominal concept $\{o\}$.
We use atomic concepts $\pred{T}_1,\ldots,\pred{T}_k$ for the $k$ tiles
and atomic concepts $\pred{X}_1,\ldots, \pred{X}_{n}$ as well as $\pred{Y}_1,\ldots, \pred{Y}_{n}$ to encode x- and y-coordinates in binary. First, we declare all these concepts as ``almost rigid'': they hold uniformly across precisifications for all elements but $o$.
$$
\neg\{o\} \sqcap \pred{T}_\ell \sqsubseteq {\Box}_* \pred{T}_\ell
\quad
\neg\{o\} \sqcap \pred{X}_i \sqsubseteq {\Box}_* \pred{X}_i
\quad
\neg\{o\} \sqcap \pred{Y}_i \sqsubseteq {\Box}_* \pred{Y}_i
$$
Above and below, we let $i$ range from $1$ to $n$ and let $\ell$ range from $1$ to $k$.
Next, we ensure that, in every precisification, every non-$o$ element with x-coordinate (y-coordinate) smaller than $2^n-1$ has a horizontal (vertical) neighbour with that coordinate incremented and the same y-coordinate (x-coordinate). We let $j$ range from $1$ to $i-1$.
$$
\begin{array}{r@{\ }l@{\ }l}
\neg\{o\} \sqcap \bigsqcup_i \neg \pred{X}_i & \sqsubseteq & \exists \pred{H}. \neg\{o\} \\
\pred{X}_{i} \sqcap \neg\pred{X}_j  & \sqsubseteq & \forall \pred{H}.\pred{X}_{i}\\
\neg\pred{X}_{i} \sqcap \neg\pred{X}_j  & \sqsubseteq & \forall \pred{H}.\neg\pred{X}_{i}\\
\pred{X}_{i} \sqcap \bigsqcap_{j} \pred{X}_j & \sqsubseteq & \forall \pred{H}.\neg \pred{X}_{i}\\
\neg \pred{X}_{i} \sqcap \bigsqcap_{j} \pred{X}_j & \sqsubseteq & \forall \pred{H}. \pred{X}_{i}\\
\end{array}
\begin{array}{r@{\ }l@{\ }l}
\neg\{o\} \sqcap \bigsqcup_i \neg \pred{Y}_i & \sqsubseteq & \exists \pred{V}. \neg\{o\} \\
\pred{Y}_{i} \sqcap \neg\pred{Y}_j  & \sqsubseteq & \forall \pred{V}.\pred{Y}_{i}\\
\neg\pred{Y}_{i} \sqcap \neg\pred{Y}_j  & \sqsubseteq & \forall \pred{V}.\neg\pred{Y}_{i}\\
\pred{Y}_{i} \sqcap \bigsqcap_{j} \pred{Y}_j & \sqsubseteq & \forall \pred{V}.\neg \pred{Y}_{i}\\
\neg \pred{Y}_{i} \sqcap \bigsqcap_{j} \pred{Y}_j & \sqsubseteq & \forall \pred{V}. \pred{Y}_{i}\\
\end{array}
$$

\smallskip

$\pred{Y}_i \sqsubseteq \forall \pred{H}.\pred{Y}_i$\quad\!\!\!
$\neg \pred{Y}_i \sqsubseteq \forall \pred{H}.\neg \pred{Y}_i$
\quad\!\!\!
$\pred{X}_i \sqsubseteq \forall \pred{V}.\pred{X}_i$\quad\!\!\!
$\neg \pred{X}_i \sqsubseteq \forall \pred{V}.\neg \pred{X}_i$

\smallskip

We next ensure that there exists a non-$o$ element with x and y set to zero, which together with its horizontal neighbours realises the initial condition $\langle t_0,\ldots t_n\rangle \in \{1,\ldots,k\}^n$.

\smallskip

$\top \sqsubseteq \exists \pred{R}.(\neg\{o\} \sqcap \neg \pred{X}_1 \sqcap \ldots \sqcap \neg \pred{X}_{n} \sqcap \neg \pred{Y}_1 \sqcap \ldots \sqcap \neg \pred{Y}_{n} \sqcap$

$\qquad\ \  \pred{T}_{t_1} \sqcap \forall \pred{H}.(\pred{T}_{t_2} \sqcap \forall\pred{H}.( \ldots (\pred{T}_{t_{n-1}} \sqcap \forall \pred{H}.\pred{T}_{t_n}) \ldots )) $

\smallskip

For every non-$o$ element, there exists some precisification wherein it is $\pred{P}$-linked to 
$o$ and propagates its x- and y- coordinate as well as its tile assignment via this link to $o$.  

\smallskip
$\begin{array}{r@{\ }l@{\ }l}
\neg\{o\} & \sqsubseteq & \Diamond_* \exists \pred{P}.\{o\}\\
\pred{T}_i & \sqsubseteq & \forall \pred{P}.\pred{T}_i
\end{array}$
$\begin{array}{r@{\ }l@{\ }l}
\pred{X}_i & \sqsubseteq & \forall \pred{P}.\pred{X}_i\\
\neg\pred{X}_i & \sqsubseteq & \forall \pred{P}.\neg\pred{X}_i
\end{array}$
$\begin{array}{r@{\ }l@{\ }l}
\pred{Y}_i & \sqsubseteq & \forall \pred{P}.\pred{Y}_i\\
\neg\pred{Y}_i & \sqsubseteq & \forall \pred{P}.\neg\pred{Y}_i
\end{array}$
\smallskip

In every precisification, every non-$o$ element is  $\pred{P}'$-linked to 
$o$ and, should its assigned x- and y-coordinate coincide with those assigned to $o$, then its tile-assignment will coincide with the one of $o$ as well.

\smallskip
$\neg\{o\} \sqsubseteq \exists \pred{P}'.\{o\} $

$\exists \pred{P}'.\pred{T}_\ell \sqcap \bigsqcap_{i} \big((\pred{X}_i \sqcap \exists \pred{P}'.\pred{X}_i) \sqcup (\neg\pred{X}_i \sqcap \exists \pred{P}'.\neg\pred{X}_i)\big)$

$\qquad\ \ \ \ \sqcap\bigsqcap_{i} \big((\pred{Y}_i \sqcap \exists \pred{P}'.\pred{Y}_i) \sqcup (\neg\pred{Y}_i \sqcap \exists \pred{P}'.\neg\pred{Y}_i)\big) \sqsubseteq \pred{T}_\ell$

\smallskip

Note that this way, the tile assignments will be synchronized between all elements carrying the same coordinates.
We finally make sure that in every precisification, every domain element must be assigned a tile. Moreover the $\pred{H}$- and $\pred{V}$-neighbouring pairs of elements must conform with the horizontal and vertical compatibility relation.

\smallskip

$\top \sqsubseteq \pred{T}_1 \sqcup \ldots \sqcup  \pred{T}_k$

$\pred{T}_\ell \sqsubseteq \forall \pred{H}.\neg \pred{T}_{\ell'}$ \quad for $(\ell,{\ell'}) \in  \{1,\ldots,k\}\times\{1,\ldots,k\}  \setminus H$

$\pred{T}_\ell \sqsubseteq \forall \pred{V}.\neg \pred{T}_{\ell'}$ \quad for $(\ell,{\ell'}) \in \{1,\ldots,k\}\times\{1,\ldots,k\} \setminus V$

\smallskip

This finishes the description of the TBox (obtained by taking the conjunction of all the introduced GCIs). We note that these axioms do not enforce the $\pred{H}$ and $\pred{V}$ relation to form a proper grid (in any precisification). Rather, the axioms ensure that for any two horizontally (vertically) neighbouring coordinate pairs, there exists a $\pred{H}$-connected ($\pred{V}$-connected) pair of domain elements carrying said coordinates. Since the tile assignments are rigid (except for $o$) and synchronized over all elements carrrying equal coordinates, this suffices to ensure that satisfiability of our TBox coincides with the existence of a $\mathbb{T}_\mathrm{exp}$-tiling, so we obtain the following theorem.


\begin{theorem}
In any sublogic of $\mathbb{S}^\mathrm{mon}_{\mathcal{SHOIQB}_s}$ that subsumes $\mathbb{S}^\mathrm{mon}_\mathcal{ALCO}$ TBoxes, satisfiability is \NExpTime-complete. 
\end{theorem}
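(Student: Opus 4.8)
The plan is to prove the claim by a sandwiching argument, since it asserts \NExpTime-completeness uniformly for every logic $\mathcal{L}$ with $\mathbb{S}^\mathrm{mon}_\mathcal{ALCO}$ TBoxes $\subseteq \mathcal{L} \subseteq \mathbb{S}^\mathrm{mon}_{\mathcal{SHOIQB}_s}$. For the upper bound I would simply observe that any satisfiability instance of such an $\mathcal{L}$ is also a satisfiability instance of $\mathbb{S}^\mathrm{mon}_{\mathcal{SHOIQB}_s}$, which lies in \NExpTime by the preceding theorem; hence satisfiability in $\mathcal{L}$ is in \NExpTime as well. The lower bound is where the work lies: I would show that the $\mathbb{S}^\mathrm{mon}_\mathcal{ALCO}$ TBox $\mathcal{T}$ assembled above (the conjunction of all displayed GCIs) is satisfiable if and only if the \NExpTime-hard tiling problem for $\mathbb{T}_\mathrm{exp}$ with the given initial condition $\langle t_0,\ldots,t_n\rangle$ has a solution. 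Since $\mathcal{T}$ has size polynomial in $n$ and every $\mathcal{L}$ in the range subsumes $\mathbb{S}^\mathrm{mon}_\mathcal{ALCO}$ TBoxes, this polynomial reduction transfers \NExpTime-hardness to all of them, which together with the upper bound yields the theorem.

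For the forward direction (tiling $\Rightarrow$ satisfiability) I would construct a standpoint structure $\kstruct$ whose domain has one element $\delta_{a,b}$ for each coordinate pair $(a,b) \in \{0,\ldots,2^n-1\}^2$, plus the nominal element $o$. Each $\delta_{a,b}$ is assigned the rigid binary encodings of $a,b$ via $\pred{X}_i,\pred{Y}_i$ and the rigid tile $\pred{T}_{\mathsf{tile}(a,b)}$; in every precisification I install the $\pred{H}$- and $\pred{V}$-edges between consecutive coordinates and a functional $\pred{P}'$-edge to $o$. I then use one designated precisification $\pi_{a,b}$ per coordinate pair in which $o$ additionally carries the coordinates and tile of $(a,b)$ and receives a $\pred{P}$-edge from $\delta_{a,b}$, witnessing the $\Diamond_*$-axioms while keeping the $\pred{P}'$-consistency axiom true because $o$'s tile in $\pi_{a,b}$ equals $\mathsf{tile}(a,b)$. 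Verifying the compatibility GCIs then reduces to the fact that $\mathsf{tile}$ respects $H$ and $V$.

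The harder, backward direction (satisfiability $\Rightarrow$ tiling) is the main obstacle, precisely because — as the construction stresses — the axioms do not force $\pred{H}$ and $\pred{V}$ to lay out a genuine grid in any single precisification. Given a model $\kstruct$, I would first note that every non-$o$ element carries a rigid coordinate pair and rigid tiles (via the ``almost rigid'' $\Box_*$-axioms), and that, starting from the $(0,0)$-element forced by the $\pred{R}$-axiom, the $\pred{H}/\pred{V}$-successor axioms realise every coordinate in $\{0,\ldots,2^n-1\}^2$. The crux is synchronisation through $o$: for two elements $\delta_1,\delta_2$ with equal coordinates, the axiom $\neg\{o\} \sqsubseteq \Diamond_* \exists\pred{P}.\{o\}$ yields a precisification in which $\delta_1$ pushes all its coordinate and tile information onto $o$, while in that same precisification the global $\pred{P}'$-axiom — with $o$ serving as the common witness for all of its conjuncts, since $\delta_2$'s coordinates then match those of $o$ — forces $\delta_2$ to satisfy exactly the same $\pred{T}_\ell$ predicates. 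Thus all non-$o$ elements sharing a coordinate carry identical tile sets, so choosing one representative tile per coordinate yields a well-defined $\mathsf{tile}$; the initial condition follows from the $\pred{R}$/$\forall\pred{H}$-chain and horizontal/vertical compatibility from the $\forall\pred{H}.\neg\pred{T}_{\ell'}$ and $\forall\pred{V}.\neg\pred{T}_{\ell'}$ GCIs combined with this synchronisation. The delicate points to get right are the rigidity bookkeeping across precisifications and the argument that coordinate matching in the $\pred{P}'$-axiom genuinely triggers tile equality, as it is this interplay — rather than any grid structure of $\pred{H},\pred{V}$ — that carries the entire correctness.
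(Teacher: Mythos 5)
Your proposal is correct and follows essentially the same route as the paper: the upper bound is inherited from the \NExpTime{} result for $\mathbb{S}^\mathrm{mon}_{\mathcal{SHOIQB}_s}$, and the lower bound is exactly the paper's polynomial reduction from the exponential tiling problem via the displayed $\mathbb{S}^\mathrm{mon}_\mathcal{ALCO}$ TBox with one nominal. Your elaboration of the two correctness directions -- in particular the observation that the axioms never force a genuine grid and that correctness instead rests on the $\pred{P}/\pred{P}'$-synchronisation of rigid tiles through $o$ across elements sharing coordinates -- is precisely the justification the paper sketches.
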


\subsection{Lifting Monodicity Causes Undecidability}

A crucial restriction underlying all logical formalisms that we have considered so far is monodicity, that is, that modal operators can only be put in front of subformulae with at most one free variable. 
The arguably mildest way of lifting monodicity is by imposing that one distinguished binary predicate, say $\rigidp$, must be rigidly interpreted. 
Note that rigidity of a binary predicate $\rigidp$ could be expressed by the FOSL formula $\forall x,y. \big( \rigidp(x,y) \to \allstandb \rigidp(x,y) \big)$, which is not monodic.
By a reduction from the second of the above tiling problems, we show that adding one rigid binary predicate causes undecidability even for a sublogic of $\mathbb{S}^\mathrm{mon}_{\mathcal{SHOIQB}_s}$.

\begin{theorem}
Satisfiability of $\mathbb{S}^\mathrm{mon}_\mathcal{ALCOIF}$ TBoxes with one rigid binary predicate is undecidable, even when using just one nominal and one functionality statement.
\end{theorem}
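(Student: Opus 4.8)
The plan is to reduce the second (undecidable) tiling problem above --- existence of a $\mathbb{T}_\mathrm{und}$-tiling of $\mathbb{N}\times\mathbb{N}$ with a given initial condition --- to satisfiability of an $\mathbb{S}^\mathrm{mon}_{\mathcal{ALCOIF}}$ TBox that uses the single rigid binary predicate $\rigidp$, the single nominal $\{o\}$, and exactly one functionality statement, namely $\mathit{func}(\rigidp)$ (so $\rigidp$ denotes a rigid partial function on the domain). The intended models carry the full grid $\mathbb{N}\times\mathbb{N}$ inside their domain, with two ordinary roles $\pred{H}$ and $\pred{V}$ for the horizontal and vertical successors. I would reuse, essentially verbatim, the tile machinery of the preceding $\mathcal{ALCO}$ reduction: atomic tile concepts $\pred{T}_1,\ldots,\pred{T}_k$ made rigid on all non-$o$ elements via $\neg\{o\}\sqcap\pred{T}_\ell\sqsubseteq{\Box}_*\pred{T}_\ell$, the covering axiom $\top\sqsubseteq\pred{T}_1\sqcup\cdots\sqcup\pred{T}_k$, the compatibility axioms $\pred{T}_\ell\sqsubseteq\forall\pred{H}.\neg\pred{T}_{\ell'}$ for $(\ell,\ell')\notin H$ and $\pred{T}_\ell\sqsubseteq\forall\pred{V}.\neg\pred{T}_{\ell'}$ for $(\ell,\ell')\notin V$, and an initial-condition chain threaded along $\pred{H}$. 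Tile covering, compatibility and the initial row are thus handled exactly as before.

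The heart of the construction --- and the sole place where the rigid binary predicate is indispensable --- is forcing the $\pred{H}$/$\pred{V}$-structure to be a genuine grid, i.e.\ enforcing the confluence (parallelogram) property that the point reached by $\pred{H}$ then $\pred{V}$ coincides with the point reached by $\pred{V}$ then $\pred{H}$. In the finite NExpTime reduction this was obtained for free, because cells were addressed by binary coordinates stored in finitely many concept names; over $\mathbb{N}\times\mathbb{N}$ the coordinates are unbounded, no fixed set of concept names can identify a cell, and $\mathcal{ALCOIF}$ has no role composition. The rigid functional predicate $\rigidp$ supplies exactly the missing power: I would read $\rigidp$ as the \emph{diagonal} successor $(i,j)\mapsto(i{+}1,j{+}1)$. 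Because $\rigidp$ is rigid, each cell's diagonal target is pinned down uniformly across every precisification, and because $\rigidp$ is functional, each cell has at most one such target. The grid axioms then force both the $\pred{H}$-then-$\pred{V}$ path and the $\pred{V}$-then-$\pred{H}$ path out of any cell to land on a $\rigidp$-successor of that cell; functionality of $\rigidp$ collapses the two targets into one, which is precisely confluence.

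To express ``the $\pred{H}\pred{V}$- (resp.\ $\pred{V}\pred{H}$-)successor of $x$ is the $\rigidp$-successor of $x$'' within the two-variable, one-nominal budget, I would deploy the spy-point technique exactly as in the $\mathcal{ALCO}$ reduction, but now transmitting \emph{adjacency} rather than bounded coordinates through $o$. In each precisification a single cell is $\pred{P}$-linked to $o$; exploiting the rigidity of $\rigidp$ together with $\Diamond_*$ and $\Box_*$, the two two-step paths can be inspected one endpoint at a time, with $o$ and the inverse roles ($\rigidp^-$, $\pred{P}^-$) serving to carry the identity of the diagonal target between precisifications. This is the direct, ``unbounded'' analogue of the coordinate-propagation-to-$o$ mechanism used before, carried out along the persistent rigid diagonal instead of over finitely many coordinate bits; inverses and the one functionality statement are used precisely here, which is why the construction lives in $\mathcal{ALCOIF}$ rather than $\mathcal{ALCO}$.

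It then remains to prove both directions of the reduction. For soundness, any $\mathbb{T}_\mathrm{und}$-tiling $\mathsf{tile}$ of $\mathbb{N}\times\mathbb{N}$ yields a model: take $\Delta=(\mathbb{N}\times\mathbb{N})\cup\{o\}$, let $\pred{H}$, $\pred{V}$ and the rigid $\rigidp$ be the right-, up- and diagonal-successor maps, interpret the $\pred{T}_\ell$ rigidly according to $\mathsf{tile}$, and add the precisifications demanded by the spy-point axioms. For completeness, from any model I would extract a tiling by following $\pred{H}$ and $\pred{V}$ from the initial cell: confluence (from rigid functional $\rigidp$), the covering and compatibility GCIs, and rigidity of the tiles guarantee a well-defined, $H$- and $V$-compatible tile in every reachable cell, with the prescribed initial row. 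Since the whole TBox is polynomial in $n$ and $\mathbb{T}_\mathrm{und}$ is fixed, undecidability of the tiling problem transfers to satisfiability. I expect the confluence step to be the main obstacle: the delicate part is designing the spy-point/rigid-diagonal axioms so that the parallelogram property is genuinely enforced --- ruling out ``broken'' non-confluent models --- while keeping every modal subformula monodic and respecting the strict one-nominal and one-functionality budget.
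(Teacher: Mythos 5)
Your high-level strategy (reduction from the $\mathbb{N}\times\mathbb{N}$ tiling problem, reusing the tile machinery of the $\mathcal{ALCO}$ reduction, and using the nominal together with rigidity and functionality to enforce grid confluence) matches the paper's, but the one step that carries the entire theorem --- actually enforcing the parallelogram property --- is not carried out, and you flag it yourself as the unresolved ``main obstacle''. Worse, the specific allocation of resources you propose does not line up with a workable argument. You make $\rigidp$ the rigid \emph{diagonal} and spend the single functionality statement on $\rigidp$ itself, intending to derive confluence from ``both the $\pred{H}$-then-$\pred{V}$ path and the $\pred{V}$-then-$\pred{H}$ path out of $x$ land on a $\rigidp$-successor of $x$''. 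But that statement relates the endpoint of a two-step path back to its \emph{source}, i.e.\ it is of the form $\forall y\,\forall z\,(\pred{H}(x,y)\land\pred{V}(y,z)\to\rigidp(x,z))$ --- a genuinely three-variable (role-composition) property that $\mathcal{ALCOIF}$ cannot express, and the spy point does not help here because $o$ can only name one globally fixed individual, not ``the source of the current path''. Moreover, with $\pred{H}$ and $\pred{V}$ left non-rigid, the grid edges may differ from precisification to precisification, so a confluence condition enforced only in the one precisification where $x$ is $\pred{Pick}$ed would not transfer to a single coherent grid from which a tiling can be extracted.

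The paper resolves exactly this by distributing the resources differently: the rigid predicate $\rigidp$ \emph{is} the grid-successor relation itself (horizontal and vertical steps are both $\rigidp$-edges, distinguished by rigid unary markers $\pred{Even}$/$\pred{Odd}$), so the grid is identical in every precisification; and the single functionality statement is placed on a separate pointer role, $\mathit{func}(\pred{Point})$. Each element is $\pred{Pick}$ed in some precisification via $\Diamond_*\pred{Pick}$, and in that precisification both two-step $\rigidp$-endpoints (the Even-then-Odd one and the Odd-then-Even one) are forced to satisfy $\exists\pred{Point}^-.\{o\}$; since $\pred{Point}$ is functional, $o$ has at most one $\pred{Point}$-successor, so the two endpoints coincide. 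Note that this never relates an endpoint back to its source --- it only forces two endpoints to equal the same named individual's unique $\pred{Point}$-image --- which is precisely how the three-variable obstruction is circumvented. If you want to salvage your diagonal-based variant, you would in effect be pushed back to this design: the functionality must sit on the edge emanating from $o$, and the rigidity must sit on the grid edges, not on the diagonal.
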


For space reasons, we only briefly provide a set of GCIs enforcing an $\mathbb{N} \times \mathbb{N}$ grid, noting that a $\mathbb{T}_\mathrm{und}$-tiling on top can be obtained very similarly to the previous case.
Let $\rigidp$ be the distinguished rigid binary predicate, which we use to represent both horizontal and vertical grid connections (distinguishing them through extra unary rigid ``markers'' for even/odd grid rows). Let $\mathit{func}(\pred{Point})$ specify that the ``pointer predicate'' $\pred{Point}$ is functional and
put the following GCIs.

\medskip


$\top \sqsubseteq \exists\rigidp.\Box_*\pred{Even} \sqcap \exists\rigidp.\Box_*\pred{Odd} \sqcap \Diamond_{*} \pred{Pick}$


$\pred{Pick} \sqsubseteq \forall \rigidp.\big(\neg \pred{Even} \sqcup\forall \rigidp.(\neg \pred{Odd} \sqcup \exists \pred{Point}^-.\{o\}) \big) $  

$\pred{Pick} \sqsubseteq \forall \rigidp.\big(\neg \pred{Odd} \sqcup\forall \rigidp.(\neg \pred{Even} \sqcup \exists \pred{Point}^-.\{o\}) \big)$  

\medskip

In a nutshell, these GCIs ensure that every grid element $\delta$ will be $\pred{Pick}$ed in some precisification, and in that precisification the upper neighbour of $\delta$'s right neighbour is forced to coincide with the right neighbour of $\delta$'s upper neighbour, by having both being ``functionally $\pred{Point}$ed to'' from $o$.

We note that this finding contrasts with a positive result by \citeauthor{S5forALCIQrigidroles} (\citeyear{S5forALCIQrigidroles}), according to which -- in our nomenclature -- the satisfiability of TBoxes over $\mathcal{ALCIQ}$ with arbitrarily many rigid roles and one S5 modality allowed to occur in front of concept \emph{and role expressions}, is decidable in \TwoExpTime. Once more, this underlines the previous observation that while counting and inverses go reasonably well with standpoint modalities, nominals do not.

\section{Conclusions and Future Work}\label{sec:conclusions}

We have shown that monodic standpoints can be added to \Ctwo{} without increasing the \NExpTime reasoning complexity. We obtained this result by establishing a polynomial translation into plain \Ctwo, whose justification required rather elaborate model-theoretic constructions and arguments. On one hand, this finding advances the research into first-order modal logics, since our result subsume the case of monodic S5 over \Ctwo{} and even apply to logics with more expressive counting. On the other hand, we showed how the obtained result can be leveraged to prove that very expressive DLs subsuming popular W3C-standardized ontology languages can be endowed with monodic standpoints in a complexity-neutral way. We finally showed that in the presence of nominal concepts, \NExpTime-hardness already arises for much less expressive DLs, and lifting monodicity even incurs undecidability. 

For future work, it would be interesting to investigate the data complexity of our formalism.
Also it would be advantageous to find translations from versions of monodic standpoint OWL into plain OWL rather than \Ctwo{}, since this would allow to deploy existing highly optimized OWL reasoners for standpoint-aware ontological reasoning. While our results show that there are no complexity-theoretic barriers for this, our current translation approach heavily relies on features of \Ctwo{} that are beyond the capabilities of plain OWL.    

\clearpage

\section*{Acknowledgements}
This work is funded by the Agence Nationale de la Recherche, France (ANR) in project ANR-25-CE23-2478-01 (SPaRK), by
the Federal Ministry of Research, Technology and Space, Germany (BMFTR) in \href{https://www.scads.de}{ScaDS.AI},
and by BMFTR and DAAD in project 57616814 (SECAI).

\bibliographystyle{kr}
\bibliography{bib/references-fol,bib/references}
\appendix

\onecolumn
\section{Proofs of Section 2}

\begin{theorem-repeat}[of \Cref{theorem:fosl-equisatisfiable-to-S5}]
For any \FOSL formula $\phi$, an equisatisfiable S5 \FOSL formula $\mathsf{S5}(\phi)$ can be computed in polynomial time. The transformation preserves $\mathcal{C}^2$-ness and monodicity. 
\end{theorem-repeat}

\begin{proof}
Let $\phi$ be a monodic $\mathcal{C}^2$ \FOSL formula based on a signature $\tuple{\Preds, \Consts, \Stands}$. We show that for any formula $\phi$, the formula $\trans(\phi)$, based on the signature $\tuple{\Preds\cup\Stands, \Consts, \emptyset}$ is equisatisfiable and preserves $\mathcal{C}^2$-ness and monodicity. 
The function $\trans$ replaces every $\standd \psi$ by $\allstandd (e \land \psi)$, introducing one nullary predicate for every standpoint symbol and translating set expressions for standpoints into boolean expressions. The function $\trans$ is defined as follows

\begin{align}
		\trans(\pred{P}(t_1,\ldots,t_k)) & = \pred{P}(t_1,\ldots,t_k)                                                           \\
		\trans(t_1 \,\dot{=}\, t_2) & = t_1 \,\dot{=}\, t_2                                                           \\
		\trans(\lnot\psi)                & =\lnot\trans(\psi)                                                                 \\
		\trans(\psi_1{\,\land\,}\psi_2)  & = \trans(\psi_1){\,\land\,}\trans(\psi_2)                                    \\
		\trans(\forall x\psi)            & = \forall x(\trans(\psi))                                                         \\
		\trans(\standd{\ste}\psi)     & = \standd{*}(\transE(\ste) \land \trans(\psi))
	\end{align}

    Therein, $\transE$ implements the semantics of standpoint expressions, providing for
each expression \mbox{$\ste\in\StandExps$} a propositional formula $\transE(\ste)$ as follows\/:%
\begin{align}
	\transE(\sts)                  & = \sts                                       \\
	\transE(\ste_1\cup\ste_2)      & = \transE(\ste_1)\lor\transE(\ste_2)      \\
	\transE(\ste_1\cap\ste_2)      & = \transE(\ste_1)\land\transE(\ste_2)     \\
	\transE(\ste_1\setminus\ste_2) & = \transE(\ste_1)\land\neg\transE(\ste_2)
\end{align}

Let us show that $\phi$ and $\trans(\phi)$ are equisatisfiable. Let $\kstruct=\tuple{\Dom,\Precs,\sigma,\gamma}$ be a \FOSL structure and let $\kstruct'=\tuple{\Dom,\Precs,\sigma',\gamma'}$ be such that 
$\sigma'$ is the empty function (because there are no standpoint symbols -- note that $\sigma'$ when lifted to standpoint expressions still maps $*$ to $\Precs$) and 
$\gamma'$ is the extension of $\gamma$ to the additional unary predicates of the signature of $\trans(\phi)$ such that $\st^{\gamma'(\pr)}={\mathbf{t}}$ iff $\pr\in\sigma(\st)$.

\noindent We begin by showing, on the structure of $\ste$, that $\pr\in\sigma(\ste)$ iff $\kstruct',\pr\models\transE(\ste)$. 

\begin{description}
    \item [$\ste$ is of the form $\st$]. By construction  $\kstruct',\pr\models\st$ iff $\pr\in\sigma(\st)$.
    \item [$\ste$ is of the form  $\ste_1\cup\ste_2$]. 
    \begin{itemize}[leftmargin=1em,label=-]
        \item $\pr\in\sigma(\ste_1)\cup\sigma(\ste_2)$ iff $\pr\in\sigma(\ste_1)$ or $\pr\in\sigma(\ste_2)$ by the semantics
        \item $\pr\in\sigma(\ste_1)$ or $\pr\in\sigma(\ste_2)$ iff $\kstruct',\pr\models\transE(\ste_1)$ or $\kstruct',\pr\models\transE(\ste_2)$ by induction
        \item $\kstruct',\pr\models\transE(\ste_1)$ or $\kstruct',\pr\models\transE(\ste_2)$ iff $\kstruct',\pr\models\transE(\ste_1)\lor\transE(\ste_2)$ by the semantics
    \end{itemize}
    
    \item [$\ste$ is of the form $\ste_1\cap\ste_2$]. 
    \begin{itemize}[leftmargin=1em,label=-]
        \item $\pr\in\sigma(\ste_1)\cap\sigma(\ste_2)$ iff $\pr\in\sigma(\ste_1)$ and $\pr\in\sigma(\ste_2)$ by the semantics
        \item $\pr\in\sigma(\ste_1)$ and $\pr\in\sigma(\ste_2)$ iff $\kstruct',\pr\models\transE(\ste_1)$ and $\kstruct',\pr\models\transE(\ste_2)$ by induction
        \item $\kstruct',\pr\models\transE(\ste_1)$ and $\kstruct',\pr\models\transE(\ste_2)$ iff $\kstruct',\pr\models\transE(\ste_1)\land\transE(\ste_2)$ by the semantics
    \end{itemize}
    \item [$\ste$ is of the form  $\ste_1\setminus\ste_2$].
    \begin{itemize}[leftmargin=1em,label=-]
        \item $\pr\in\sigma(\ste_1)\setminus\sigma(\ste_2)$ iff $\pr\in\sigma(\ste_1)$ and $\pr\notin\sigma(\ste_2)$ by the semantics
        \item $\pr\in\sigma(\ste_1)$ and $\pr\notin\sigma(\ste_2)$ iff $\kstruct',\pr\models\transE(\ste_1)$ and $\kstruct',\pr\nvDash\transE(\ste_2)$ by induction
        \item $\kstruct',\pr\models\transE(\ste_1)$ and $\kstruct',\pr\nvDash\transE(\ste_2)$ iff $\kstruct',\pr\models\transE(\ste_1)\land\neg\transE(\ste_2)$ by the semantics
    \end{itemize}
\end{description}

Now, we show inductively on the structure of $\phi$ that for all $\pr\in\Precs$, $\kstruct,\pr\models\phi$ iff $\kstruct',\pr\models\trans(\phi)$.
The cases (1)-(4) are trivial, so we focus on case (5).
Thus we show that $\kstruct,\pr\models\standd{\ste}\psi$ iff $\kstruct',\pr\models\standd{*}(\transE(\ste) \land \trans(\psi))$. 
\begin{itemize}[leftmargin=1em,label=-]
    \item $\kstruct,\pr\models\standd{\st}\psi$ iff $\kstruct,\pr'\models\psi$ for some $\pr'\in\sigma(\ste)$ by the semantics
    \item $\kstruct,\pr'\models\psi$ iff $\kstruct',\pr'\models\trans(\psi)$ by induction 
    \item $\pr'\in\sigma(\ste)$ iff $\kstruct',\pr'\models\transE(\ste)$, by the proof of $\transE$
    \item $\kstruct',\pr'\models\trans(\psi)$ and $\kstruct',\pr'\models\transE(\ste)$ iff $\kstruct',\pr\models\allstandd(\transE(\ste)\land\trans(\psi))$ by the semantics
\end{itemize}

%

Once equisatisfiability is shown, a routine inspection of the translation ensures that it preserves C2-ness and monodicity, it can be done in polynomial time and its output is of polynomial size. This translation is similar in spirit to previous translations of ours and to Wolter's
\end{proof}

\begin{theorem-repeat}[of \Cref{lemma:permutations-model-the-same}]
Let $\phi$ be a frugal \SmonCtwo formula and $\modelPR$ the \emph{$\Preds_{\rigidp}$-stable permutational closure} of a standpoint structure $\kstruct$. Let $\varassign'=f'\circ f^{-1}\circ\varassign$. Then, 
$$\modelPR,\!(\pr,f),\!\varassign \models \phi \Longleftrightarrow \modelPR,\!(\pr,f'),\!\varassign' \models \phi$$
\end{theorem-repeat}

\begin{proof}
    We prove this by induction on the structure of $\phi$.
    \begin{description}
    \item[$\phi$ is $\pred{P}(z)$]. Follows from the construction of $\modelPR$ (Def. \ref{def:perm-closure})
    \item[$\phi$ is $\pred{P}(z,z')$]. idem
    \item[$\phi$ is $z \,\dot{=}\, z'$]. idem
    \item[$\phi$ is $\neg\psi$]. 
    \begin{itemize}[leftmargin=1em,label=-]
        \item $\modelPR,\!(\pr,f),\!\varassign \models \neg\psi$ iff $\modelPR,\!(\pr,f),\!\varassign \nvDash \psi$ by the semantics
        \item $\modelPR,\!(\pr,f),\!\varassign \nvDash \psi$ iff $\modelPR,\!(\pr,f'),\!\varassign' \nvDash \psi$ by induction
        \item $\modelPR,\!(\pr,f'),\!\varassign' \nvDash \psi$ iff $\modelPR,\!(\pr,f'),\!\varassign' \models \neg \psi$ by the semantics
    \end{itemize}
    \item[$\phi$ is $\psi_1\land\psi_2$]. Similarly easy.
    \item[$\phi$ is $\exists^{\lhd n}  z.\psi$]. 
    \begin{itemize}[leftmargin=1em,label=-]
        \item $\modelPR,\!(\pr,f),\!\varassign \models \exists^{\lhd n}  z.\psi$ iff $|\{ \delta \mid \modelPR,\!(\pr,f),\!\varassign_{\set{z\mapsto\de}}\models\psi \}|\lhd n$ by the semantics
        \item For all $\delta\in\Delta$, $\modelPR,\!(\pr,f),\!\varassign_{\set{z\mapsto\de}}\models\psi$ iff $\modelPR,\!(\pr,f'),\!\varassign'_{\set{z\mapsto f'\circ f^{-1}(\de)}}\models\psi$ by induction
        \item Since $f'\circ f^{-1}$ is a bijection, $|\{ \delta \mid \modelPR,\!(\pr,f),\!\varassign_{\set{z\mapsto\de}}\models\psi \}|\lhd n$ iff $|\{ f'\circ f^{-1}(\delta) \mid \modelPR,\!(\pr,f'),\!\varassign'_{\set{z\mapsto f'\circ f^{-1}(\de)}}\models\psi \}|\lhd n$
        \item $|\{ f'\circ f^{-1}(\delta) \mid \modelPR,\!(\pr,f'),\!\varassign'_{\set{z\mapsto f'\circ f^{-1}(\de)}}\models\psi \}|\lhd n$ iff $\modelPR,\!(\pr,f'),\!\varassign' \models\exists^{\lhd n}  z.\psi$ by the semantics
        
    \end{itemize}    
    \item[$\phi$ is $\allstandd\psi$]. If $\psi$ has a free variable, let that be $z$. 
    \begin{itemize}[leftmargin=1em,label=-]
        \item $\modelPR,\!(\pr,f),\!\varassign \models \allstandd\psi$ iff there is $(\pr_{\psi},f_{\psi})\in\Precs$ such that $\modelPR,\!(\pr_{\psi},f_{\psi}),\!\varassign\models \psi$ by the semantics
        \item $\modelPR,\!(\pr_{\psi},f_{\psi}),\!\varassign\models \psi$ iff $\modelPR,\!(\pr_{\psi},f'_{\psi}),\!\varassign'\models\psi$ for  $f_{\psi}'=f'\circ f^{-1} \circ f_{\psi}$ by induction
        \item $\modelPR,\!(\pr_{\psi},f'_{\psi}),\!\varassign'\models\psi$ iff $\modelPR,\!(\pr,f'),\!\varassign'\models\allstandd\psi$ by the semantics 
    \end{itemize}
\end{description}
    
\end{proof}

\begin{theorem-repeat}[of \Cref{theorem:sat-in-permutational-closure}]
Let $\phi$ be a satisfiable frugal \SmonCtwo formula over the signature $\tuple{\Preds,\emptyset,\{*\}}$.
Let $Dia_{\phi}$ denote the diamond subformulae of $\phi$ and $\mathit{FreeDia}_{\phi}$ the diamond subformulae with one free variable.
Then there is a standpoint structure $\kstruct=\tuple{\Dom,\Precs,\sigma,\gamma}$ over $\tuple{\Preds \uplus \Preds_{\rigidp},\emptyset,\{*\}}$ with
\begin{itemize}
\item $|\Preds_{\rigidp}| = \ell = |\mathit{FreeDia}_{\phi}|$
\item $|\Precs| \leq |Dia_{\phi}|\cdot 2^{|Dia_{\phi}|} $
\end{itemize}
such that $\modelPR$ is a model of $\phi$. 
\end{theorem-repeat}

\begin{proof}
\noindent We show that if $\phi$ has a model $\kstruct_{o}=\tuple{\Dom,\Precs_{o},\sigma_{o},\gamma_{o}}$, then there is also a structure $\kstruct=\tuple{\Dom,\Precs,\sigma,\gamma}$ of the specified shape such that $\modelPR=\tuple{\Dom,\Precs',\sigma',\gamma'}$ models $\phi$. We assume that $FreeDia_{\phi}=\{\allstandd\phi_1, \ldots , \allstandd\phi_\ell\}$ is linearly ordered and start with some definitions:

\begin{itemize}
    \item $\Preds_{\rigidp}=\{\rigidp_1,\ldots,\rigidp_{\ell}\}$ is a fresh set of predicates 
    \item $f_{id}:\Delta\to\Delta$ denotes the identity function.
\end{itemize}

\medskip
Let $\kstruct_{\rigidp}=\tuple{\Dom,\Precs_{o},\sigma_{o},\gamma_{\rigidp}}$ be the structure based on the signature $\tuple{\Preds \uplus \Preds_{\rigidp},\emptyset,\{*\}}$ and obtained by adding to $\kstruct_{o}$  the extension of $\rigidp_i$ for $i \in \{1,\ldots,\ell\}$ as the set defined by $\{\delta \mid \kstruct_{o},\textcolor{blue}{\pi,} z\mapsto \delta \models \allstandd\phi_i \}$, where $z \in \{x,y\}$ is the one free variable of $\phi_i$, and $\pi\in \Precs_{o}$ is arbitrary (the choice is irrelevant) due to rigidity of $\allstandd\phi_i$.
It is clear that $\kstruct_{\rigidp}$ is a model of $\phi$ iff $\kstruct_{o}$ is a model of $\phi$.

\medskip

\noindent Now, let $\kstruct$ be such that $\sigma$ and $\gamma$ are the restrictions of $\sigma_{o}$ and $\gamma_{\rigidp}$ on $\Precs$, where $\Precs$ is obtained from $\Precs_o$ as follows:
\begin{enumerate}[ label={(P\arabic*)}, ref={P\arabic*}]
    \item If $Dia_{\phi}$ is empty, take an arbitrary precisification $\pr\in\Precs_o$. Otherwise,
    \item For each $\allstandd\psi\in Dia_{\phi}\setminus FreeDia_{\phi}$ such that $\kstruct\models\allstandd\psi$, we take some $\pr\in\Precs_o$ such that $\kstruct,\pr\models\psi$
    \item For each realized $T \subseteq \Preds_{\rigidp}$, pick some $\delta \in \Delta$ that has the $\rigidp$-type $T$, i.e. satisfying
    \begin{itemize}
    \item $\delta \in \pred{E}_j^{\kstruct_{\rigidp}}$ for all $\mathtt{E}_j\in T$ and 
    \item $\delta \notin \pred{E}_j^{\kstruct_{\rigidp}}$ for all $\mathtt{E}_j\in \Preds_{\rigidp} \setminus T$,
    \end{itemize}
    and, for every $\pred{E}_i\in T$, include in $\Precs$ some $\pi_{T,i}\in\Precs_o$ satisfying $\delta \in \phi_i^{\gamma_o(\pr_{T,i})}$ 
\end{enumerate}

\medskip

We have to show that if $\kstruct_{\rigidp}\models\phi$ then $\modelPR\models\phi$. If $\kstruct_{\rigidp}\models\phi$, then for all $\pr\in\Precs\subseteq\Precs_{o}$, $\kstruct_{\rigidp},\pr\models\phi$. 
We prove inductively on the structure of $\phi$ that, for $\pr\in\Precs$, that $\kstruct_{\rigidp},\pr,\varassign\models\phi$ iff $\modelPR,(\pi,f_{id}),\varassign\models\phi$.

\begin{description}
    \item[$\phi$ is of the form $\pred{P}(z)$]. Follows trivially

    \item[$\phi$ is of the form $\pred{P}(z,z')$]. Idem
    \item[$\phi$ is of the form $z \,\dot{=}\, z'$]. Idem
    \item[$\phi$ is of the form $\neg\psi$]. 
    $\kstruct_\rigidp,\!\pr,\!\varassign \models \neg\psi$ iff $\kstruct_\rigidp,\!\pr,\!\varassign\nvDash\psi$ (by the semantics), iff
    $\modelPR,(\pi,f_{id}),\varassign\nvDash\psi$ (by induction), iff 
    $\modelPR,(\pi,f_{id}),\varassign\models\neg\psi$ (by the semantics)
    \item[$\phi$ is of the form $\psi_1\land\psi_2$]. As easy as above
    \item[$\phi$ is of the form $\exists^{\lhd n}  x.\psi$]. For the forward direction, assume that $\kstruct_\rigidp,\!\pr,\!\varassign \models \exists^{\lhd n}  x.\psi$. Let $\Delta^{+}=\{ \delta\in\Delta \mid \kstruct_\rigidp,\!\pr,\!\varassign_{\set{x\mapsto\de}}\models\psi \}$ and $\Delta^{-}=|\{ \delta\in\Delta \mid \kstruct_\rigidp,\!\pr,\!\varassign_{\set{x\mapsto\de}}\nvDash\psi\}$. Then $\Delta=\Delta^{+}\cup\Delta^{-}$ and $|\Delta^{+}| \lhd n$ by the semantics. For all $\delta\in\Delta^{-}$ we have that $\modelPR,(\pi,f_{id}),\!\varassign_{\set{x\mapsto \de}}\nvDash\psi$ by induction. Hence, $|\Delta^{+}|=|\{ \delta \mid \modelPR,(\pi,f_{id}),\!\varassign_{\set{x\mapsto \de}}\models\psi \} | \lhd n$. For the backwards direction the argument can be reversed.
    \item[$\phi$ is of the form $\allstandd\psi$ and $\phi\in Dia_{\phi}\setminus FreeDia_{\phi}$]. 
    For the forward direction, assume that $\kstruct_\rigidp,\!\pr,\!\varassign \models \allstandd\psi$. Then by construction (P2) and the semantics of $\kstruct_\rigidp$, there is a precisification $\pr'\in\Precs\subseteq\Precs_o$ such that $\kstruct_\rigidp,\pr',\varassign\models\psi$. By induction, we obtain $\modelPR,(\pr',f_{id}),\!\varassign\models \psi$ and thus $\modelPR,(\pi,f_{id}),\!\varassign \models \allstandd\psi$ as desired. 
    
    For the backwards direction, assume $\modelPR,(\pi,f_{id}),\!\varassign \models \allstandd\psi$. Then by the semantics $\modelPR,(\pr',f_{id}),\!\varassign' \models \psi$ for some $\pr'\in\Precs$. By induction we obtain $\kstruct_\rigidp,\pr',\varassign\models\psi$ and hence $\kstruct_\rigidp,\pr,\varassign\models\allstandd\psi$.
    
    \item[$\phi$ is of the form $\allstandd\psi_i$ and $\phi\in FreeDia_{\phi}$]. Assume $z \in \{x,y\}$ is the free variable of $\phi$ and $\varassign(z)=\delta$. Forward direction,
    
    \begin{enumerate} 
        \item Assume $\kstruct_\rigidp,\!\pr,\!\varassign \models \allstandd\psi_i$ 
        \item By the construction of $\kstruct_\rigidp$, there is one $\rigidp$-type $T$ such that $\rigidp_i\in T$ and for all $\rigidp\in\Preds_\rigidp$ we have $\rigidp\in T$ iff $\delta\in\rigidp^{\kstruct_{\rigidp}}$
        \item By 2 and the construction of $\kstruct$ (P3), there is some $\pr_{T,i}\in\Precs$ and $\delta'\in\Delta$ such that $\delta'\in\psi_i^{\gamma_o(\pr_{T,i})}$ and $\delta'\in\rigidp^{\kstruct_{\rigidp}}$ for all $\rigidp\in T$ 
        \item By 1 and 3, we have $\kstruct_\rigidp,\pr_{T,i},\!\varassign'\models \psi_i$ for $\varassign'(z)=\delta'$ 
        \item By 4 and induction, $\modelPR,(\pr_{T,i},f_{id}),\!\varassign' \models \psi_i$.
        \item By 3 and the construction of $\modelPR$, there is a permutation $f$ such that $f(\delta')=\delta$
        \item By 5, 6 and \Cref{lemma:permutations-model-the-same}, $\modelPR,(\pr_{T,i},f),\!\varassign \models \psi_i$
        \item By 7 and the semantics, $\modelPR,(\pr,f_{id}),\!\varassign\models \allstandd\psi_i$.
    \end{enumerate}
    For the backwards direction,
    \begin{enumerate}
        \item Assume $\modelPR,(\pr,f_{id}),\!\varassign\models \allstandd\psi_i$ 
        \item By 1, and the semantics, there are some $\pr'$ and $f$ such that $\modelPR,(\pr',f),\!\varassign\models \psi_i$. Let $f^{-1}(\delta)=\delta'$
        \item By 2 and \Cref{lemma:permutations-model-the-same}, $\modelPR,(\pr',f_{id}),\!\varassign'\models \psi_i$ for $\varassign'(z)=\delta'$ 
        \item By induction and 3, we obtain that $\kstruct_\rigidp,\pr',\!\varassign'\models\psi_i$, hence $\kstruct_\rigidp,\!\varassign'\models\allstandd\psi_i$
        \item By 4 and the construction of $\kstruct_\rigidp$, $\delta'\in\rigidp_{i}^{\kstruct_\rigidp}$, and by the unary base case, $\delta'\in\rigidp_{i}^{\modelPR}$
        \item By 5 and \Cref{def:perm-closure}, $\delta\in\rigidp_{i}^{\modelPR}$ since $f(\delta')=\delta$, and by the unary base case, $\delta\in\rigidp_{i}^{\kstruct_\rigidp}$
        \item By 6 and the construction of $\kstruct_\rigidp$, $\kstruct_\rigidp,\!\varassign\models\allstandd\psi_i$ 
    \end{enumerate}
\end{description}

Until here we have shown that if $\kstruct_{o}\models\phi$ then
$\kstruct_{\rigidp}\models\phi$, if $\kstruct_{\rigidp}\models\phi$, then for all $\pr\in\Precs\subseteq\Precs_{o}$, $\kstruct_{\rigidp},\pr\models\phi$, and for all $\pr\in\Precs$, if $\kstruct_{\rigidp},\pr\models\phi$ then $\modelPR,(\pi,f_{id}),\varassign\models\phi$. Finally, by \Cref{lemma:permutations-model-the-same} we have that for all $\pr\in\Precs$ and $f\in\mathbb{P}_{\rigidp}$, if $\modelPR,(\pi,f_{id}),\varassign\models\phi$ then $\modelPR,(\pi,f),f\circ\varassign\models\phi$, therefore concluding that if $\kstruct_{o}\models\phi$ then $\modelPR\models\phi$.

\end{proof}

\begin{theorem-repeat}[of \Cref{lemma:stacked-model-satisfies-stack-formula}]
    The stacked model $\mathcal{I}^{\kstruct}$ of a first-order standpoint structure $\kstruct$ satisfies $\phi^m_\mathrm{stack}$.
\end{theorem-repeat}

\begin{proof}
    We show that the stacked model $\mathcal{I}^{\kstruct}$ of $\kstruct=\tuple{\Delta,\Precs,\sigma,\gamma}$, with $|\Precs|=2^m$ and signature $\tuple{\Preds,\emptyset,\{*\}}$, satisfies $\phi^m_\mathrm{stack}$ by showing the satisfaction of each of its conjuncts (F1)-(F5).

\begin{description}
    \item[ \ref{def-enum:everything-but-last-has-next}]  $\forall x.(\bigvee_{0\leq j< m} \neg \pred{L}_j(x)) \to \exists^{=1}y.\pred{F}(x,y)$. 
    Recall \Cref{def:stacked-model}.\ref{def-enum:stacked-Ls}: $\pred{L}_j^\mathcal{I} = \{ (\delta,i) \mid binary(i)[j]=1 \}$ . By \ref{def-enum:stacked-Ls}, the antecedent is satisfied for all assignments $x\to(\delta,i)$ with $i<2^m-1$. Then, recall  \Cref{def:stacked-model}.\ref{def-enum:stacked-F}: $\pred{F}^\mathcal{I} = \{ ((\delta,i),(\delta,{i+1})) \mid \delta \in \Delta,\ 0\leq i < 2^m-1 \}$. By \ref{def-enum:stacked-F}, the consequent is satisfied for all assignments $x\to(\delta,i)$ with $i<2^m-1$ as required.

    \item[\ref{def-enum:last-has-no-next}] $\forall x.(\bigwedge_{0\leq j< m} \pred{L}_j(x)) \to \exists^{=0}y.\pred{F}(x,y)$. By \ref{def-enum:stacked-Ls}, the antecedent is satisfied for all assignments $x\to(\delta,i)$ with $i=2^m-1$. And by \ref{def-enum:stacked-F}, there is no $y\to(\delta',k)$ such that $((\delta,2^m-1),(\delta',k))\in(\pred{F}(x,y))^{\mathcal{I}^{\kstruct}}$, so the consequent is satisfied as required.

    \item[ \ref{def-enum:everything-but-first-has-previous}] $\forall x.(\bigvee_{0\leq j< m} \pred{L}_j(x)) \to \exists^{=1}y.\pred{F}(y,x)$. By \ref{def-enum:stacked-Ls}, the antecedent is satisfied for all assignments $x\to(\delta,i)$ with $i>0$, in which case by \ref{def-enum:stacked-F} there exists exactly one assignment $y\to(\delta,{i-1})$ such that $\pred{F}(y,x)$, thus the consequent is satisfied as required.

    \item[ \ref{def-enum:first-has-no-previous}] $\forall x.(\bigwedge_{0\leq j< m} \neg\pred{L}_j(x)) \to \exists^{=0}y.\pred{F}(y,x)$. By \ref{def-enum:stacked-Ls}, the antecedent is satisfied for all assignments $x\to(\delta,i)$ with  $i=0$, in which case by \ref{def-enum:stacked-F} there is no assignment $y\to(\delta',k)$ such that $((\delta',k),(\delta,0))\in(\pred{F}(y,x))^{\mathcal{I}^{\kstruct}}$, so the consequent is satisfied as required.

    \item[ \ref{def-enum:level-counter}] First, by \ref{def-enum:stacked-F}, the antecedent is satisfied for all assignments $x\to(\delta,i)$ and $y\to(\delta,i+1)$ such that $\delta\in\Delta$ and $i<2^m-1$. 
    Then, notice that for all $0 \leq j < m$, if for any $0\leq j' < j$ we have $binary(i)[j']=0$, then we have $binary(i+1)[j]=binary(i)[j]$ (since the first of these previous $0$s of $i$ would have been the one to flip in $i+1$, leaving the preceding part of the encodings equal). And the opposite direction, if $binary(i+1)[j]\neq binary(i)[j]$, then all preceding positions $j'$ of $i$ must be $1$s. Given the construction of $\pred{L}_j$ by \ref{def-enum:stacked-Ls}, it is easy to see that the consequent encodes this implication, and hence it satisfies all assignments $x\to(\delta,i)$ and $y\to(\delta,i+1)$ as required. 
    
    \item[ \ref{def-enum:binary-at-same-level}] $\forall x. \forall y. \pred{P}(x,y) \to \bigwedge_{{0\leq j< m}}\pred{L}_j(x)\leftrightarrow\pred{L}_j(y)$ for all binary $\pred{P} \in \Preds$. By \ref{def-enum:stacked-binary-preds}, the antecedent can only be satisfied for assignments $x\to(\delta,i)$ and $y\to(\delta',k)$ such that $i=k$. Then we have $(\delta,i)\in\pred{L}_j^{\mathcal{I}^{\kstruct}}$ iff $binary(i)[j]=1$ iff $(\delta',i)\in\pred{L}_j^{\mathcal{I}^{\kstruct}}$ by \ref{def-enum:stacked-Ls}. Hence the consequent is satisfied for all assignments $x\to(\delta,i)$ and $y\to(\delta',i)$ as required.
\end{description}
\end{proof}

\begin{lemma}\label[lemma]{claim:stacked-bijection}
    Let $\mathcal{I}$ be a model of $\phi^m_\mathrm{stack}$. Then, there is a stacked interpretation $\mathcal{I}^{\kstruct}$ of some first-order standpoint structure $\kstruct$ with $2^m$ precisifications and a bijective function $\stacked:\Delta'\to\Delta \times \{0,\ldots,2^m-1\}$ such that for $\stacked(\delta'_1)=(\delta_1, i)$ and $\stacked(\delta'_2)=(\delta_2, k)$,
    \begin{enumerate}[ label={(C\arabic*)}, ref={(C\arabic*)}]
        \item $binary(i)[j]=1$ iff $\delta'\in\pred{L}_j^{\mathcal{I}}$ for $0\leq j < m$.\label{claim-enum:num}
        \item $\delta_1=\delta_2$ iff $(\delta'_1,\delta'_2)\in(\pred{F}^{\mathcal{I}})^+\cup(({\pred{F}}^{-1})^{\mathcal{I}})^+$, with $(\pred{F}^{\mathcal{I}})^+$ and $(({\pred{F}}^{-1})^{\mathcal{I}})^+$ the transitive closures of $\pred{F}^{\mathcal{I}}$ and $({\pred{F}}^{-1})^{\mathcal{I}}$. \label{claim-enum:dom}
        \item $\delta_1=\delta_2$ and $j=i+1$ iff $(\delta'_1,\delta'_2)\in\pred{F}^{\mathcal{I}}$. \label{claim-enum:F}
        \item $\pred{P}^\Inter = \bigcup_{0 \leq i < 2^m} \pred{P}^{\gamma(\pi_i)} \times \{i\}$ \ for all unary $\pred{P} \in \Preds$,  \label{def-enum:stacked-unary-preds}
        \item $\pred{P}^\Inter =  \{((\delta_1,i),(\delta_2,i)) \mid 0 \,{\leq}\, i \,{<}\, 2^m, (\delta_1,\delta_2) \,{\in}\, \pred{P}^{\gamma(\pi_i)} \}$ for all binary $\pred{P} \in \Preds$. \defend \label{def-enum:stacked-binary-preds}

    \end{enumerate}
\end{lemma}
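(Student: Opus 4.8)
The plan is to construct the level function $\num$, the component map $\dom$, and the bijection $\stacked$ explicitly from $\mathcal{I}$, and then read off the standpoint structure $\kstruct$. First I would define $\num:\Delta'\to\{0,\ldots,2^m-1\}$ by reading the bits off the level predicates: set $\num(\delta')=i$ iff for every $j<m$ we have $binary(i)[j]=1$ exactly when $\delta'\in\pred{L}_j^{\mathcal{I}}$; this makes \ref{claim-enum:num} hold by construction. Next I would let $\approx$ be the smallest equivalence relation containing $\pred{F}^{\mathcal{I}}$, take $\Delta$ to be the set of $\approx$-classes, and set $\dom(\delta')=[\delta']_\approx$, so that \ref{claim-enum:dom} holds since $(\delta'_1,\delta'_2)\in(\pred{F}^{\mathcal{I}})^+\cup(({\pred{F}}^{-1})^{\mathcal{I}})^+$ iff $\delta'_1\approx\delta'_2$. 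Finally I would put $\stacked(\delta')=(\dom(\delta'),\num(\delta'))$.

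The crux of the argument is to show that every element of level $i$ occupies position $i$ of an $\pred{F}$-chain of length exactly $2^m$; concretely, that $\delta'$ has precisely $2^m-1-\num(\delta')$ many $(\pred{F}^{\mathcal{I}})^+$-successors and $\num(\delta')$ many $(\pred{F}^{\mathcal{I}})^+$-predecessors. I would prove this by induction, using \ref{def-enum:everything-but-last-has-next}--\ref{def-enum:first-has-no-previous} to control the existence of immediate $\pred{F}$-successors and predecessors (they exist exactly when the level is not maximal, respectively not minimal), and using \ref{def-enum:level-counter} to ensure that traversing an $\pred{F}$-edge increments the level by exactly one. The base cases (level $2^m-1$ has no successor, level $0$ has no predecessor) come from \ref{def-enum:last-has-no-next} and \ref{def-enum:first-has-no-previous}, while the inductive steps combine \ref{def-enum:everything-but-last-has-next}/\ref{def-enum:everything-but-first-has-previous} with \ref{def-enum:level-counter}. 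From this chain structure I immediately obtain \ref{claim-enum:F}: a direct $\pred{F}$-edge links same-$\dom$ elements at consecutive levels (forward from the chain structure and the fact that $\pred{F}$-linked elements lie in the same $\approx$-class, backward because \ref{def-enum:level-counter} forces the level to rise by exactly one along any $\pred{F}$-edge and each chain carries a unique element per level).

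With the chain structure in place, I would establish that $\stacked$ is a bijection. Injectivity holds because two distinct elements sharing a $\dom$-value lie on the same $\pred{F}$-chain, along which levels strictly increase, so they cannot share a level; surjectivity holds because each $\approx$-class is exactly one $\pred{F}$-chain of length $2^m$ and hence contains exactly one element of each level $0,\ldots,2^m-1$. I would then define $\kstruct=\tuple{\Delta,\Precs,\sigma,\gamma}$ with $\Precs=\{\pi_0,\ldots,\pi_{2^m-1}\}$ and $\sigma(*)=\Precs$, setting $\pred{P}^{\gamma(\pi_i)}$ so that conditions (C4) and (C5) hold: the unary clause reads $\pred{P}^{\gamma(\pi_i)}$ off the level-$i$ elements of $\pred{P}^{\mathcal{I}}$, and the binary clause is well-defined precisely because \ref{def-enum:binary-at-same-level} guarantees that every binary $\pred{P}\in\Preds$ relates only elements of equal level, so a consistent per-precisification reading exists. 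A short check confirms that the resulting $\mathcal{I}^{\kstruct}$ is exactly the stacked interpretation of this $\kstruct$ in the sense of \Cref{def:stacked-model}.

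The main obstacle I anticipate is the chain-length argument driven by the level-counter axiom \ref{def-enum:level-counter}: one must verify that this clause faithfully encodes the binary-increment-with-carry relationship between the levels of $\pred{F}$-connected elements, so that $\num$ genuinely increases one step at a time along each chain and the successor and predecessor counts come out exactly right. Once this combinatorial core is settled, everything else (\ref{claim-enum:num}, \ref{claim-enum:dom}, bijectivity, and the predicate readings (C4)--(C5)) is routine bookkeeping.
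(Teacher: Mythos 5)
Your proposal is correct and follows essentially the same route as the paper's own proof: the same definitions of $\num$, $\dom$ (via $\pred{F}$-connected components / the equivalence closure of $\pred{F}^{\mathcal{I}}$) and $\stacked$, the same inductive chain-length argument counting $(\pred{F}^{\mathcal{I}})^+$-successors and -predecessors from clauses \ref{def-enum:everything-but-last-has-next}--\ref{def-enum:level-counter}, the same injectivity/surjectivity argument, and the same appeal to \ref{def-enum:binary-at-same-level} to read off the binary predicates per precisification. The ``main obstacle'' you flag (that \ref{def-enum:level-counter} encodes binary increment with carry) is exactly the step the paper also verifies, so nothing is missing.
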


\begin{proof}
    Let $\mathcal{I}= (\Delta', \cdot^\mathcal{I})$ be a first-order interpretation over the signature $\tuple{\Preds \cup \{\pred{F},\pred{L}_0, \ldots, \pred{L}_{m-1}\},\emptyset}$. Let $\kstruct=\tuple{\Dom,\Precs,\sigma,\gamma}$ be the standpoint structure over the signature $\tuple{\Preds,\emptyset,\{*\}}$ with $\Precs = \{\pi_0,\pi_1,\ldots,\pi_{2^m-1}\}$, $|\Delta|$ the number of disconnected components of $\pred{F}^{\mathcal{I}}$, and for all $\pi_i\in\Precs$, $\pred{P}^{\gamma(\pi_i)}=\{\dom(\delta')\mid\delta'\in\pred{P}^{\mathcal{I}},\num(\delta')=i\}$ for all unary $\pred{P} \in \Preds$, and $\pred{P}^{\gamma(\pi_i)}=\{(\dom(\delta_1'),\dom(\delta_2'))\mid(\delta_1',\delta_2')\in\pred{P}^{\mathcal{I}},\num(\delta_1')=i\}$ for all binary $\pred{P} \in \Preds$, where:
    \begin{itemize}
        \item $\num:\Delta'\rightarrow\{0,\ldots,2^m-1\}$ is such that $\num(\delta')=i$ if $binary(i)[j]=1$ iff $\delta'\in\pred{L}_j^{\mathcal{I}}$ for $0\leq j < m$. 
        \item $\dom:\Delta'\to\Delta$ is such that for $\delta'_1,\delta'_2\in\Delta'$ we have $\dom(\delta'_1)=\dom(\delta'_2)$ iff $(\delta'_1,\delta'_2)\in(\pred{F}^{\mathcal{I}})^+\cup(({\pred{F}}^{-1})^{\mathcal{I}})^+$. 
        \item $\stacked:\Delta'\to\Delta \times \{0,\ldots,2^m-1\}$ is $\stacked(\delta')=(\dom(\delta'),\num(\delta'))$.
        \item $\pred{P}^\Inter = \bigcup_{0 \leq i < 2^m} \pred{P}^{\gamma(\pi_i)} \times \{i\}$ \ for all unary $\pred{P} \in \Preds$,  \label{def-enum:stacked-unary-preds}
\item $\pred{P}^\Inter =  
\{((\delta_1,i),(\delta_2,i)) \mid 0 \,{\leq}\, i \,{<}\, 2^m, (\delta_1,\delta_2) \,{\in}\, \pred{P}^{\gamma(\pi_i)} \}
$ for all binary $\pred{P} \in \Preds$. \defend \label{def-enum:stacked-binary-preds}

    \end{itemize}
\smallskip

We let $\mathcal{I}^{\kstruct}$ be the stacked interpretation of $\kstruct$ and proceed to show that the conditions (C1-5) hold.

By the definition, it is clear that \ref{claim-enum:num} and \ref{claim-enum:dom} are met. It remains to prove that $\stacked$ is a bijection and \ref{claim-enum:F} is satisfied. For this, we will show that if $\stacked(\delta')=(\delta,i)$ then $\delta'$ is the $i$th element of a $\delta$-chain of length $2^m$, i.e., we show by induction on $i$ that $\delta'$ has at most one $\pred{F}$-successor and $\pred{F}$-predecessor satisfying \ref{claim-enum:F}, and exactly $2^m-1-i$ $(\pred{F}^+)$-successors and $i$ $(\pred{F}^+)$-predecessors.

\begin{itemize}[leftmargin=1em,label=-]
    \item $\num(\delta')=i$ has at most one $\pred{F}$-successor and exactly $2^m-1-i$ $(\pred{F}^+)$-successors:
    \begin{description}[leftmargin=2em,font=\normalfont\underline, style=standard]
        \item[Case: $i=2^m-1$.] By the definition of $\num$ and axiom \ref{def-enum:last-has-no-next}, there is no $\delta''\in\Delta'$ such that $(\delta',\delta'')\in\pred{F}^{\mathcal{I}}$, hence $\delta'$ has 0 $\pred{F}$-successors (satisfying \ref{claim-enum:F}) and 0 $(\pred{F}^+)$-successors as desired. 
        \item[Case: $i<2^m-1$.] By the definition of $\num$ and axiom \ref{def-enum:everything-but-last-has-next}, there is exactly one element $\delta''\in\Delta'$ such that $(\delta',\delta'')\in\pred{F}^{\mathcal{I}}$. Thus, $\delta'$ one $\pred{F}$-successor. Moreover, $\dom(\delta'')=\delta$ and from axiom \ref{def-enum:level-counter} and the definition of $\num$, we obtain that $\num(\delta'')=i+1$, thus satisfying \ref{claim-enum:F}. By induction, $\delta''$ has $2^m-1-(i+1)$ $(\pred{F}^+)$-successors and thus $\delta'$ has $2^m-1-i$ $(\pred{F}^+)$-successors as required.
    \end{description}
    \item $\num(\delta')=i$ has at most one $\pred{F}$-predecessor and exactly $i$ $(\pred{F}^+)$-predecessors:
    \begin{description}[font=\normalfont\underline, style=standard]
        \item[Case $0=i$] By the definition of $\num$ and axiom \ref{def-enum:first-has-no-previous}, there is no $\delta''\in\Delta'$ such that $(\delta'',\delta')\in\pred{F}^{\mathcal{I}}$, hence $\delta'$ has 0 $\pred{F}$-predecessors (satisfying \ref{claim-enum:F}) and $(\pred{F}^+)$-predecessors as desired.
        \item[Case $0<i$] By the definition of $\num$ and axiom \ref{def-enum:everything-but-first-has-previous} there is exactly one element $\delta''\in\Delta'$ such that $(\delta'',\delta')\in\pred{F}^{\mathcal{I}}$. Thus, $\delta'$ has one $\pred{F}$-predecessor. Moreover, $\dom(\delta'')=\delta$ and, from axiom \ref{def-enum:level-counter} and the definition of $\num$, we obtain that $\num(\delta'')=i-1$, thus satisfying \ref{claim-enum:F}. Thus, by induction, $\delta''$ has $i-1$ $(\pred{F}^+)$-predecessors and $\delta'$ has $i$ $(\pred{F}^+)$-predecessors as required.
    \end{description}
\end{itemize}

We now show that the function $\stacked$ is bijective, first proving injection and then surjection.
For injection, for the sake of contradiction, suppose that $\stacked(\delta')=(\delta,i)$, $\stacked(\delta'')=(\delta,i)$ and $\delta'_1\neq\delta'_2$. From the argument above, both $\delta'$ and $\delta''$ are each the (single) $i$th element of a different $\delta$-chain of length $2^m$. But, from the definition of $\dom$, we have that if $\dom(\delta')=\dom(\delta'')$ then $(\delta',\delta'')\in(\pred{F}^{\mathcal{I}})^+\cup(({\pred{F}}^{-1})^{\mathcal{I}})^+$, thus they would need to belong to the same $\delta$-chain, thus leading to a contradiction.

For surjection, for the sake of contradiction, suppose that for some $\delta\in\Delta$ and $i\in\{0,\dots,2^m-1\}$ there is no $\delta'\in\Delta'$ such that $\stacked(\delta')=(\delta,i)$. Recall that we set $|\Delta|$ to be the number of disconnected components of $(\pred{F}^{\mathcal{I}})^+$ and by construction the function $\dom$ maps the elements of each component to a distinct domain element in $\Delta$. Thus, there must be some $\delta''\in\Delta'$ such that $\stacked(\delta'')=(\delta,k)$. Then, by the argument above, $\delta''$ is the $k$th element of a $\delta$-chain, in which the $i$th element is some $\delta'$ such that $\stacked(\delta')=(\delta,i)$, thus leading to a contradiction.
\end{proof}

\begin{theorem-repeat}[of \Cref{lemma:model-satisfies-stack-formula-if-isomorphic-to-stacked-model}]
A first-order interpretation $\mathcal{I}$ satisfies $\phi^m_\mathrm{stack}$ if and only if it is isomorphic to a stacked interpretation $\mathcal{I}^{\kstruct}$ of some first-order standpoint structure $\kstruct$ with $2^m$ precisifications.
\end{theorem-repeat}

\begin{proof}
Let $\mathcal{I}= (\Delta', \cdot^\mathcal{I})$ be a first-order interpretation over the signature $\tuple{\Preds \cup \{\pred{F},\pred{L}_0, \ldots, \pred{L}_{m-1}\},\emptyset}$. Let $\mathcal{I}^{\kstruct}$ be the stacked interpretation of $\kstruct=\tuple{\Dom,\Precs,\sigma,\gamma}$ with $\Precs = \{\pi_0,\pi_1,\ldots,\pi_{2^m-1}\}$ and $|\Delta|$ the number of disconnected components of $(\pred{F}^{\mathcal{I}})^+$. 

\smallskip

First, if $\mathcal{I}$ is isomporphic to a stacked interpretation $\mathcal{I}^{\kstruct}$ then it satisfies $\phi^m_\mathrm{stack}$ by Lemma \ref{lemma:stacked-model-satisfies-stack-formula}.
It remains to prove the other direction, i.e., that if $\mathcal{I}$ satisfies $\phi^m_\mathrm{stack}$, then it is isomorphic to $\mathcal{I}^{\kstruct}$. Assume that $\mathcal{I}$ satisfies $\phi^m_\mathrm{stack}$. Thus:
\begin{description}
    \item[\ref{def-enum:stacked-domain}] By \Cref{claim:stacked-bijection} there is a bijective function $\stacked:\Delta'\to\Delta \times \{0,\ldots,2^m-1\}$, therefore $\Delta'$ is isomorphic to $\Delta \times \{0,\ldots,2^m-1\}$ as required.
    \item[\ref{def-enum:stacked-Ls}] From \Cref{claim:stacked-bijection}.\ref{claim-enum:num} we obtain directly that $\pred{L}_j^\mathcal{I} = \{ \delta' \mid binary(i)[j]=1, \stacked(\delta')=(\delta,i) \}$ as required.
    \item[\ref{def-enum:stacked-F}]  From \Cref{claim:stacked-bijection}.\ref{claim-enum:F} we obtain directly that $\pred{F}^\mathcal{I} = \{ (\delta'_1,\delta'_2) \mid \delta'_1,\delta'_2\in \Delta', \stacked(\delta'_1)=(\delta,i),\stacked(\delta'_2)=(\delta,i+1)\}$ as required
    \item[\ref{def-enum:stacked-unary-preds}] Trivial, since we can let $\kstruct$ be such that $\pred{P}^{\gamma(\pi_i)}=\{\delta\mid\delta'\in\pred{P}^\Inter, \stacked(\delta')=(\delta,i)\}$ as required.
    \item[\ref{def-enum:stacked-binary-preds}] Let $\stacked(\delta'_1)=(\delta_1,i)$ and $\stacked(\delta'_2)=(\delta_2,k)$. If $(\delta'_1,\delta'_2)\in\pred{P}^{\mathcal{I}}$, then by \ref{def-enum:binary-at-same-level} we have that $\delta'_1\in\pred{L}_j^{\mathcal{I}}$ iff $\delta'_2\in\pred{L}_j^{\mathcal{I}}$ for all $\pred{L}_j\in\{\pred{L}_0,\dots,\pred{L}_m\}$. Therefore, by \Cref{claim:stacked-bijection}.\ref{claim-enum:num}, $i=k$. Thus, we can let $\pred{P}^{\gamma(\pi_i)}=\{(\delta_1,\delta_2)\mid(\delta'_1,\delta'_2)\in\pred{P}^\Inter, \stacked(\delta'_1)=(\delta_1,i),\stacked(\delta'_2)=(\delta_2,i)\}$ as required. 
\end{description}

\noindent We conclude that $\mathcal{I}$ is isomorphic to a stacked model $\mathcal{I}^{\kstruct}$ of $\kstruct$ as required.

    
\end{proof}

\begin{theorem-repeat}[of \Cref{lemma:permutational-closure-iff-stacked-model}]
Let $\phi$ be a frugal \SmonCtwo sentence over the signature $\tuple{\Preds,\emptyset,\{*\}}$. Let $\kstruct=\tuple{\Dom,\Precs,\sigma,\gamma}$ be a standpoint structure for the signature $\tuple{\Preds \uplus \Preds_{\rigidp},\emptyset,\{*\}}$, with all predicates from $\Preds_{\rigidp}=\{\rigidp_0,\ldots,\rigidp_\ell\}$ rigid, and $|\Precs| = 2^m$
. Then,
$$\modelPR \models \phi \Longleftrightarrow \mathcal{I}_\kstruct \models \mathsf{Trans}(\phi).$$ 
\end{theorem-repeat}

\begin{proof}

We show that, for $\pr_i\in\Precs$, $f\in\mathbb{P}_{\rigidp}$ and an assignment $\varassign$, we have $\modelPR
,(\pr_i,f),\varassign\models\phi$ iff $\mathcal{I}_\kstruct,\varassign' \models \mathsf{tr}(\phi)$ where $\varassign'(z)=(f(\varassign(z)),i)$ for $z\in\{x,y\}$. And, by virtue of Lemma \ref{lemma:permutations-model-the-same}, we can focus on showing that 
$$\modelPR
,(\pr_i,f_{id}),\varassign\models\phi \quad\text{iff}\quad \mathcal{I}_\kstruct,\varassign' \models \mathsf{tr}(\phi)$$ 

where $\varassign'(z)=(\varassign(z),i)$ for all $z\in\{x,y\}$. We show this inductively on the structure of $\phi$



\begin{description}
    \item[Base case: $\phi$ is of the form $\pred{P}(z)$]. First, we have $\modelPR,(\pr_i,f_{id}),\varassign\models\pred{P}(z)$ iff $\varassign(z)\in\pred{P}^{\gamma'((\pr_i,f_{id}))}$ by the semantics. Then $\varassign(z)\in\pred{P}^{\gamma'((\pr_i,f_{id}))}$ iff $\varassign(z)\in\pred{P}^{\gamma(\pr_i)}$ by the construction of the permutational closure (Def. \ref{def:perm-closure}). Finally, $\varassign(z)\in\pred{P}^{\gamma(\pr_i)}$ iff $(\varassign(z),i)\in\pred{P}^{\mathcal{I}}$ from the construction of the stacked model (Def. \ref{def:stacked-model}), and $(\varassign(z),i)\in\pred{P}^{\mathcal{I}}$ iff $\mathcal{I}_\kstruct,\varassign'\models \pred{P}(z)$ from the semantics again.

    \item[Base case: $\phi$ is of the form $\pred{P}(z,z')$]. 
    \item[Base case: $\phi$ is of the form $z=z'$]. 
    \item[Case: $\phi$ is of the form $\neg\psi$].
    \item[Case: $\phi$ is of the form $\psi_1\land\psi_2$]. 
    \item[Case: $\phi$ is of the form $\exists^{\lhd n}  z.\psi$]. 
    Forward direction: 
    \begin{enumerate}
        \item Assume $\modelPR,(\pr_i,f_{id}),\varassign\models\exists^{\lhd n}  z.\psi$, thus $|\{ \delta \mid \modelPR,\!(\pr_i,f_{id}),\!\varassign_{\set{z\mapsto\de}}\models\psi \}|\lhd n$
        \item For the sake of contradiction, assume that $\mathcal{I}_\kstruct,\varassign' \nvDash \exists^{\lhd n} z. (\phi^=_\mathtt{L}(x,y) \wedge  \mathsf{tr}(\psi))$
        \item From 2 and the semantics, 
        $| \{ (\de',k) \mid \mathcal{I}_\kstruct,\varassign'_{\set{z\mapsto(\de',k)}}\models \phi^=_\mathtt{L}(x,y)\}\cap\{ (\de',k) \mid \mathcal{I}_\kstruct,\varassign'_{\set{z\mapsto(\de',k)}}\models \mathsf{tr}(\psi)) \}|\not\!\!\lhd\ n$
        \item From the construction of $\mathcal{I}_\kstruct$ (\ref{def-enum:stacked-Ls}, Def. \ref{def:stacked-model}), and given that $\varassign'(x)=(\varassign(x),i)$ and $\varassign'(y)=(\varassign(y),i)$, then for all $(\de',k)$ such that $\mathcal{I}_\kstruct,\varassign'_{\set{z\mapsto(\de',k)}}\models \phi^=_\mathtt{L}(x,y)$ we have $k=i$
        \item From 3 and 4, $|\{(\de',i)\in\Delta' \mid \mathcal{I}_\kstruct,\varassign'_{\set{z\mapsto(\de',i)}}\models \mathsf{tr}(\psi) \}|\not\!\!\lhd\  n$
        \item From the inductive hypothesis, for each $(\de',i)$ such that $\mathcal{I}_\kstruct,\varassign'_{\set{z\mapsto(\de',i)}}\models \mathsf{tr}(\psi)$ then $\modelPR,(\pr_i,f_{id}),\varassign_{\set{z\mapsto\de'}}\models\psi$
        \item From 5 and 6 we obtain that $|\{ \delta \mid \modelPR,\!(\pr_i,f_{id}),\!\varassign_{\set{z\mapsto\de}}\models\psi \} |\not\!\!\lhd\  n$, thus reaching a contradiction with 1
    \end{enumerate}
    Backwards direction: 
    \begin{enumerate}
        \item Assume $\mathcal{I}_\kstruct,\varassign' \models \exists^{\lhd n} z. (\phi^=_\mathtt{L}(x,y) \wedge  \mathsf{tr}(\psi))$
        \item From 1 and the semantics, 
        $| \{ (\de',k) \mid \mathcal{I}_\kstruct,\varassign'_{\set{z\mapsto(\de',k)}}\models \phi^=_\mathtt{L}(x,y)\}\cap\{ (\de',k)\mid \mathcal{I}_\kstruct,\varassign'_{\set{z\mapsto(\de',k)}}\models \mathsf{tr}(\psi)\}|\lhd n$
     
   \item From the construction of $\mathcal{I}_\kstruct$ (\ref{def-enum:stacked-Ls}, Def. \ref{def:stacked-model}), and given that $\varassign'(x)=(\varassign(x),i)$ and $\varassign'(y)=(\varassign(y),i)$, then for all $(\de',k)$ such that $\mathcal{I}_\kstruct,\varassign'_{\set{z\mapsto(\de',k)}}\models \phi^=_\mathtt{L}(x,y)$ we have $k=i$
        \item From 2 and 3, $|\{(\de',i)\in\Delta' \mid \mathcal{I}_\kstruct,\varassign'_{\set{z\mapsto(\de',i)}}\models \mathsf{tr}(\psi) \}|\lhd n$
        
        \item For the sake of contradiction, assume $|\{ \delta \mid \modelPR,\!(\pr_i,f_{id}),\!\varassign_{\set{z\mapsto\de}}\models\psi \}|\not\!\!\lhd\  n$
        \item From the inductive hypothesis, for each $\delta$ such that $\modelPR,\!(\pr_i,f_{id}),\!\varassign_{\set{z\mapsto\de}}\models\psi$, then $\mathcal{I}_\kstruct,\varassign'_{\set{z\mapsto(\delta,i)}}\models\mathsf{tr}(\psi)$
        \item From 5 and 6 we have that $\{ (\de',i)\mid \mathcal{I}_\kstruct,\varassign'_{\set{z\mapsto(\de',i)}}\models \mathsf{tr}(\psi)\}|\not\!\!\lhd\  n$ thus reaching a contradiction with \textcolor{blue}{4}

    \end{enumerate}

    \item[Case: $\phi$ is of the form $\allstandd\psi$]. 
    For the forward direction. 
    \begin{enumerate}
        \item Assume $\modelPR,(\pr_i,f_{id}),\{z_\mathrm{mf}\mapsto\delta',z_\mathrm{nf}\mapsto\delta''\}\models\allstandd\psi$ 
        \item From 1, $\modelPR,(\pr_i,f_{id}),\{z_\mathrm{mf}\mapsto\delta'\}\models\allstandd\psi$ since only $z_\mathrm{mf}$ could be a free variable in $\psi$
        \item From 2, there is some $(\pr_j,f^{-1})$, $\modelPR,(\pr_j,f^{-1}),\{z_\mathrm{mf}\mapsto\delta'\}\models\psi$ 
        \item From 3 and Lemma \ref{lemma:permutations-model-the-same}, $\modelPR,(\pr_j,f_{id}),\{z_\mathrm{mf}\mapsto f(\delta')\}\models\psi$ 
        \item From 4 and the inductive hypothesis, then $\mathcal{I}_\kstruct,\{z_\mathrm{mf}\mapsto (f(\delta'),j)\} \models \mathsf{tr}(\psi)$ 
        \item Let $\Preds_{\psi}:=\{\pred{E}\in\Preds_{\rigidp} \mid \delta'\in\pred{E}^{\modelPR}\}$.  From the construction of $\modelPR$ (Def.\ref{def:perm-closure}), $\Preds_{\psi}=\{\pred{E}\in\Preds_{\rigidp}\mid f(\delta')\in\pred{E}^{\modelPR}\}$
        \item From 6, the construction of $\mathcal{I}_\kstruct$ (\ref{def-enum:stacked-unary-preds}, Def.\ref{def:stacked-model}) and the fact that $\rigidp\in\Preds_{\rigidp}$ in $\kstruct$ is rigid, $(\delta',k),(f(\delta'),k)\in\rigidp^{\mathcal{I}}$
         iff $\rigidp\in\Preds_{\psi}$ for all $0\leq k <2^{m}$
        \item From 5 and 7, $\mathcal{I}_\kstruct,\{z_\mathrm{mf}\mapsto (f(\delta'),j),z_\mathrm{nf}\mapsto (\delta',i)\} \models \phi^=_\rigidp(x,y) \wedge \mathsf{tr}(\psi)$
        \item From 8, $\mathcal{I}_\kstruct,\{z_\mathrm{nf}\mapsto (\delta',i)\} \models  \exists z_{{\mathrm{mf}}}.\phi^=_\rigidp(x,y) \wedge \mathsf{tr}(\psi)$
        \item From 9, $\mathcal{I}_\kstruct,\{z_\mathrm{mf}\mapsto (\delta',i)\} \models  \forall z_\mathrm{nf}.x\,\dot{=}\,y \to  \exists z_{{\mathrm{mf}}}.\phi^=_\rigidp(x,y) \wedge \mathsf{tr}(\psi)$
    \end{enumerate}
    For the backwards direction 
    \begin{enumerate}
        \item Assume $\mathcal{I}_\kstruct,\{z_\mathrm{mf}\mapsto (\delta',i)\}\models  \forall z_\mathrm{nf}.x\,\dot{=}\,y \to  \exists z_{\mathrm{\mathrm{mf}}}.\phi^=_\rigidp(x,y) \wedge \mathsf{tr}(\psi)$
        \item From 1, $\mathcal{I}_\kstruct,\{z_\mathrm{nf}\mapsto (\delta',i)\}\models \exists z_{{\mathrm{mf}}}.\phi^=_\rigidp(x,y) \wedge \mathsf{tr}(\psi)$
        \item From 2, there is some $(\delta,j)$ such that  $\mathcal{I}_\kstruct,\{z_\mathrm{mf}\!\mapsto\!(\delta,j),z_\mathrm{nf}\!\mapsto\!(\delta',i)\}\models \phi^=_\rigidp(x,y) \wedge \mathsf{tr}(\psi)$
        \item Let $\Preds_{\psi}:=\{\pred{E}\in\Preds_{\rigidp}|(\delta,j)\in\pred{E}^{\mathcal{I}}\}$. From 3, the construction of $\mathcal{I}_\kstruct$ (\ref{def-enum:stacked-unary-preds}, Def.\ref{def:stacked-model}) and the fact that $\rigidp\in\Preds_{\rigidp}$ in $\kstruct$ is rigid, $\delta,\delta'\in\rigidp^{\gamma'}$ for $\rigidp\in\Preds_{\psi}$.
        \item From 3 and the semantics $\mathcal{I}_\kstruct,\{z_\mathrm{mf}\!\mapsto\!(\delta,j)\}\models \mathsf{tr}(\psi)$ since $z_\mathrm{mf}$ is the only free variable in $\mathsf{tr}(\psi)$
        \item From 5 and the inductive hypothesis, $\modelPR,(\pr_j,f_{id}),\{z_\mathrm{mf}\!\mapsto\!\delta\}\models\psi$
        \item From 4 and 6, there is a permutation such that $f(\delta)=\delta'$, thus $\modelPR,(\pr_j,f),\{z_\mathrm{mf}\!\mapsto\!\delta'\}\models\psi$
        \item From 7 and the semantics, $\modelPR,(\pr_i,f_{id}),\{z_\mathrm{mf}\!\mapsto\!\delta'\}\models\allstandd\psi$
    \end{enumerate}

Now, toward the statement of the Lemma, assume $\modelPR \models \phi$. Then, $\modelPR
,(\pr_i,f),\varassign\models\phi$ holds for all $\pr_i\in\Precs$, $f\in\mathbb{P}_{\rigidp}$ and assignments $\varassign$. Then, $\mathcal{I}_\kstruct,\varassign' \models \mathsf{tr}(\phi)$ where $\varassign'(z)=(\varassign(z),i)$, thus $\mathcal{I}_\kstruct,\varassign' \models \mathsf{tr}(\phi)$ for all $\varassign'$ where $\varassign'(x)$ and $\varassign'(y)$ coincide on their second component. Therefore, we obtain $\mathcal{I}_\kstruct \models \mathsf{Trans}(\phi)$. 

For the converse direction, assume $\mathcal{I}_\kstruct \models \mathsf{Trans}(\phi)$. Then, $\mathcal{I}_\kstruct \models \forall x.\forall y.(x\,\dot{=}\,y \to \mathsf{tr}(\phi))$. Thus, for all valuations such that $\varassign'(x)=\varassign'(y)$, $\mathcal{I}_\kstruct,\varassign'\models \mathsf{tr}(\phi)$. Then, $\modelPR,(\pr_i,f_{id}),\varassign\models\phi$ for all $\pr_i\in\Precs$ and all valuations such that $\varassign(x)=\varassign(y)$. And, since $\phi$ has no free variables, this implies that $\modelPR,(\pr_i,f_{id})\models\phi$ for all $\pr_i\in\Precs$. Finally, by \Cref{lemma:permutations-model-the-same}, $\modelPR,(\pr_i,f)\models\phi$ for all $(\pr_i,f)\in\Precs'$ and thus $\modelPR\models\phi$ as desired.
\qedhere
\end{description}
\end{proof}

\begin{theorem-repeat}[of \Cref{lemma:permutational-closure-iff-rigid}]
Let $\kstruct=\tuple{\Dom,\Precs,\sigma,\gamma}$ be a \foss for the signature $\tuple{\Preds \uplus \Preds_{\rigidp},\emptyset,\{*\}}$. Then, all predicates from $\Preds_{\rigidp}=\{\rigidp_0,\ldots,\rigidp_\ell\}$ are rigid iff
$ \mathcal{I}_\kstruct \models \phi^\ell_{rig\rigidp}$. 
\end{theorem-repeat}

\begin{proof}
For the forward direction. If all predicates from $\Preds_{\rigidp}=\{\rigidp_0,\ldots,\rigidp_\ell\}$ are rigid, then by \Cref{def:stacked-model}.\ref{def-enum:stacked-unary-preds}, $\pred{E}^\Inter = \bigcup_{0 \leq i < 2^m} \pred{E}^{\kstruct} \times \{0,\ldots,2^m-1\}$ for all $\pred{E} \in \Preds_{\rigidp}$. Then, consider $\phi^\ell_{rig\rigidp}$. By \Cref{def:stacked-model}.\ref{def-enum:stacked-F}, the antecedent is satisfied for all assignments $x\to(\delta,i)$ and $y\to(\delta,i+1)$ such that $\delta\in\Delta$ and $i<2^m-1$. Then it is clear that  $(\delta,i)\in\pred{E}^\Inter$ iff $(\delta,i+1)\in\pred{E}^\Inter$ for all $\pred{E} \in \Preds_{\rigidp}$, thus satisfying the consequent. Therefore, $ \mathcal{I}_\kstruct \models \phi^\ell_{rig\rigidp}$ as desired.

For the backwards direction. Assume that $ \mathcal{I}_\kstruct \models \phi^\ell_{rig\rigidp}$. Then for all assignments $x\to(\delta,i)$ and $y\to(\delta,i+1)$ such that $\delta\in\Delta$ and $i<2^m-1$, we have  $(\delta,i)\in\pred{E}^\Inter$ iff $(\delta,i+1)\in\pred{E}^\Inter$ for all $\pred{E} \in \Preds_{\rigidp}$. Thus by the construction of $\mathcal{I}_\kstruct$, for all $\pr_i\in\Precs$ with $i<2^m-1$, we have $\delta\in\pred{E}^{\gamma(\pr_i)}$ iff $\delta\in\pred{E}^{\gamma(\pr_{i+1})}$. Therefore we obtain $\pred{E}^{\gamma(\pr_i)}=\pred{E}^{\gamma(\pr_j)}$ for all $i,j\in\{0,\ldots,2^m-1\}$ and thus the predicates in $\Preds_{\rigidp}$ are rigid.
\end{proof}

\section{Converting Monodic Stand\-point \texorpdfstring{$\mathcal{SROIQB}_s$}{SROIQbs} / \texorpdfstring{$\mathcal{SHOIQB}_s$}{SHOIQbs} to Monodic Stand\-point \texorpdfstring{$\mathcal{ALCHOIQB}_s$}{ALCHOIQBs}}\label{sec:SSSROIQ}

This section provides the details on the logics $\mathbb{S}^\mathrm{mon}_{\mathcal{SROIQB}_s}$
and $\mathbb{S}^\mathrm{mon}_{\mathcal{SROIQB}_s}$ mentioned in \Cref{sec:addingrolechain} as well as the claimed back-translation into $\mathbb{S}^\mathrm{mon}_{\mathcal{ALCOIQB}^\mathsf{Self}}$ announced therein as well.

$\mathbb{S}^\mathrm{mon}_{\mathcal{SROIQB}_s}$ adds the feature of monodic stand\-point-aware modelling to $\mathcal{SROIQB}_s$, a DL obtained from the well-known DL $\mathcal{SROIQ}$~\cite{DBLP:conf/kr/HorrocksKS06} by allowing Boolean role expressions over simple roles~\cite{DBLP:conf/jelia/RudolphKH08}. Let us reiterate that the $\mathcal{SROIQ}$ family serves as the logical foundation of popular ontology languages like OWL~2~DL.

We will next introduce the syntax $\mathbb{S}^\mathrm{mon}_{\mathcal{SROIQB}_s}$. Some  specifics closely follow previous work on the more restrictive sentential fragment of standpoint $\mathcal{SROIQ}b_s$ by  \citeauthor{sententialFOLandOWL} (\citeyear{sententialFOLandOWL}), in particular regarding some design decisions concerning how to take into account the global syntactic constraints of $\mathcal{SROIQB}_s$ in a standpointified setting. As opposed to the very minimalistic syntax definition in the main part of the paper, our syntactic definition will also explicitly include some constructors that could be considered as ``syntactic sugar'', but are convenient to have available as ``first-class citizens'' when establishing normal forms.

It will become apparent that $\mathbb{S}^\mathrm{mon}_{\mathcal{SROIQB}_s}$ is an extension of $\mathbb{S}^\mathrm{mon}_{\mathcal{ALCOIQB}^\mathsf{Self}}$.

We again start from a signature $\tuple{\Preds, \Consts, \Stands}$ where $\Preds$ only contains unary and binary predicates. We find it convenient to subdivide $\Preds$ according to the predicate arity into $\Preds_1$ and $\Preds_2$ and refer to them as \emph{concept names} and \emph{role names}, respectively. $\Preds_2$ is subdivided further into \emph{simple role names} $\Preds^\mathrm{s}_2$ and \emph{non-simple role names} $\Preds^\mathrm{ns}_2$, the latter being strictly ordered by some strict partial order $\prec$.

Then, the set $\mathbf{E}^\mathrm{smpl}_\mathrm{rol}$ of \emph{simple role expressions} is defined by
\begin{narrowalign}
	$\rolexpR,\rolexpR' \ebnfeq \rolS \mid \rolS^- \mid  \neg \rolexpR \mid \rolexpR\cap\rolexpR' \mid \rolexpR\cup\rolexpR'$,
\end{narrowalign}
with $\rolS {\,\in\,} \Preds^\mathrm{s}_2$, while the set of (arbitrary) \emph{role expressions} is $\mathbf{E}_\mathrm{rol} = \mathbf{E}^\mathrm{smpl}_\mathrm{rol} \cup \Preds^\mathrm{ns}_2 \cup \{\rolR^- \mid \rolR \in \Preds^\mathrm{ns}_2 \}$. The order $\prec$ is then extended to $\mathbf{E}_\mathrm{rol}$ by making all elements of $\mathbf{E}^\mathrm{smpl}_\mathrm{rol}$ $\prec$-minimal and stipulating $\rolR^- \prec \rolexpR$ iff $\rolR \prec \rolexpR$ for all $\rolR \in \Preds^\mathrm{ns}_2$ and $\rolexpR \in \mathbf{E}_\mathrm{rol}$, and likewise $\rolexpR \prec \rolR^-$ iff $\rolexpR \prec \rolR$.
\emph{Concept expressions} $\mathbf{E}_\mathrm{con}$ are defined via %
	{
		\begin{align*}
			\conC,\conD \ebnfeq \conA \mid \{o\} \mid \top \mid \bot \mid \neg\conC \mid \conC\sqcap\conD \mid \conC\sqcup\conD \mid \forall\rolexpR.\conC \mid \exists\rolexpR.\conC \mid \exists\rolexpR'.\mathit{Self} \mid \atmost{n}\rolexpR'.C \mid \atleast{n}\rolexpR'.C
            \mid \standb{\ste} C \mid \standd{\ste} C,
		\end{align*}
	}%
with \mbox{$\conA\in \Preds_1$}, \mbox{$o\in \Consts$}, \mbox{$\rolexpR \in \mathbf{E}_\mathrm{rol}$}, \mbox{$\rolexpR' \in  \mathbf{E}^\mathrm{smpl}_\mathrm{rol}$}, \mbox{$n\in \mathbb{N}$}, and $\ste\in \StandExps$ (see \Cref{def:FOSLsyntax}).
We note that any concept expression $C$ can be put in negation normal form, denoted $\mathsf{NNF}_\mathrm{con}(C)$, where concept negation $\neg$ only occurs in front of concept names, nominals, or $\mathit{Self}$ concepts.

As before, a \emph{general concept inclusion} (GCI) is an expression of the form $C \sqsubseteq D$ with $C,D \in \mathbf{E}_\mathrm{con}$. A \emph{role chain axiom}, also referred to as \emph{(complex) role inclusion axiom} (RIA), is an expression of one of the following forms:
\begin{align*}
\rolexpR_1 \circ \ldots \circ \rolexpR_n \sqsubseteq \rolR\\
\rolexpR_1 \circ \ldots \circ \rolexpR_n \circ \rolR \sqsubseteq \rolR\\
\rolR \circ \rolexpR_1 \circ \ldots \circ \rolexpR_n \sqsubseteq \rolR\\
\rolR \circ\rolR \sqsubseteq \rolR, \!\!                                   
\end{align*}
where $\rolR \in \Preds^\mathrm{ns}_2$, while $\rolexpR_i \in \mathbf{E}_\mathrm{rol}$ and $\rolexpR_i \prec \rolR$ for all $i\in \{1,\ldots,n\}$. We refer to the set of all GCIs and RIAs as \emph{axioms} and denote it with $\mathbf{Ax}$.                                          Finally the set of $\mathbb{S}^\mathrm{mon}_{\mathcal{SROIQB}_s}$ sentences is defined by (letting $\alpha \in \mathbf{Ax}$)

$$ \phi, \psi ::= \alpha \mid \neg \phi \mid \phi \wedge \psi \mid \phi \vee \psi \mid \standb{\ste} \phi \mid \standd{\ste} \phi.$$

We let $\mathbb{S}^\mathrm{mon}_{\mathcal{SHOIQB}_s}$ denote the set of set of $\mathbb{S}^\mathrm{mon}_{\mathcal{SROIQB}_s}$ sentences wherein every occurring RIA is of the shape $\rolS \sqsubseteq \rolR$ or $\rolR \circ\rolR \sqsubseteq \rolR$.

We obtain the semantics of $\mathbb{S}^\mathrm{mon}_{\mathcal{SROIQB}_s}$ by extending our translation into FOSL to RIAs as follows: any RIA of the form $\rolexpS_1 \circ \ldots \circ \rolexpS_k \sqsubseteq \rolR$ occurring inside the to-be-translated $\mathbb{S}^\mathrm{mon}_{\mathcal{SROIQB}_s}$ sentence $\phi$ is replaced by 
$$
\forall x_0,\ldots,x_k. \Big(\bigwedge_{0 < i \leq k}\mathsf{rtrans}(x_{i-1},x_{i},\rolexpS_i)\Big) \to \rolR(x_0,x_k).
$$

We see that, unless $k=1$, the obtained formula is not in \SmonCtwo anymore, it is still a FOSL formula, thus the translation-based semantics is still well defined. However, unlike for $\mathbb{S}^\mathrm{mon}_{\mathcal{ALCOIQB}^\mathsf{Self}}$, this extended translation cannot serve as an immediate tool to establish decidability, let alone tight complexity bounds.

However, it is possible to harness existing techniques for eliminating RIAs from $\mathcal{SROIQ}$ ontologies \cite{Kazakov08,DemriN05}. This does, however, require a bit of extra care as RIAs may hold or fail to hold precisification-wise.

In the following, we describe a multi-step transformation process that, as a whole, takes a 
$\mathbb{S}^\mathrm{mon}_{\mathcal{SROIQB}_s}$ sentence as an input and returns an equisatisfiable $\mathbb{S}^\mathrm{mon}_{\mathcal{ALCOIQB}^\mathsf{Self}}$ sentence of possibly exponential size. For the subclass of  $\mathbb{S}^\mathrm{mon}_{\mathcal{SHOIQB}_s}$
sentences, the resulting sentence is even guaranteed to be of polynomial size.

\subsection{Negation Normal Form}

Given an arbitrary $\mathbb{S}^\mathrm{mon}_{\mathcal{SROIQB}_s}$ formula, we use the following recursively defined function $\mathsf{NNF}$ to transform it into \emph{negation normal form}:

\begin{align*}
\mathsf{NNF}(\neg (\phi \wedge \psi)) & = \mathsf{NNF}(\neg\phi) \vee \mathsf{NNF}(\neg\psi)\\
\mathsf{NNF}(\phi \wedge \psi) & = \mathsf{NNF}(\phi) \vee \mathsf{NNF}(\psi)\\
\mathsf{NNF}(\neg (\phi \vee \psi)) & = \mathsf{NNF}(\neg\phi) \wedge \mathsf{NNF}(\neg\psi)\\
\mathsf{NNF}(\phi \vee \psi) & = \mathsf{NNF}(\phi) \wedge \mathsf{NNF}(\psi)\\
\mathsf{NNF}(\neg \standb{\ste}\phi) & = \standd{\ste}\mathsf{NNF}(\neg\phi)\\
\mathsf{NNF}( \standb{\ste}\phi) & = \standb{\ste}\mathsf{NNF}(\phi)\\
\mathsf{NNF}(\neg \standd{\ste}\phi) & = \standb{\ste}\mathsf{NNF}(\neg\phi)\\
\mathsf{NNF}( \standd{\ste}\phi) & = \standd{\ste}\mathsf{NNF}(\phi)\\
\mathsf{NNF}(\neg ( C \sqsubseteq D ) ) & = \top \sqsubseteq \exists \rolU.\mathsf{NNF}_\mathrm{con}(C \sqcap \neg D)\\
\mathsf{NNF}( C \sqsubseteq D ) & = \top \sqsubseteq \mathsf{NNF}_\mathrm{con}(\neg C \sqcup D)\\
\mathsf{NNF}(\neg \rho) & = (\top \sqsubseteq {\leqslant}1 \pred{F}_\rho.\top) \wedge 
 ( \{o\} \sqsubseteq \exists \pred{F}_\rho.\exists \rolexpS_1. \ldots \exists \rolexpS_k.\forall \rolR^-.\forall \pred{F}_\rho^-.\neg\{o\}) \qquad \mbox{for any RIA } \rho \ = \ \rolexpS_1 \circ \ldots \circ \rolexpS_k \sqsubseteq \rolR\\
\mathsf{NNF}( \rho ) & = \rho 
\end{align*}
where $\rolU$ stands for the \emph{universal role}, which can be written as $\rolR \cup \neg\rolR$ for an arbitrary $\rolR \in \mathbf{P}_2^\mathrm{s}$, while $o$ is an arbitrary constant and $\pred{F}_\rho$ is a fresh role (freshly introduced for every RIA $\rho$). It is routine to verify, that the transformation is polytime and preserves satisfiability.

\subsection{Separation of RIAs}

For the next step, we assume we are given a $\mathbb{S}^\mathrm{mon}_{\mathcal{SROIQB}_s}$ sentence $\phi$ in negation normal form. Our goal is to obtain an equisatisfiable $\mathbb{S}^\mathrm{mon}_{\mathcal{SROIQB}_s}$ sentence of the form $\phi_\mathrm{RIA} \wedge \phi_\mathrm{rest}$, where $\phi_\mathrm{RIA}$ is a conjunction of RIAs while $\phi_\mathrm{rest}$ is a $\mathbb{S}^\mathrm{mon}_{\mathcal{SROIQB}_s}$ sentence in negation normal form without occurrences of RIAs.
To arrive at this special form, we employ a trick by \citeauthor{sententialFOLandOWL} (\citeyear{sententialFOLandOWL}) that allows to take RIAs out of their boolean and modal contexts but endows them with a ``switch'', so they can be activated or deactivated from inside such contexts (for more details and a more comprehensive discussion, we refer the reader to the provided literature). 

First, for every $\rolR\in\Preds^\mathrm{ns}_2$, we introduce a copy $\underline{\rolR}$. Moreover, we introduce a simple role name $\rolS_\rho$ for each RIA $\rho$ inside $\phi$.
Thereby, the non-simple role names inherit their ordering $\prec$ from $\Preds^\mathrm{ns}_2$ and we also let $\underline{\rolR} \prec \rolR$ for each $\rolR{\,\in\,}\Preds^\mathrm{ns}_2$.

Then, we let $\phi_\mathrm{rest}$ be obtained from $\phi$ by
\begin{itemize}
\item replacing every occurring RIA $\rho$ by the GCI $\top \sqsubseteq \exists \rolS_\rho.\mathsf{Self}$ and
\item replacing every $\exists \rolR$ for non-simple $\rolR$ with $\exists \underline{\rolR}$.
\end{itemize}

We obtain $\phi_\mathrm{RIA}$ as the conjunction over the set of RIAs consisting of
\begin{itemize}
\item the RIA
$\underline{\rolR}  \sqsubseteq  \rolR$
for every $\rolR \in \Preds^\mathrm{ns}_2$ and 
\item for every RIA $\rho$ inside $\phi$, the RIA $BG(\rho)$, with $BG$ defined by
	{ \begin{align*}
			\rolexpR_1 \circ   \ldots  \circ  \rolexpR_n  \sqsubseteq  \rolR                      \mapsto \hspace{1ex} & \rolS_\rho  \circ  \rolexpR_1  \circ   \ldots  \circ  \rolexpR_n   \sqsubseteq  \underline{\rolR}                                      &
			 \rolexpR_1 \circ   \ldots  \circ  \rolexpR_n \circ  \rolR  \sqsubseteq  \rolR  \mapsto \hspace{1ex} & \rolS_\rho   \circ  \rolexpR_1  \circ   \ldots  \circ  \rolexpR_n  \circ  \underline{\rolR}   \sqsubseteq  \underline{\rolR}    \\
			\rolR  \circ  \rolexpR_1 \circ   \ldots  \circ  \rolexpR_n  \sqsubseteq  \rolR  \mapsto \hspace{1ex} & \underline{\rolR}   \circ  \rolexpR_1  \circ   \ldots  \circ  \rolexpR_n  \circ  \rolS_\rho   \sqsubseteq  \underline{\rolR}  &
			\rolR  \circ  \rolR  \sqsubseteq  \rolR                                      \mapsto \hspace{1ex} & \rolS_\rho   \circ   \underline{\rolR}  \circ   {\rolR}   \sqsubseteq \rolR ,\!\!
		\end{align*}}%
\end{itemize}

Again it is easy to see that the translation produces a $\mathbb{S}^\mathrm{mon}_{\mathcal{SROIQB}_s}$ sentence of the announced shape and can be computed in polytime (hence the output sentence is of polynomial size with respect to the input sentence).

What remains to be argued is equisatisfiability of $\phi$ and $\phi_\mathrm{RIA} \wedge \phi_\mathrm{rest}$. To this end we show how, given a model for one formula, a model of the other can be constructed.

We first observe that $\phi_\mathrm{RIA} \wedge \phi_\mathrm{rest} \models \phi$, that is, every model of 
$\phi_\mathrm{RIA} \wedge \phi_\mathrm{rest}$ readily serves as a model of $\phi$.

What remains to be shown is that every model of $\phi$ gives rise to a model of $\phi_\mathrm{RIA} \wedge \phi_\mathrm{rest}$. We argue that any model $\kstruct=\tuple{\Dom, \Precs, \sigma, \gamma}$ can be turned into a model of $\phi_\mathrm{RIA} \wedge \phi_\mathrm{rest}$ by appropriately extending the interpretations to the newly introduced predicates. This is obtained by, for any $\pr \in \Precs$ letting:
\begin{itemize}
\item $\underline{\rolR}^{\gamma(\pr)} = \rolR^{\gamma(\pr)}$ for all $\rolR \in \Preds^\mathrm{ns}_2 $
\item $(\rolS_\rho)^{\gamma(\pr)} = \{(\delta,\delta) \mid \delta \in \Delta\}$ if $\rho$ is satisfied by ${\gamma(\pr)}$ and $(\rolS_\rho)^{\gamma(\pr)} = \emptyset$ otherwise.
\end{itemize}

\subsection{Compiling RIAs into Concept Expressions}

Through the previous step, we obtained a sentence, where all occurring RIAs are conjunctively combined and stipulated to universally hold across all precisifications. Also, by means of the syntactic restrictions, the set of these RIAs form what is commonly known as a ``regular RBox''.
Based on earlier work by \citeauthor{DemriN05} (\citeyear{DemriN05}), \citeauthor{Kazakov08} (\citeyear{Kazakov08}) showed that a regular RBox can be eliminated from a $\SROIQ$ knowledge base by compiling the RIAs into the GCIs. This is essentially done by tending to every occurring concept expression $\forall \rolR. C$ for nonsimple $\rolR$ and, by means of automata-based techniques and auxiliary concept symbols, making sure that $\forall \rolexpS_1. \ldots \forall \rolexpS_k. C$ also holds for all role expression sequences $\rolexpS_1 \ldots \rolexpS_k$ that would, by means of (possibly iterated) RIA applications give rise to an $\rolR$-connection. As this compilation can be executed locally (i.e., axiom-wise) for every occurrence of $\forall \rolR. C$, the technique straightforwardly applies to our setting, where GCIs occur inside boolean formulae and modal operators in $\phi_\mathrm{rest}$. Kazakov's compilation will generally translate a single GCI $\alpha$ into several GCIs $\alpha_1,...\alpha_k$, so inside $\phi_\mathrm{rest}$, we will locally replace the occurrence of $\alpha$ by $\alpha_1 \wedge \ldots \wedge \alpha_k$   
The result obtained, say $\phi_\mathrm{final}$, is a rewritten $\phi_\mathrm{rest}$, while the  $\phi_\mathrm{RIA}$ part can be discarded from the sentence. Then it is clear that the sentence thus obtained is indeed a $\mathbb{S}^\mathrm{mon}_{\mathcal{ALCOIQB}^\mathsf{Self}}$ sentence as intended. Its size is bounded exponentially by the size of $\phi_\mathrm{RIA} \wedge \phi_\mathrm{rest}$ in general while the bound is polynomial if the original sentence was in $\mathbb{S}^\mathrm{mon}_{\mathcal{SHOIQB}_s}$.

\end{document}